\let\@twosidetrue\@twosidefalse
\let\@mparswitchtrue\@mparswitchfalse
\newcommand{\wt}{\mathsf{wt}}
\newcommand{\vote}{\mathsf{vote}}
\newtheorem{new-claim}{Claim}
\newcommand{\cupdot}{\mathbin{\mathaccent\cdot\cup}}
\newcommand{\true}{\mathsf{true}}
\newcommand{\false}{\mathsf{false}}
\newtheorem{pr}{Problem}
\newcommand{\inp}{\textsf{Input: }} 
\newcommand{\ques}{\textsf{Decide: }}
\newcommand{\new}[1]{\textcolor{black}{#1}}
\definecolor{teal}{RGB}{0,128,128} 
\definecolor{MyPurple}{RGB}{111,111,111}
\tikzstyle{uvertex} = [circle, draw=black, fill=black, inner sep=0pt,  minimum size=5pt]
\tikzstyle{vvertex} = [circle, draw=black, fill=white, inner sep=0pt,  minimum size=5pt]
\tikzstyle{dot} = [circle, draw=black, fill=black, inner sep=0pt,  minimum size=2pt]
\tikzstyle{edgelabel} = [circle, fill=white, inner sep=0pt,  minimum size=10pt]
\tikzstyle{edgelabelr} = [rectangle, fill=white, inner sep=0pt,  minimum size=12pt]
\tikzstyle{edgebox} = [inner sep=3pt, draw=black, fill = white,  minimum size=14pt]
\tikzset{snake it/.style={decorate, decoration=snake}}
\begin{document}
\title{Understanding popular matchings via stable matchings}
\author{\'Agnes Cseh\inst{1,2} \and Yuri Faenza\inst{3} \and Telikepalli Kavitha\inst{4}\thanks{Part of this work was done while visiting MPI for Informatics, Saarland Informatics Campus, Germany.} \and Vladlena Powers\inst{3}}
\institute{Hasso-Plattner-Institute, University of Potsdam, Germany \and Institute of Economics, Centre for Economic and Regional Studies, Hungary; \email{cseh.agnes@krtk.hu} \and IEOR, Columbia University, New York, USA;
	\email{\{yf2414, vp2342\}@columbia.edu} \and Tata Institute of Fundamental Research, Mumbai, India; \email{kavitha@tifr.res.in}}
\maketitle
\pagestyle{plain}

\begin{abstract}
An instance of the marriage problem is given by a graph $G = (A \cup B,E)$, together with, for each vertex of $G$, a strict preference order over its neighbors. A matching $M$ of $G$ is {\em popular} in the marriage instance if $M$ does not lose a head-to-head election against any matching where vertices are voters. Every stable matching is a {\em min-size} popular matching; another subclass of popular matchings that always exist and can be easily computed is the set of {\em dominant} matchings. A popular matching $M$ is dominant if $M$ wins the head-to-head election against any larger matching. Thus every dominant matching is a {\em max-size} popular matching and it is known that the set of dominant matchings is the linear image of the set of stable matchings in an auxiliary graph. Results from the literature  seem to suggest that stable and dominant matchings behave, from a complexity theory point of view, in a very similar manner within the class of popular matchings.

The goal of this paper is to show that there are instead differences in the tractability of stable and dominant matchings, and to investigate further their importance for popular matchings. First, we show that it is easy to check if all popular matchings are also stable, however it is co-NP hard to check if all popular matchings are also dominant. Second, we show how some new and recent hardness results on \emph{popular} matching problems can be deduced from the NP-hardness of certain problems on \emph{stable} matchings, also studied in this paper, thus showing that stable matchings can be employed not only to show positive results on popular matching (as is known), but also most negative ones. Problems for which we show new hardness results include finding a min-size (resp.~max-size) popular matching that is not stable (resp.~dominant).  A known result for which we give a new and simple proof is the NP-hardness of finding a popular matching when $G$ is non-bipartite.
\end{abstract}

\section{Introduction}
\label{sec:intro}
Consider a bipartite graph $G = (A \cup B,E)$ on $n$ vertices and $m$ edges where each vertex has a strict ranking of its neighbors. Such a graph \new{supplied with preference lists}, also called a {\em marriage} instance, is an extensively studied model in two-sided matching markets. The problem of computing a 
{\em stable} matching in $G$ is classical. A matching $M$ is stable if there is no {\em blocking edge} with respect to $M$, i.e., an edge whose endpoints prefer each other to their respective assignments in $M$. The notion of stability was introduced by Gale and Shapley~\cite{GS62} in 1962 who showed that stable matchings always exist in $G$ and there is a simple linear time algorithm to find one. 

Stable matchings in an instance \new{with an underlying bipartite graph} $G$ are well-understood~\cite{GI89}, with efficient algorithms~\cite{Fed92,Fed94,ILG87,Rot92,TS98,VV89} to solve several optimization problems that have many applications in economics, computer science, mathematics, and operations research. Here we study a related and more relaxed notion called {\em popularity}. This notion was introduced by G\"ardenfors~\cite{Gar75} in 1975 who
showed that every stable matching is also popular. For any vertex $u$, its preference over neighbors extends naturally to a preference over matchings as follows:
$u$ prefers $M$ to $M'$ either if (i)~$u$ is matched in $M$ and unmatched in $M'$ or (ii)~$u$ is matched in both and prefers
its partner in $M$ to its partner in~$M'$. Let $\psi(M,M')$ be the number of vertices that prefer $M$ to~$M'$.

\begin{definition}
\label{pop-def}
	A matching $M$ is {\em popular} if  $\psi(M,M') \ge  \psi(M',M)$ for every matching $M'$ in $G$, 
	i.e., $\Delta(M,M') \ge 0$ where $\Delta(M,M') = \psi(M,M') -  \psi(M',M)$.
\end{definition}

Hence, in a voting-based context, vertices constitute the set of voters, and each matching in the instance is an alternative. In a head-to-head election between two matchings, each vertex casts a vote for the matching that it prefers and it abstains from voting if its assignment is the same in both matchings. A popular matching, by definition, never loses such a head-to-head election against another matching. Equivalently, a popular matching is a weak {\em Condorcet winner}~\cite{Con85,wiki-condorcet} in the corresponding voting instance. It is easy to show that a stable matching is a min-size popular matching~\cite{HK13b}. Thus larger matchings and more generally, matchings that achieve more social good, are possible by relaxing the constraint of stability to popularity.

Algorithmic questions for popular matchings in bipartite graphs have been well-studied in the last decade~\cite{BIM10,CK18,FKPZ18,HK13b,HK17,Kav14,Kav16}. We currently know efficient algorithms for the following problems in bipartite graphs: (i)~min-size popular matchings, (ii)~max-size popular matchings, and (iii)~finding a popular matching with a given edge. All these algorithms compute either a stable matching or a {\em dominant} matching.

\begin{definition}
	\label{def:dominant}
	A popular matching $M$ is dominant in $G$ if $M$ is more popular than any larger matching in $G$, i.e.,  $\Delta(M,M') > 0$ for any matching
	$M'$ such that $|M'| > |M|$.
\end{definition}  

Thus a dominant matching defeats every larger matching in a head-to-head election, so it immediately follows that a dominant matching is a popular matching of maximum size. The example in Fig.~\ref{fig:example} (from \cite{HK13b}) demonstrates the differences between stable, dominant, and max-size matchings. 

In the graph $G = (A \cup B, E)$ here, we have $A = \left\{ a_1, a_2, a_3 \right\}$ and $B = \left\{ b_1, b_2, b_3 \right\}$. The same preferences are depicted as numbers on the edges and as lists to the left of the drawn graph. Vertex $b_1$ is the top choice for all $a_i$'s, $b_2$ is the second choice for $a_1$ and $a_2$, and $b_3$ is the third choice for $a_1$ alone. The preference lists of the $b_i$ vertices are symmetric. There are two popular matchings here, both of them have the same cardinality: $M_1 = \{(a_1,b_1),(a_2,b_2)\}$ and $M_2 = \{(a_1,b_2),(a_2,b_1)\}$. 

The matching $M_1$ is stable, but not dominant, since it is  {\em not} more popular than the larger matching $M_3 = \{(a_1,b_3),(a_2,b_2),(a_3,b_1)\}$. Observe that in an election between $M_1$ and $M_3$, vertices $a_1, b_1$ vote for $M_1$, vertices $a_3, b_3$ vote for $M_3$, and vertices $a_2, b_2$ are indifferent between $M_1$ and $M_3$; thus $\Delta(M_1,M_3) = 2 - 2 = 0$.
The matching $M_2$ is dominant since $M_2$ is more popular than $M_3$: observe that $\Delta(M_2,M_3) = 4 - 2 = 2$ since $a_1,b_1,a_2,b_2$ prefer $M_2$ to $M_3$ while $a_3, b_3$ prefer $M_3$ to $M_2$. However $M_2$ is not stable, since $(a_1,b_1)$ blocks it.

\begin{figure}[h]
\tikzstyle{vertex} = [circle, draw=black, fill=black, inner sep=0pt,  minimum size=5pt]
\tikzstyle{edgelabel} = [circle, fill=white, inner sep=0pt,  minimum size=15pt]
\centering
	\pgfmathsetmacro{\d}{2}
\begin{minipage}{0.4\textwidth}
\[
\begin{array}{llllllll}
a_1 \ : & & b_1 \succ & b_2 \succ & b_3 \ & \hspace*{0.5in}b_1: a_1 \succ & a_2 \succ & a_3\\
a_2 \ : & & b_1 \succ & b_2 \ &       &   \hspace*{0.5in} b_2: a_1 \succ & a_2\  \\
a_3 \ : & & b_1 \ &     \  &      &     \hspace*{0.5in} b_3: a_1
\end{array}
\]
\end{minipage}\hspace{10mm}\begin{minipage}{0.4\textwidth}
\begin{tikzpicture}[scale=1, transform shape]
	\node[vertex, label=above:$a_1$] (a1) at (0,0) {};
	\node[vertex, label=left:$b_2$] (b1) at ($(a1) + (0, -\d)$) {};
	\node[vertex, label=above:$b_1$] (b2) at (\d,0) {};
	\node[vertex, label=right:$a_2$] (a2) at ($(b2) + (0, -\d)$) {};
	\node[vertex, label=above:$b_3$] (b3) at (-\d,0) {};
	\node[vertex, label=above:$a_3$] (a3) at ($(a1) + (2*\d, 0)$) {};
    
	\draw [very thick, teal, snake it] (a1) -- node[edgelabel, near start] {2} node[edgelabel, near end] {1} (b1);
	\draw [very thick, red, dotted] (a1) -- node[edgelabel, near start] {1} node[edgelabel, near end] {1} (b2);
    \draw [very thick] (a1) -- node[edgelabel, near start] {3} (b3);
	\draw [very thick, red, dotted] (a2) -- node[edgelabel, near start] {2} node[edgelabel, near end] {2} (b1);
	\draw [very thick, teal, snake it] (a2) -- node[edgelabel, near start] {1} node[edgelabel, near end] {2} (b2);
    \draw [very thick] (a3) -- node[edgelabel, near end] {3} (b2);
\end{tikzpicture}
\end{minipage}

\caption{The above instance admits two popular matchings. The stable matching $M_1$ is marked by dotted red edges, while the dominant matching $M_2$ is marked by wavy teal edges.}
\label{fig:example}
\end{figure}

Dominant matchings always exist in a bipartite graph~\cite{HK13b} and a dominant matching can be computed in linear time~\cite{Kav14}. Moreover, dominant matchings are the linear image of stable matchings in an auxiliary instance~\cite{CK18}, hence oftentimes an optimization problem over the set of dominant matchings (e.g. finding one of maximum weight) boils down to solving the same problem on the set of stable matchings. Very recently, the following rather surprising result was shown~\cite{FKPZ18}: it is NP-hard to decide if a bipartite graph admits a popular matching that is {\em neither stable nor dominant.}

\subsection{Our problems, results, and techniques}\label{sec:our}

Everything known so far about stable and dominant matchings seemed to suggest that those classes play somehow symmetric roles in popular matching problems in bipartite graphs: both classes are always non-empty and one is a tractable subclass of {\em min-size popular matchings} while the other is a tractable subclass of {\em max-size popular matchings}. Our first set of results shows that this symmetry is not always the case.

Our starting point is an investigation of the complexity of the following two natural and easy-to-ask questions on popular matchings in a bipartite graph $G$ \new{with strict preference lists}: 

\begin{enumerate}[(1)]
    \item is {\em every} popular matching in $G$ also stable?
    \item is {\em every} popular matching in $G$ also dominant?
\end{enumerate}

Both these questions are trivial to answer in instances that admit popular matchings of more than one size.
Then the answer to both questions is  ``no'' since $\{$dominant matchings$\} \cap \{$stable matchings$\} = \emptyset$ in such graphs as
dominant matchings are max-size popular matchings while stable matchings are min-size popular matchings.
Thus, in this case, a dominant matching is an {\em unstable} popular matching and a stable matching is a {\em non-dominant} popular matching in $G$.
However, when all popular matchings in $G$ have the same size, these questions are non-trivial.

Moreover, it is useful to ask these questions because when there are edge utilities, the problem of finding a max-utility
popular matching is NP-hard in general and also hard to approximate to a factor better than 2~\cite{FKPZ18}; however if every popular matching is stable (similarly, dominant), then the max-utility
popular matching problem can be solved in polynomial time. Thus a ``yes'' answer to either of these questions has applications.
We show the following dichotomy here: though both these questions seem analogous, {\em only one} is easy-to-answer. 

\begin{itemize}
    \item[($\ast$)] {\em There is an $O(m^2)$ algorithm to decide if every popular matching in $G = (A \cup B, E)$ is stable, however it is co-NP complete to decide if every popular matching in $G = (A \cup B, E)$ is dominant.}
\end{itemize}

The first step in proving $(\ast)$ is to show that questions (1) and (2) are equivalent to the following:\footnote{Although similar in spirit, the arguments leading from (1) to ($1'$) and from (2) to ($2'$) are not the same, see Lemma~\ref{non-stab-domn} and Lemma~\ref{non-domn-stable}.}

\begin{itemize}
\item[($1'$)] is every \emph{dominant} matching in $G$ also stable?
    \item[($2'$)]\label{it:twoprime} is every \emph{stable} matching in $G$ also dominant?
\end{itemize}

In Section~\ref{sec:dom-vs-stab}, we give a combinatorial algorithm that solves ($1'$) in time $O(m^2)$. We settle the complexity of (2) and ($2'$) in Section~\ref{sec:stable}, showing that the problem is co-NP hard. We deduce the latter from the hardness of finding a stable matching with a certain augmenting path: a result that is shown in this paper.   

\smallskip

Our hardness reduction is surprisingly simple when compared to those that appeared in recent publications on popular matchings~\cite{FKPZ18,GMSZ18,Kav18}, and establishes a new connection between hardness of problems for stable matchings and hardness of problems for popular matchings. This connection turns out to be very fertile: we exploit it further to show NP-hardness of the following new decision problems for a bipartite graph $G$ (in particular, these hardness results are not implied by the reductions from~\cite{FKPZ18,GMSZ18,Kav18}):

\begin{itemize}
    \item[(3)] is there a stable matching in $G$ that is dominant?
    \item[(4)] is there a max-size popular matching in $G$ that is not dominant?
    \item[(5)] is there a min-size popular matching in $G$ that is not stable?
\end{itemize}

  A general graph (not necessarily bipartite) with strict preference lists is called a {\em roommates} instance. 
  Popular matchings need not always exist in a roommates instance and the popular roommates problem is to decide if a given instance admits one or not. The complexity of the popular roommates problem was open for close to a decade and very recently, two independent proofs of NP-hardness~\cite{FKPZ18,GMSZ18} of this problem were shown. Both these proofs are rather lengthy and technical. 
  
  We use the hardness result for problem~(3) to show a short and simple proof of NP-hardness of the popular roommates problem. 
  Moreover, the hardness result for (5) shows an alternative and much simpler proof of NP-hardness (compared to \cite{FKPZ18}) of the following decision problem in a marriage instance $G = (A \cup B, E)$ \new{equipped with strict preference lists}: is there a popular matching in $G$ that is neither stable nor dominant?\footnote{The reduction in \cite{FKPZ18} also showed that it was NP-hard to decide if $G$ admits a popular matching that is neither a min-size nor a max-size popular matching. Our reduction does not imply this.}

Algorithms for computing min-size/max-size popular matchings and for the popular edge problem compute either stable matchings or dominant matchings.
Dominant matchings in $G$ are stable matchings in a related graph $G'$ (see Section~\ref{prelims}) and so the machinery of stable matchings is used to solve dominant matching problems. Thus all positive results in the domain of popular matchings can be attributed to stable matchings. 
Conversely, all hardness results proved in this paper rely on the fact that it is hard to find stable matchings that have / do not have certain augmenting paths. Hence, properties of stable matchings are also responsible for, and provide a unified approach to, the hardness of many popular matching problems\new{.}

\subsection{Background and related results}
In all problems considered in this paper, all vertices of a graph have strict preference lists over their neighbors. The first algorithmic question studied in the domain of popular matchings was in the one-sided preference lists model in bipartite graphs: here, \new{unlike in our setting,} only one side of the graph consists of agents who have preferences over their neighbors; vertices on the other side are objects with no preferences. Popular matchings need not always exist here and an efficient algorithm was shown in \cite{AIKM07} to determine if one exists or not. 

Popular matchings always exist in bipartite graphs when every vertex has a strict preference list~\cite{Gar75}. However when preference lists are not strict, the problem of deciding if a popular matching exists or not is NP-hard~\cite{BIM10,CHK17}. The first non-trivial algorithms designed for computing popular matchings in bipartite graphs with strict preference lists were the max-size popular matching algorithms~\cite{HK13b,Kav14}. These algorithms compute dominant matchings and the term {\em dominant matching} was formally defined a few years later in \cite{CK18} to solve the ``popular edge'' problem.

As mentioned earlier, it was recently shown that it is NP-hard to decide if a  
marriage instance admits a popular matching that is neither stable nor dominant~\cite{FKPZ18}. This hardness result was shown by a reduction from 1-in-3 SAT and a consequence of this hardness result was the hardness of the 
popular roommates problem. The NP-hardness of popular roommates problem shown in \cite{GMSZ18} was by a reduction from a problem called the 
{\em partitioned vertex cover} problem. 

There are several efficient algorithms to solve the stable roommates problem~\cite{Irv85,Sub94,TS98}. In contrast, the {\em dominant roommates}
problem, i.e., the problem of deciding whether a roommates instance admits a dominant matching or not, is NP-hard~\cite{FKPZ18}. Our NP-hardness proof of the popular roommates problem also proves the hardness of the dominant roommates problem and it is much simpler than the proof in \cite{FKPZ18}.

We remark that constructions from the present paper have been employed in the subsequent work~\cite{FK20} to show \emph{polyhedral} results. In particular,~\cite{FK20} builds on Section~\ref{sec:stable} to show that the dominant matching polytope has an exponential number of facets, and it builds on Section~\ref{sec:min-size} to show that the popular matching polytope has near-exponential extension complexity.

\paragraph{Organization of the paper.} Section~\ref{prelims} contains known facts on popular, stable, and dominant matchings, that will be used throughout the paper. In Section~\ref{sec:dom-vs-stab}, we present an algorithm to decide if $G$ has an unstable popular matching.  Section~\ref{sec:stable} has our co-NP hardness result, which is deduced from the problem of deciding whether stable matchings \emph{with} certain augmenting paths exist, whose hardness is also proved in Section~\ref{sec:stable}. In Section~\ref{sec:new} we show how the problem of deciding whether a stable matching \emph{without} certain augmenting paths exists implies new and known hardness results, in particular, the hardness of deciding if there exists a matching that is both stable and dominant. 

\section{Preliminaries}
\label{prelims}

Let $G = (A \cup B, E)$ be \new{the graph in our input}. We will often refer to vertices in $A$ and $B$ as {\em men} and {\em women}, respectively. We will always assume that each vertex has a strict preference order over his/her neighbors\new{, and the set of these lists is denoted by~$\mathcal{P}$. An instance of our problem consists therefore of $G$ and~$\mathcal{P}$, but we will omit to explicitly mention ${\cal P}$ when it is clear from the context.} We often abbreviate $n=|A\cup B|$ and $m=|E|$. We now sketch four important tools developed for popular matchings in earlier papers. Each of these will be used in later parts of this paper.

\medskip

\noindent{\bf 1) Characterization of popular matchings.} Let $M$ be any matching in $G$. For any edge $(a,b) \notin M$, define $\vote_a(b,M)$ as follows (here $M(a)$ is $a$'s partner in the matching
$M$ and $M(a) = \mathsf{null}$ if $a$ is unmatched in $M$):
\begin{equation*} 
\vote_a(b,M) = \begin{cases} +   & \text{if\ $a$\ prefers\ $b$\ to\ $M(a)$};\\
	                     - &  \text{if\ $a$\ prefers\ $M(a)$\ to\ $b$.}			
\end{cases}
\end{equation*}

We can similarly define $\vote_b(a,M)$. Label every edge $(a,b) \notin M$ by $(\vote_a(b,M),\vote_b(a,M))$.  Thus every edge outside
$M$ has a label in $\{(\pm, \pm)\}$. Note that an edge $e$ is labeled $(+,+)$ if and only if $e$ is a blocking edge to $M$.

Let $G_M$ be the subgraph of $G$ obtained by deleting edges labeled $(-,-)$ from $G$. The following theorem  characterizes popular matchings. This characterization holds in non-bipartite graphs as well.

\begin{theorem}[\cite{HK13b}]
  \label{thm:char-popular}
Matching $M$ is popular in instance $G\new{, \mathcal{P}}$ if and only if $G_M$ does not contain any of the following with respect to~$M$:
	\begin{enumerate}
		\item[(i)] an alternating cycle with a $(+,+)$ edge;
		\item[(ii)] an alternating path with two distinct $(+,+)$ edges;
		\item[(iii)] an alternating path with a $(+,+)$ edge and an unmatched vertex as an endpoint.
	\end{enumerate}
\end{theorem}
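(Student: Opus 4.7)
The plan is to analyze, for an arbitrary matching $N$, the symmetric difference $M \oplus N$, whose connected components are alternating paths and cycles with respect to $M$. Since each vertex casts a single vote that depends only on its $M$- and $N$-partners, $\Delta(M,N)$ decomposes additively as $\sum_C \Delta_C(M,N)$ over these components, where $\Delta_C$ is the net margin contributed by the vertices lying on $C$. Hence $M$ is popular if and only if $\Delta_C(M,N) \ge 0$ for every choice of $N$ and every component $C$.

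The first step is a formula for $\Delta_C$ in terms of edge labels. A vertex $v \in C$ contributes $+1$ if $v$ is matched in $M$ only, $-1$ if matched in $N$ only, and, if matched in both, $-1$ or $+1$ according to whether the label at the $v$-side of its $N$-edge is $+$ or $-$. Classifying $C$ by endpoint type and summing (a boundary $N$-edge incident to a vertex unmatched in $M$ contributes only a single ``internal'' vote), a direct calculation gives
\[
\Delta_C \;=\; \alpha_C \;-\; 2\,p_+(C) \;+\; 2\,p_-(C),
\]
where $p_+(C)$ and $p_-(C)$ count the $(+,+)$- and $(-,-)$-labelled edges of $N \cap C$, and $\alpha_C$ equals $0$ for cycles, $2$ for paths with both endpoints matched in $M$, $1$ for paths with exactly one endpoint unmatched in $M$, and $0$ for paths with both endpoints unmatched in $M$.

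For the necessity direction, suppose $G_M$ contains one of the three forbidden structures $S$, and set $N = M \oplus S$: then $S$ is a component of $M \oplus N$ lying in $G_M$, so $p_-(S)=0$. In case (i), $S$ is a cycle with $p_+(S)\ge 1$, so $\Delta_S \le -2$; in case (ii), $S$ is a path with $p_+(S)\ge 2$, so $\Delta_S \le 2-4 = -2$; in case (iii), $S$ is a path with $p_+(S)\ge 1$ and $\alpha_S \in \{0,1\}$, so $\Delta_S \le 1-2 = -1$. In each case $\Delta(M,N) = \Delta_S < 0$, contradicting popularity.

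For sufficiency, assume $G_M$ contains none of the forbidden structures and let $C$ be a component of $M \oplus N$. The key trick is to split $C$ at its $(-,-)$-edges, reducing the analysis to alternating structures inside $G_M$. For any $(-,-)$-edge $e=(a,b) \in N \cap C$, both $a$ and $b$ are matched in $M$ (by definition of the label) and both vote for $M$, so deleting $e$ yields two sub-paths (or one, if $C$ was a cycle) in which $a$ and $b$ become endpoints matched in $M$ only. Their $+1$ contributions, and those of all other vertices, are preserved, so $\Delta_C$ equals the sum of $\Delta$ over the new pieces. Iterating, we obtain pieces that are alternating paths or cycles inside $G_M$; by hypothesis none is a forbidden structure, so each piece has $p_-=0$ together with $p_+ \le 1$ if $\alpha = 2$ and $p_+ = 0$ if $\alpha \in \{0,1\}$. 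Plugging into the formula yields $\Delta_{\text{piece}} \ge 0$ for every piece, whence $\Delta_C \ge 0$ and $\Delta(M,N) \ge 0$. The main technical obstacle is the endpoint bookkeeping that underlies the $\Delta_C$ formula; once this is settled, the $(-,-)$-splitting step cleanly absorbs the components of $M \oplus N$ that escape $G_M$.
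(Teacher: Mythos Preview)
The paper does not give its own proof of this theorem: it is stated in the Preliminaries (Section~\ref{prelims}) as a known characterization and attributed to~\cite{HK13b}, so there is no in-paper argument to compare against. Your proposal is the standard combinatorial proof (decompose $\Delta(M,N)$ over components of $M\oplus N$, express each component's margin via the count of $(+,+)$ and $(-,-)$ edges, and split at $(-,-)$ edges for the sufficiency direction); this is essentially the argument of~\cite{HK13b}.

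One small point worth tightening: in the necessity direction you write ``set $N=M\oplus S$'' for the forbidden path $S$, but as stated the theorem does not require the endpoints of $S$ to be such that $M\oplus S$ is a matching (an endpoint of $S$ could be matched in $M$ while the incident edge of $S$ is a non-$M$-edge). This is easily repaired---for case~(ii) restrict to the subpath between the two $(+,+)$ edges and then extend each end by its $M$-edge (the endpoints of a $(+,+)$ edge are necessarily matched in $M$, and their $M$-partners cannot already lie on the subpath); for case~(iii) do the analogous extension at the matched end if needed---and once done your $\Delta_C$-formula applies verbatim. Apart from this bookkeeping the argument is complete.
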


The following theorem  characterizes dominant matchings.
\begin{theorem}[\cite{CK18}]
\label{thm:dominant}
A popular matching $M$ is dominant iff there is no $M$-augmenting path in $G_M$.
\end{theorem}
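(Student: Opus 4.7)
The plan is to analyze, for any matching $M'$, the symmetric difference $M\oplus M'$ component by component and relate $\Delta(M,M')$ to the edge labels of Theorem~\ref{thm:char-popular}. For a non-$M$ edge $e=(u,v)$, I will use $w_M(e):=\vote_u(v,M)+\vote_v(u,M)\in\{-2,0,+2\}$ (reading $+$ and $-$ as $\pm 1$). Writing $\Delta_C$ for the restriction of $\Delta(M,M')$ to the vertex set of a component $C$ of $M\oplus M'$, I will rely on two preliminary observations: (a)~since $M\oplus C$ is a matching, popularity of $M$ yields $\Delta_C=\Delta(M,M\oplus C)\geq 0$ for every component $C$, and $\Delta(M,M')=\sum_C\Delta_C$; (b)~for any $M$-augmenting path $Q$ in $M\oplus M'$, a short vote-counting argument (using that an unmatched vertex always ``votes $+$'' for a potential partner, so that the same formula covers internal vertices and path endpoints) gives
\[
\Delta_Q \;=\; -\!\!\sum_{e\in Q\setminus M} w_M(e).
\]

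For the $(\Rightarrow)$ direction I would argue the contrapositive: suppose $P$ is an $M$-augmenting path in $G_M$, set $M'=M\oplus P$, and note $|M'|=|M|+1$. Since $M\oplus M'=P$ is itself a single $M$-augmenting component, fact~(b) gives $\Delta(M,M')=-\sum_{e\in P\setminus M}w_M(e)$. Every non-$M$ edge of $P$ lies in $G_M$ and hence has $w_M(e)\geq 0$, so $\Delta(M,M')\leq 0$ and $M$ is not dominant.

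For the $(\Leftarrow)$ direction, assume $M$ is popular and $G_M$ has no $M$-augmenting path, and let $|M'|>|M|$. At least one component $Q$ of $M\oplus M'$ must then be an $M$-augmenting path, and by~(a) every other component contributes $\Delta_C\geq 0$, so it suffices to show $\Delta_Q>0$. Since $Q$ is not entirely contained in $G_M$, it contains some $(-,-)$ edges; let $t\geq 1$ be their number. I would delete these $t$ edges to split $Q$ into $t+1$ maximal alternating sub-paths, all lying in $G_M$ (with the possibility that the two extremal ones degenerate to single vertices, which is harmless). The two extremal sub-paths inherit the unmatched endpoints of $Q$, so by Theorem~\ref{thm:char-popular}(iii) they contain no $(+,+)$ edge, while each of the $t-1$ interior sub-paths contains at most one $(+,+)$ edge by Theorem~\ref{thm:char-popular}(ii). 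Writing $j$ for the total number of $(+,+)$ edges of $Q$, this yields $j\leq t-1$, and by~(b)
\[
\Delta_Q \;=\; -2(j-t) \;\geq\; 2 \;>\; 0,
\]
so $\Delta(M,M')>0$ and $M$ is dominant.

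The main obstacle I anticipate is the counting argument in the last paragraph: the two extremal sub-paths must be handled separately from the $t-1$ interior ones (to exploit the stronger forbidden configuration~(iii) at unmatched endpoints rather than only~(ii)), and I need to check that every sub-path obtained after deleting the $(-,-)$ edges really is an alternating path in $G_M$ to which Theorem~\ref{thm:char-popular} can be applied, in particular handling the possibly degenerate extremal sub-paths correctly.
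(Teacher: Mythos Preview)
The paper does not give its own proof of Theorem~\ref{thm:dominant}: this is a cited result from \cite{CK18}, stated in the preliminaries section without argument. So there is nothing in the paper to compare against directly.

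That said, your argument is correct and is essentially the standard proof. A couple of minor comments: your parenthetical about the extremal sub-paths possibly degenerating to single vertices is overly cautious --- the end edges of an $M$-augmenting path are incident to unmatched vertices, hence carry label $(+,\pm)$ and are never $(-,-)$, so each extremal sub-path contains at least one edge. Also, your reduction ``it suffices to show $\Delta_Q>0$'' for \emph{one} augmenting component is fine, but in fact the same counting applies to \emph{every} $M$-augmenting component of $M\oplus M'$ (each must contain a $(-,-)$ edge by hypothesis), so you could equally phrase it that way; either route gives $\Delta(M,M')\ge 2$. The application of Theorem~\ref{thm:char-popular}(ii) and (iii) to the interior and extremal sub-paths, respectively, is exactly right and yields the bound $j\le t-1$ you need.
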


\noindent{\bf 2) The graph $G'$.}
Dominant matchings in $G$ are equivalent to stable matchings in a related graph $G'$: this equivalence was first used in \cite{CK18},
later simplified in \cite{FKPZ18}.
The graph $G'$ is the bidirected graph corresponding to $G$. The vertex set of $G'$ is the same as the vertex set $A \cup B$ of $G$
and every edge $(a,b)$ in $G$ is replaced by 2 edges in $G'$: one directed from $a$ to $b$ denoted by $(a^+,b^-)$ and the other directed from $b$ to $a$ denoted by
$(a^-,b^+)$. Let $u \in A\cup B$ and suppose $v_1 \succ v_2 \cdots \succ v_k$ is $u$'s preference order in $G$. Then $u$'s preference order in $G'$ is:
\[ v^-_1 \succ v^-_2 \cdots \succ v^-_k \succ v^+_1 \succ v^+_2 \cdots \succ v^+_k.\]

That is, every vertex prefers outgoing edges to incoming edges and among outgoing edges, it maintains its original preference order and among incoming edges, it maintains again its original preference order. Observe that vertex preferences in $G'$ are expressed on incident edges rather than on neighbors. However it is easy to see that stable matchings in $G'$ are equivalent to stable matchings in the following conventional graph that has 3 vertices $u^+, u^-, d(u)$ corresponding to each vertex $u$ in $G'$. The preference order of $u^+$ is $v^-_1 \succ v^-_2 \cdots \succ v^-_k \succ d(u)$, the preference order of $u^-$ is $d(u) \succ v^+_1 \succ v^+_2 \cdots \succ v^+_k$, and the preference order of $d(u)$ is $u^+ \succ u^-$.

It was shown in \cite{CK18,FKPZ18} that any stable matching $M'$ in $G'$ projects to a dominant matching $M$ in $G$ by setting $(a,b) \in M$ if and only if either $(a^+,b^-)$ or $(a^-,b^+)$ is in $M'$, 
and conversely, any dominant matching in $G$ can be realized as a stable matching in $G'$. 
\medskip

\noindent{\bf 3) The set of matched vertices.} We will be using the {\em Rural Hospitals Theorem} for stable matchings (see e.g.~\cite{GI89}) that states that all stable matchings in $G$ match the same subset of vertices. 
Note that every dominant matching in $G$ matches the same subset of vertices (via the Rural Hospitals Theorem for stable matchings in $G'$). More generally, the following fact is true, where $V(N)$ is the set of vertices matched in a matching $N$.
\begin{lemma}[\cite{Hirakawa-MatchUp15,HK13b}]
\label{lem:all-same}
Let $S$ be a stable, $M$ be a popular, and $D$ be a dominant matching in a marriage instance $G = (A \cup B, E)\new{, \mathcal{P}}$. Then $V(S)\subseteq V(M)\subseteq V(D)$.
\end{lemma}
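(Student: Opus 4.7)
The plan is to prove each of the two containments separately: the first by contradiction via an alternating walk in $S \oplus P$, the second by iterating the popular-augmentation lemma together with Rural Hospitals on the auxiliary graph $G'$.

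For $V(S) \subseteq V(P)$, assume some vertex $u$ is matched in stable $S$ but unmatched in popular $P$. The component of $u$ in the symmetric difference $S \oplus P$ is an alternating path beginning with the $S$-edge $(u, b_1)$, where $b_1 = S(u)$. I would walk along this path, producing a sequence $u = a_1, b_1, a_2, b_2, a_3, b_3, \ldots$ with $(a_i, b_i) \in S$ and $(b_i, a_{i+1}) \in P$, and show by induction that it never terminates. In the inductive step, Theorem~\ref{thm:char-popular}(iii) applied to the $P$-unmatched endpoint $u$ forbids any $(+,+)$ edge along the walk in $G_P$; this forces $b_i$ to be matched in $P$ and to prefer $a_{i+1} = P(b_i)$ to $a_i = S(b_i)$. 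Combined with stability of $S$ applied to the $P$-edge $(b_i, a_{i+1})$---which cannot block $S$---this in turn forces $a_{i+1}$ to be matched in $S$ with $S(a_{i+1})$ strictly preferred to $b_i$, so the walk continues to $b_{i+1}$. Since the vertices visited along this walk lie on a simple path in $S \oplus P$ and are therefore distinct, finiteness of $G$ is contradicted.

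For $V(P) \subseteq V(D)$, the strategy is via iterative augmentation. If $P$ is popular but not dominant, then by Theorem~\ref{thm:dominant} there is a $P$-augmenting path $\rho$ in $G_P$; a standard lemma from the cited literature shows that $P' = P \oplus \rho$ is again popular, strictly larger than $P$, and satisfies $V(P) \subseteq V(P')$. Iterating yields a sequence of popular matchings of strictly increasing sizes, each containing $V(P)$ in its matched set, and terminating at some dominant matching $\tilde P$ (recognized as dominant by Theorem~\ref{thm:dominant} when no augmenting path remains in $G_{\tilde P}$). Since all dominant matchings share the same matched set---via Rural Hospitals applied to stable matchings in the auxiliary graph $G'$ of Section~\ref{prelims}, transported back through the projection between dominant matchings in $G$ and stable matchings in $G'$---we conclude $V(P) \subseteq V(\tilde P) = V(D)$.

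The main obstacle is the careful labeled-edge bookkeeping in the first part: at each step of the induction one must simultaneously invoke stability of $S$ on the latest $P$-edge and popularity of $P$ (through the absence of $(+,+)$ edges on paths emanating from the $P$-unmatched vertex $u$), and check that these combine to rule out every possible way the walk could end. The second part is conceptually cleaner but defers its difficulty to the popular-augmentation lemma, which is standard.
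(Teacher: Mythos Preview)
The paper does not give its own proof of Lemma~\ref{lem:all-same}; it is simply quoted from \cite{Hirakawa-MatchUp15,HK13b} as background. So there is nothing to compare against, and your task reduces to getting the argument right.

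Your argument for $V(S)\subseteq V(P)$ is correct: the walk in $S\oplus P$ lives in $G_P$ (each $S$-edge $(a_i,b_i)$ carries a $+$ vote from $a_i$), Theorem~\ref{thm:char-popular}(iii) forbids any $(+,+)$ label along it, and stability of $S$ forbids termination on the $P$-side; together these force an infinite simple path.

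Your argument for $V(P)\subseteq V(D)$ has a genuine gap. The ``popular-augmentation lemma'' you invoke---that $P\oplus\rho$ is popular whenever $\rho$ is an augmenting path in $G_P$---is false, and the paper's own running example already breaks it. Take $M_1=\{(a_1,b_1),(a_2,b_2)\}$ from Fig.~\ref{fig:example}: it is popular (stable) but not dominant, and the unique $M_1$-augmenting path in $G_{M_1}$ is $\rho=\langle b_3,a_1,b_1,a_3\rangle$. Then $M_1\oplus\rho=\{(a_1,b_3),(a_2,b_2),(a_3,b_1)\}=M_3$, which the paper explicitly notes is \emph{not} popular ($M_2$ defeats it $4$ to $2$). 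So iterated augmentation does not stay inside the class of popular matchings, and your chain from $P$ to a dominant $\tilde P$ collapses at the first step.

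A route that does work, using only machinery already in the paper: take a witness $\vec\alpha$ of $P$, form the decomposition $P=M_0\cupdot M_1$ of Section~\ref{sec:dom-vs-stab}, and let $D_0$ be the dominant matching of $G_0$ produced there. The restriction $M_0$ is stable in $G_0$ (its witness is $\vec 0$), so your own Part~1 applied inside $G_0$ gives $V(M_0)\subseteq V(D_0)$; hence $V(P)\subseteq V(M^*)$ for the dominant matching $M^*=M_1\cup D_0$ of Claim~\ref{clm:dominant-matching}. Rural Hospitals in $G'$ then gives $V(M^*)=V(D)$.
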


Thus, in particular, in instances where stable matchings have the same size as dominant matchings, all popular matchings match the same subset of vertices. 

\medskip

\noindent{\bf 4) Witness of a popular matching.}
Let $\tilde{G}$ be the graph $G$ augmented with self-loops. That is, we assume each vertex is its own last choice neighbor. So we can henceforth
regard any matching $M$ in $G$ as a perfect matching $\tilde{M}$ in $\tilde{G}$ by including self-loops at all vertices left unmatched in $M$.
The following edge weight function $\wt_M$ in $\tilde{G}$ will be useful to us. For any edge $(a,b)$ in $G$, define:
\begin{equation*}
    \wt_M(a,b) = \begin{cases}  2 & \text{if\ $(a,b)$\ is\ labeled\ $(+,+)$}\\
                               -2 & \text{if\ $(a,b)$\ is\ labeled\ $(-,-)$}\\
                                0 & \text{otherwise.}
    \end{cases}                            
\end{equation*}

We need to define $\wt_M$ for self-loops as well: let $\wt_M(u,u) = 0$ if $u$ is matched to itself in $\tilde{M}$, else $\wt_M(u,u) = -1$.
For any matching $N$ in $G$, it is easy to see that $\wt_M(\tilde{N}) = \Delta(N,M)$
and so $M$ is popular in $G$ if and only if every perfect matching in $\tilde{G}$ has weight at most 0. Theorem~\ref{thm:witness} follows from
LP-duality and the fact that $G$ is a bipartite graph.
\begin{theorem}[\cite{KMN11,Kav16}]
  \label{thm:witness}
  A matching $M$ in $G = (A \cup B, E), \cal P$ is popular if and only if there exists a vector $\vec{\alpha} \in \{0, \pm 1\}^n$ such that
  $\sum_{u \in A \cup B} \alpha_u = 0$,
  \[ \alpha_a + \alpha_b \ \ \ge \ \ \wt_M(a,b)\ \ \ \forall\, (a,b)\in E\ \ \ \ \ \ \text{and}\ \ \ \ \ \ \alpha_u \ \ \ge\ \ \wt_M(u,u)\ \ \ \forall\, u \in A \cup B.\] 
\end{theorem}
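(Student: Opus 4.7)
The plan is to deduce the theorem from LP duality applied to the maximum-weight perfect-matching problem on $\tilde G$ with edge weights $\wt_M$. I would proceed in three steps.

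First, I would translate popularity into an LP condition. By construction of $\wt_M$ and $\tilde G$, for every matching $N$ of $G$ one has $\wt_M(\tilde N) = \Delta(N, M)$, so $M$ is popular iff every perfect matching of $\tilde G$ has non-positive $\wt_M$-weight. Since $\tilde M$ itself has weight $0$ (the edges of $M$ are unlabeled and so carry weight $0$, and the self-loops at vertices unmatched in $M$ also carry weight $0$), this is equivalent to asking that the maximum $\wt_M$-weight of a perfect matching in $\tilde G$ equals $0$.

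Second, I would take the linear relaxation of max-weight perfect matching on $\tilde G$: variables $x_e$ for each edge (including self-loops), constraints $\sum_{e \ni u} x_e = 1$ at each $u \in A \cup B$ with each self-loop contributing coefficient $1$, and $x \ge 0$. The constraint matrix is the vertex-edge incidence matrix of the bipartite graph $G$ augmented by one unit column per self-loop, hence totally unimodular; both primal and dual therefore admit integer optima. The dual is precisely the system in the theorem, $\min \sum_u \alpha_u$ subject to $\alpha_a + \alpha_b \ge \wt_M(a,b)$ and $\alpha_u \ge \wt_M(u,u)$. By strong LP duality, $M$ is popular iff this dual has a feasible point with $\sum_u \alpha_u \le 0$; since $\tilde M$ achieves primal value $0$ whenever $M$ is popular, any integer dual optimum $\alpha^*$ then satisfies $\sum_u \alpha^*_u = 0$.

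Third, and this I anticipate to be the main obstacle, I would upgrade integer optimality to $\alpha \in \{0,\pm 1\}^n$. The lever is that $|\wt_M| \le 2$ on edges and $|\wt_M| \le 1$ on self-loops. Given an integer dual optimum $\alpha^*$, I would argue that any coordinate with $|\alpha^*_u| \ge 2$ can be pulled toward $0$ by one unit while simultaneously absorbing the deficit at a neighbor along a tight edge; the only edge-constraints attaining the extreme value $2$ are those of $(+,+)$-labeled edges, which forces a clean local rebalancing that preserves both feasibility and the objective. A symmetric argument handles entries $\le -2$, and iterating yields a dual optimum in $\{-1,0,1\}^n$ with $\sum_u \alpha_u = 0$. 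The converse implication is immediate: given such an $\alpha$, summing the dual constraints along any perfect matching $\tilde N$ of $\tilde G$ gives $\Delta(N,M) = \wt_M(\tilde N) \le \sum_u \alpha_u = 0$, so $M$ is popular.
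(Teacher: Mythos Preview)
The paper does not actually prove this theorem: it is quoted from \cite{KMN11,Kav16} with only the one-line remark that it ``follows from LP-duality and the fact that $G$ is a bipartite graph.'' Your proposal is precisely an expansion of that hint, and your steps~1 and~2 are correct and standard.

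Your step~3, which you rightly flag as the main obstacle, is the only place where real work is needed and where your sketch is underspecified. Decreasing some $\alpha^*_a$ with $\alpha^*_a \ge 2$ by one unit and compensating at a neighbor $b$ along a tight edge may violate other constraints incident to $b$, so a single local move is not enough; one has to propagate along an alternating structure and argue termination, or alternatively argue directly that among integral dual optima there is one with all coordinates in $\{-1,0,1\}$ using the specific range of $\wt_M$. This is exactly what the cited references handle. Since the present paper offers no argument beyond the citation, there is nothing further to compare your approach against.
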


For any popular matching $M$, a vector $\vec{\alpha}$ as given in Theorem~\ref{thm:witness} will be called $M$'s {\em witness} or {\em dual certificate}. A popular matching may have several witnesses. Any stable matching has $\vec{0}$ as a witness.

Call an edge $e$ {\em popular} if there is a popular matching in $G$ that contains $e$. For any popular edge $(a,b)$, it was shown in~\cite{FK20} (using complementary slackness) that $\alpha_a + \alpha_b = \wt_M(a,b)$. This will be a useful fact for us.
Another useful fact from \cite{FK20} (again by complementary slackness) is that if vertex $u$ is left unmatched in some popular matching then $\alpha_u = \wt_M(u,u)$ for every popular matching $M$; thus
$\alpha_u = 0$ if $u$ is left unmatched in $M$, else $\alpha_u = -1$.

\medskip

We now illustrate each of the above four tools on our example instance $G\new{, \mathcal{P}}$ from Fig.~\ref{fig:example}. 

\begin{enumerate}[1)]
    \item Recall the matching $M_1 = \{(a_1,b_1),(a_2,b_2)\}$. We will test if $M_1$ is popular/dominant using Theorems~\ref{thm:char-popular} and~\ref{thm:dominant}. First we label each edge $(a,b)$ not in $M_1$ by $(\vote_a(b,M_1(a)),\vote_b(a,M_1(b)))$, see Fig.\ref{fig:edge_labels}. Since no edge is labeled $(-,-)$, the subgraph $G_{M_1} = G$. Observe that $G_{M_1}$ has no forbidden alternating path/cycle as given in Theorem~\ref{thm:char-popular}. Thus $M_1$ is popular. However there is an $M_1$-augmenting path $\langle b_3, a_1, b_1, a_3\rangle$ in $G_{M_1}$,  so $M_1$ is not dominant (by Theorem~\ref{thm:dominant}).
\begin{figure}[h]
\tikzstyle{vertex} = [circle, draw=black, fill=black, inner sep=0pt,  minimum size=5pt]
\tikzstyle{edgelabel} = [circle, fill=white, inner sep=0pt,  minimum size=15pt]
\centering
	\pgfmathsetmacro{\d}{2}
\begin{tikzpicture}[scale=1, transform shape]
	\node[vertex, label=above:$a_1$] (a1) at (0,0) {};
	\node[vertex, label=left:$b_2$] (b2) at ($(a1) + (0, -\d)$) {};
	\node[vertex, label=above:$b_1$] (b1) at (\d,0) {};
	\node[vertex, label=right:$a_2$] (a2) at ($(b1) + (0, -\d)$) {};
	\node[vertex, label=above:$b_3$] (b3) at (-\d,0) {};
	\node[vertex, label=above:$a_3$] (a3) at ($(a1) + (2*\d, 0)$) {};
    
	\draw [very thick] (a1) -- node[edgelabel, near start] {2} node[edgelabel, above, rotate=90] {$(+,-)$} node[edgelabel, near end] {1} (b2);
	\draw [very thick, red, dotted] (a1) -- node[edgelabel, near start] {1} node[edgelabel, near end] {1} (b1);
    \draw [very thick] (a1) -- node[edgelabel, near start]  {3} node[edgelabel, above] {$(+,-)$} (b3);
	\draw [very thick, red, dotted] (a2) -- node[edgelabel, near start] {2} node[edgelabel, near end] {2} (b2);
	\draw [very thick] (a2) -- node[edgelabel, near start] {1} node[edgelabel, below, rotate=90] {$(+,-)$} node[edgelabel, near end] {2} (b1);
    \draw [very thick] (a3) -- node[edgelabel, near end] {3} node[edgelabel, above] {$(-,+)$}(b1);
\end{tikzpicture}
\caption{Edge labels with respect to $M_1 =\left\{(a_1,b_1),(a_2,b_2) \right\}$.}
\label{fig:edge_labels}
\end{figure}
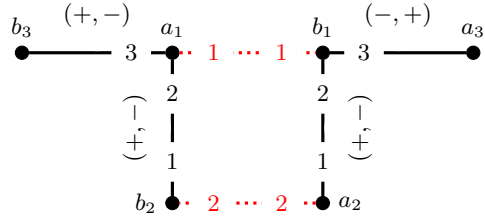

\item The bidirected graph $G'$ corresponding to $G$ is given in Fig.~\ref{fig:G'}. Observe that in \new{the transformed set of} preference lists, each vertex ranks its outgoing edges in their original order, followed by the incoming copies of the same edges in their original order. The only stable matching in $G'$ is $\{(a_1^+,b_2^-), (a_2^-,b_1^+)\}$ (marked by dashed teal edges in Fig.~\ref{fig:G'}) and it projects to 
$M_2 = \{(a_1,b_2),(a_2,b_1)\}$---this is the only dominant matching in~$G$.

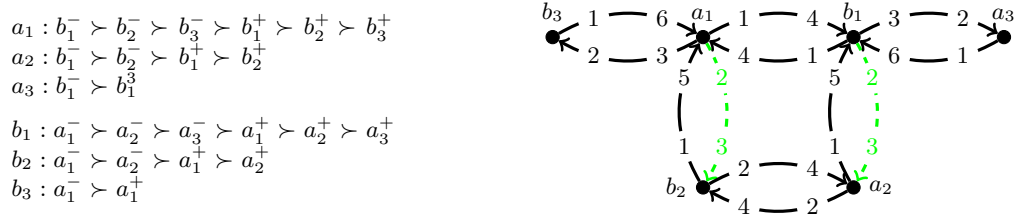
\begin{figure}[h]
\tikzstyle{vertex} = [circle, draw=black, fill=black, inner sep=0pt,  minimum size=5pt]
\tikzstyle{edgelabel} = [circle, fill=white, inner sep=0pt,  minimum size=12pt]
\centering
	\pgfmathsetmacro{\d}{2}
	\begin{minipage}{0.55\textwidth}
\[
\begin{array}{llllll}
a_1:  b_1^- \succ & b_2^- \succ & b_3^- \succ & b_1^+ \succ & b_2^+ \succ & b_3^+ \\
a_2:  b_1^- \succ & b_2^- \succ\ &   b_1^+ \succ & b_2^+    \ && \\ \vspace{2mm} 
a_3: b_1^- \succ &  b_1^+   &&&& \\  
b_1: a_1^- \succ & a_2^- \succ & a_3^- \succ & a_1^+ \succ & a_2^+ \succ & a_3^+\\
b_2: a_1^- \succ & a_2^- \succ &  a_1^+ \succ & a_2^+ && \\
b_3: a_1^- \succ & a_1^+ &&&&
\end{array}
\]
\end{minipage}\hspace{2mm}\begin{minipage}{0.4\textwidth}
\begin{tikzpicture}[scale=1.0, transform shape]
	\node[vertex, label=above:$a_1$] (a1) at (0,0) {};
	\node[vertex, label=left:$b_2$] (b2) at ($(a1) + (0, -\d)$) {};
	\node[vertex, label=above:$b_1$] (b1) at (\d,0) {};
	\node[vertex, label=right:$a_2$] (a2) at ($(b1) + (0, -\d)$) {};
	\node[vertex, label=above:$b_3$] (b3) at (-\d,0) {};
	\node[vertex, label=above:$a_3$] (a3) at ($(a1) + (2*\d, 0)$) {};
    
	\draw [->, very thick, teal, dashed] (a1)  to [out=-60,in=60] node[edgelabel, near start] {2} node[edgelabel, near end] {3} (b2);
	\draw [->, very thick, teal, dashed] (b1)  to [out=-60,in=60] node[edgelabel, near start] {2} node[edgelabel, near end] {3} (a2);
	\draw [<-, very thick] (a1)  to [out=-120,in=120] node[edgelabel, near start] {5} node[edgelabel, near end] {1} (b2);
	\draw [<-, very thick] (b1)  to [out=-120,in=120] node[edgelabel, near start] {5} node[edgelabel, near end] {1} (a2);
	\draw [->, very thick] (a1)  to [out=30,in=150] node[edgelabel, near start] {1} node[edgelabel, near end] {4} (b1);
	\draw [->, very thick] (b1)  to [out=30,in=150] node[edgelabel, near start] {3} node[edgelabel, near end] {2} (a3);
	\draw [->, very thick] (b3)  to [out=30,in=150] node[edgelabel, near start] {1} node[edgelabel, near end] {6} (a1);
	\draw [->, very thick] (b2)  to [out=30,in=150] node[edgelabel, near start] {2} node[edgelabel, near end] {4} (a2);
	\draw [->, very thick] (a1)  to [out=-150,in=-30] node[edgelabel, near start] {3} node[edgelabel, near end] {2} (b3);
	\draw [->, very thick] (b1)  to [out=-150,in=-30] node[edgelabel, near start] {1} node[edgelabel, near end] {4} (a1);
	\draw [->, very thick] (a3)  to [out=-150,in=-30] node[edgelabel, near start] {1} node[edgelabel, near end] {6} (b1);
	\draw [->, very thick] (a2)  to [out=-150,in=-30] node[edgelabel, near start] {2} node[edgelabel, near end] {4} (b2);
	
\end{tikzpicture}
\end{minipage}
\caption{The bidirected graph $G'$ corresponding to $G$, and the \new{transformed} lists.}
\label{fig:G'}
\end{figure}

\item The set of matched vertices is the same for all popular matchings in the above instance: $V(M_1) = V(M_2) = \{(a_1,a_2,b_1,b_2\}$. 
\item We will construct a witness $\vec{\alpha}$ for $M_2$, see Fig.~\ref{fig:witness}. So $\vec{\alpha} \in \{0, \pm 1\}^6$. Since $M_2$ leaves $a_3$ and $b_3$ unmatched, it has to be the case that $\alpha_{a_3} = \alpha_{b_3} = 0$. We also know that $\alpha_{a_1} = \alpha_{b_1} = 1$ because $\alpha_{a_1} + \alpha_{b_1} \ge \wt_{M_2}(a_1,b_1) = 2$. Since $\sum_{u \in A \cup B} \alpha_u = 0$, the only possibility for the remaining two vertices is $\alpha_{a_2} = \alpha_{b_2} = -1$. We have $\alpha_u \ge \wt_{M_2}(u,u)$ for all vertices $u$ since
$\wt_{M_2}(a_3,a_3) = \wt_{M_2}(b_3,b_3) = 0$ and $\wt_{M_2}(v,v) = -1$ for other vertices $v$.

Observe that $\alpha_{a_1} + \alpha_{b_3} = 1 > 0 = \wt_{M_2}(a_1,b_3)$ and
$\alpha_{a_3} + \alpha_{b_1} = 1 > 0 = \wt_{M_2}(a_3,b_1)$. For the remaining 4 edges, the corresponding constraint is tight. That is,
$\alpha_a + \alpha_b = \wt_{M_2}(a,b)$ for $a \in \{a_1,a_2\}$ and $b \in \{b_1,b_2\}$.

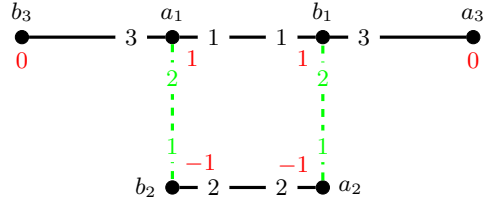
\begin{figure}[h]
\tikzstyle{vertex} = [circle, draw=black, fill=black, inner sep=0pt,  minimum size=5pt]
\tikzstyle{edgelabel} = [circle, fill=white, inner sep=0pt,  minimum size=12pt]
\centering
	\pgfmathsetmacro{\d}{2}
\begin{tikzpicture}[scale=1, transform shape]
	\node[vertex, label=above:$a_1$, label=-45:{\color{red}$1$}] (a1) at (0,0) {};
	\node[vertex, label=left:$b_2$,label=60:{\color{red}$-1$}] (b2) at ($(a1) + (0, -\d)$) {};
	\node[vertex, label=above:$b_1$,label=-135:{\color{red}$1$}] (b1) at (\d,0) {};
	\node[vertex, label=right:$a_2$,label=135:{\color{red}$-1$}] (a2) at ($(b1) + (0, -\d)$) {};
	\node[vertex, label=above:$b_3$, label=below:{\color{red}$0$}] (b3) at (-\d,0) {};
	\node[vertex, label=above:$a_3$, label=below:{\color{red}$0$}] (a3) at ($(a1) + (2*\d, 0)$) {};
    
	\draw [very thick, teal, dashed] (a1) -- node[edgelabel, near start] {2} node[edgelabel, near end] {1} (b2);
	\draw [very thick] (a1) -- node[edgelabel, near start] {1} node[edgelabel, near end] {1} (b1);
    \draw [very thick] (a1) -- node[edgelabel, near start] {3} (b3);
	\draw [very thick] (a2) -- node[edgelabel, near start] {2} node[edgelabel, near end] {2} (b2);
	\draw [very thick, teal, dashed] (a2) -- node[edgelabel, near start] {1} node[edgelabel, near end] {2} (b1);
    \draw [very thick] (a3) -- node[edgelabel, near end] {3} (b1);
\end{tikzpicture}
\caption{A witness $\vec{\alpha}$ constructed for $M_2 =\left\{(a_1,b_2),(a_2,b_1) \right\}$ is denoted by the red labels next to the vertices.}
\label{fig:witness}
\end{figure}
\end{enumerate}

\section{Finding an unstable popular matching}
\label{sec:dom-vs-stab}
We are given $G = (A \cup B,E)$ with strict preference lists and the problem is to decide if every popular matching in $G$ is also stable, i.e., if $\{$popular matchings$\} = \{$stable matchings$\}$ or not in $G$. In this section we show an efficient algorithm to answer this question.

\begin{pr}
    \inp A bipartite graph $G = (A \cup B,E)$ with strict preference lists.\\
	\ques 
	If there is an unstable popular matching in $G$.
\end{pr}

Let $G$ admit an unstable popular matching~$M$ and let $\vec{\alpha}\in\{0, \pm 1\}^n$ be $M$'s witness. Let $A_0$ be the set of vertices $a\in A$ with $\alpha_a = 0$ and let $B_0$ be the set of vertices $b\in B$ with $\alpha_b = 0$.

\begin{figure}[ht]
	\vspace*{-5mm}
\centerline{\resizebox{0.6\textwidth}{!}{\input{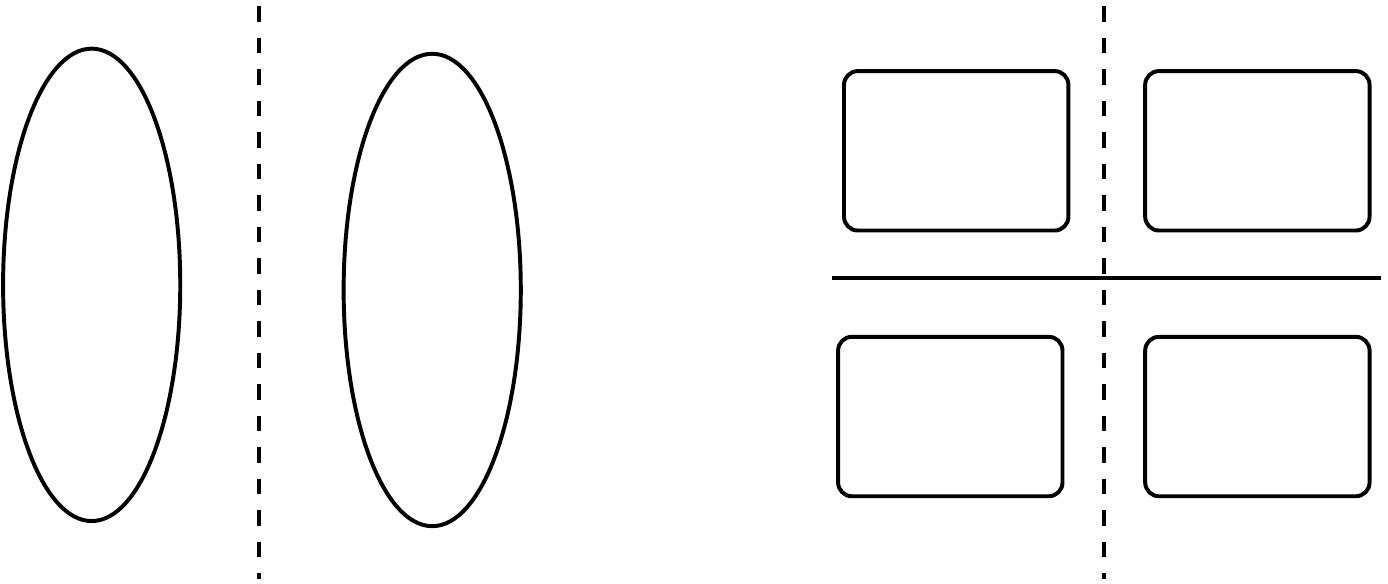_t}}}
\caption{$M_0$ is the matching $M$ restricted to vertices in $A_0\cup B_0$ and $M_1$ is the matching $M$ restricted to remaining vertices.}
\label{fig:first}
\vspace*{-5mm}
\end{figure}

Let $M_0$ be the set of edges $(a,b) \in M$ such that $\alpha_a = \alpha_b = 0$. Let $M_1$ be the set of edges $(a,b) \in M$ such that $\alpha_a,\alpha_b \in \{\pm 1\}$. Note that $M = M_0 \cupdot M_1$, since the parities of $\alpha_a$ and $\alpha_b$ have to be the same for any popular edge due to the tightness of the constraint 
$\alpha_a + \alpha_b = \wt_M(a,b) = 0$. 

The construction can be followed in Fig.~\ref{fig:first}. $A \setminus A_0$ has been further split into $A_1 \cup A_{-1}$: $A_1$ is the set of those vertices $a$ with $\alpha_a = 1$ and $A_{-1}$ is the set of those vertices $a$ with $\alpha_a = -1$; similarly, $B \setminus B_0$ has been further split into $B_1 \cup B_{-1}$: $B_1$ is the set of those vertices $b$ with $\alpha_b = 1$ and $B_{-1}$ is the set of those vertices $b$ with $\alpha_b = -1$. We have $M_1 \subseteq (A_1\times B_{-1})\cup(A_{-1}\times B_1)$ since for any edge $(a,b)$ in $M$, we have $\alpha_a + \alpha_b = \wt_M(a,b) = 0$.

Now we run a transformation $M_0 \leadsto D$ as given in~\cite{CK18}: let $G_0$ be the graph $G$ restricted to $A_0 \cup B_0$. Run Gale-Shapley algorithm in the graph $G'_0$ with the starting matching $M'_0 = \{(u^+,v^-): (u,v) \in M_0 \ \text{and}\ u\in A_0, v\in B_0\}$, where $G'_0$ is the graph obtained from $G_0$ as described in Section~\ref{prelims}.
That is, instead of starting with the empty matching, we start with the matching $M'_0$ in $G'_0$;
unmatched men in $A_0$ propose in decreasing order of preference and whenever a woman receives a proposal from a neighbor that she prefers to her current partner (her preferences as given in $G'_0$), she rejects her current partner and accepts this proposal. This results in a stable matching in $G'_0$, equivalently, a dominant matching $D$ in $G_0$, see Section~\ref{prelims}. 
It was moreover shown in~\cite{CK18} that $M^* = M_1 \cup D$ is a dominant matching in $G$. 
 We include a new and much simpler proof of this fact below. 

\begin{new-claim}
\label{clm:dominant-matching}
The matching $M^* = M_1 \cup D$ is dominant in $G$.
\end{new-claim}
\begin{proof}
The dominant matching $D$ was obtained as the linear image of a stable matching (call it $D'$) in $G'_0$.
Let $A'_1$ be the set of $a \in A_0$ such that $(a^+,b^-) \in D'$ for some $b \in B_0$ and $A'_{-1} = A_0 \setminus A'_1$.
Similarly, let $B'_1$ be the set of $b \in B_0$ such that $(a^-,b^+) \in D'$ for some $a \in A_0$ and $B'_{-1} = B_0 \setminus B'_1$. See Fig.~\ref{fig:Appendix1}.

\begin{figure}[ht]
	\vspace*{-5mm}
\centerline{\resizebox{0.72\textwidth}{!}{\input{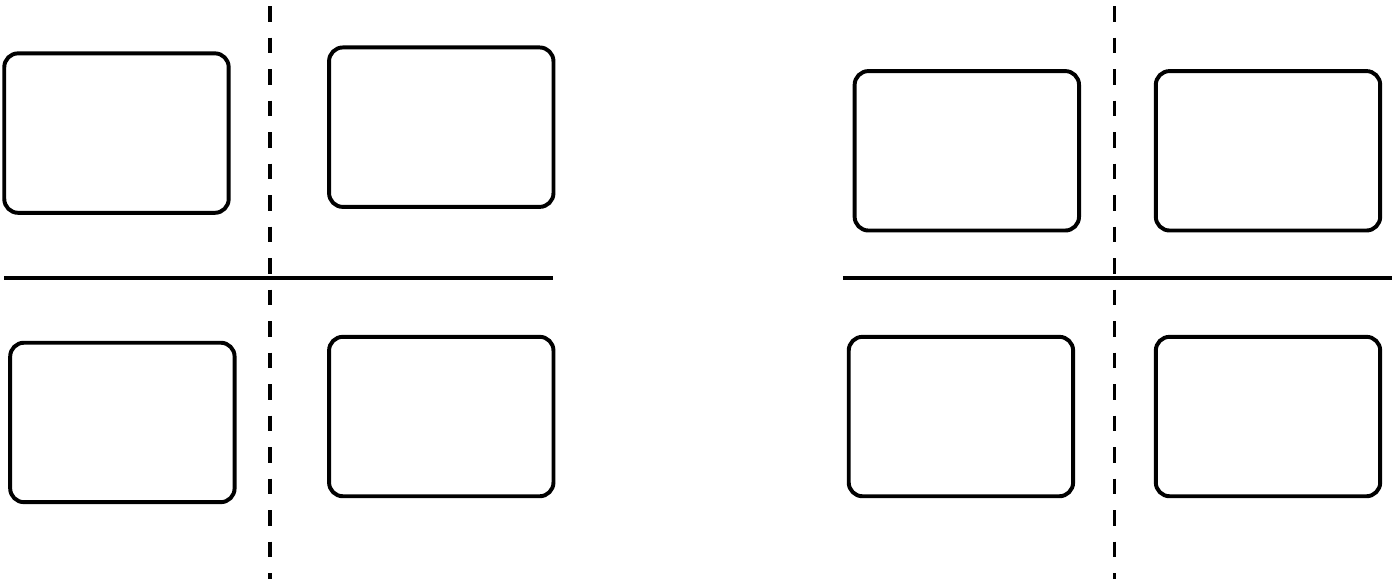_t}}}
\caption{We transformed the stable matching $M_0$ on $A_0 \cup B_0$ to the dominant matching $D$: 
this partitions $A_0$ into $A'_1 \cup A'_{-1}$ and $B_0$ into $B'_1 \cup B'_{-1}$. We also have $A \setminus A_0 = A_1 \cup A_{-1}$ and 
$B \setminus B_0 = B_1 \cup B_{-1}$.}
\label{fig:Appendix1}
\vspace*{-5mm}
\end{figure}

Observe that every vertex in $A'_1 \cup B'_1$ is matched in $D$, however not every vertex in $A'_{-1} \cup B'_{-1}$ is necessarily
matched in $D$. That is, $A'_{-1} \cup B'_{-1}$ may contain unmatched vertices.

The popular matching $M$ has a witness $\vec{\alpha} \in \{0, \pm 1\}^n$: recall that this was the witness used to partition $A \cup B$ into $A_0 \cup B_0$ and $(A \setminus A_0) \cup (B \setminus B_0)$. We have $\alpha_u = 0$ for all $u \in A_0 \cup B_0$ and 
$\alpha_u \in \{\pm 1\}$ for all $u \in (A \setminus A_0) \cup (B \setminus B_0)$.

In order to prove the popularity of  $M^* = M_1 \cup D$, we will show a witness $\vec{\beta}$ as follows. 
For the vertices outside $A_0\cup B_0$, set $\beta_u = \alpha_u$ since nothing has changed for these vertices in 
the transformation $M \leadsto M^*$. For the vertices in $A_0\cup B_0$, set $\beta_u = 1$ for $u \in A'_1 \cup B'_1$ and 
$\beta_u = -1$ for every matched vertex $u \in A'_{-1} \cup B'_{-1}$. For every unmatched vertex $u$, set $\beta_u = 0$.

Thus $\beta_u \ge \wt_{M^*}(u,u)$ for all $u$ and $\sum_{u:A\cup B} \beta_u = 0$: this is because $D \subseteq (A'_1\times B'_{-1})\cup(A'_{-1}\times B'_1)$, so for every edge $(a,b) \in D$, we have $\beta_a + \beta_b = 0$;
recall that $\alpha_a + \alpha_b = 0$ for all $(a,b) \in M_1$.
What is left to show is that $\beta_a + \beta_b \ge \wt_{M^*}(a,b)$ for every edge $(a,b)$.

The correctness of the Gale-Shapley algorithm in $G'_0$ to compute popular matchings in $G_0$ immediately implies that every edge with both endpoints in $A_0 \cup B_0$ is covered by the sum of $\beta$-values of its endpoints (see \cite{FKPZ18}). We will now show that every edge with one endpoint in $A_0\cup B_0$ and the other endpoint in $(A \setminus A_0)\cup (B \setminus B_0)$ is also covered. 
\begin{itemize}
\item Every edge in $A'_1\times B_1$ is covered since $\beta_a + \beta_b = 2 \ge \wt_{M^*}(a,b)$.
Similarly every edge in $A_1\times B'_1$ is also covered.

\item Consider any edge $(a,b) \in A'_1\times B_{-1}$. We have $\alpha_a = 0$ and $\alpha_b = -1$ and so $\wt_M(a,b) \le -1$, 
i.e., $\wt_M(a,b) = -2$ since this is a value in $\{0, \pm 2\}$. This means $b$ prefers $M(b) = M^*(b)$ to $a$. Thus $\wt_{M^*}(a,b) \le 0$.
Since $\beta_a = 1$ and $\beta_b = -1$, we have $\beta_a + \beta_b \ge \wt_{M^*}(a,b)$. 
Similarly every edge in $A_{-1} \times B'_1$ is also covered.

\item Consider any edge $(a,b)$ in  $A_{-1}\times B'_{-1}$. We have $\wt_M(a,b) \le -1$ which means that 
$\wt_M(a,b) = -2$. That is, both $a$ and $b$ prefer their partners in $M$ to each other. We will now show that 
$\wt_{M^*}(a,b) = -2$. Since $M(a) = M^*(a)$, nothing has changed for $a$ and so $a$ prefers $M^*(a)$ to $b$.
We claim that $M^*(b) \succeq_b M(b)$, i.e., $b$ is no worse in $M^*$ than in $M$.

This is because we ran Gale-Shapley algorithm in $G'_0$ with $M'_0$ as the starting matching. 
So if $b \in B'_{-1}$ changed its partner from $u^+$ in $M'_0$ to $v^+$ in $D'$ then $b$ prefers $v$ to $u$.
Thus every $b \in B'_{-1}$ has at least as good a partner in $D$ as in $M_0$. Hence
$\wt_{M^*}(a,b) = -2 = \beta_a + \beta_b$.
By the same reasoning, we can argue that every edge in  $A_1\times B'_{-1}$ is also covered.

\item Consider any edge $(a,b)$ in $A'_{-1}\times B_{-1}$. We have $\alpha_a = 0$
and $\alpha_b = -1$ and so $\wt_M(a,b) \le -1$, 
i.e., $\wt_M(a,b) = -2$. So $b$ prefers $M(b) = M^*(b)$ to $a$. We claim that $M^*(a) \succeq_a M(a)$. This will imply that
$\wt_{M^*}(a,b) = -2$.

Recall that $D'$ is a stable matching in $G'_0$: here every vertex prefers outgoing edges to incoming edges, 
thus every $a \in A'_{-1}$ gets at least as good a partner as $S^*(a)$ in $D$, where $S^*$ is the men-optimal stable matching in $G_0$; 
in turn, $S^*(a) \succeq_a M_0(a)$ for all $a \in A$.
Hence $M^*(a) \succeq_a M(a)$ for every $a \in A'_{-1}$. Thus $\wt_{M^*}(a,b) = -2 = \beta_a + \beta_b$. We can similarly show that every edge in $A'_{-1}\times B_1$ is also covered.
\end{itemize}

Thus $M^*$ is a popular matching in $G$. We will now show that $M^*$ is a {\em dominant} matching in~$G$. Observe that 
$\beta_u \in \{\pm 1\}$ for every matched vertex $u$: we will use this fact to show that $M^*$ is dominant. 
Recall that $\beta_u = 0$ for all unmatched vertices $u$.
Let $\rho = \langle a_0,b_1,a_1,\ldots,b_k,a_k,b_{k+1}\rangle$ be any $M^*$-augmenting path in $G$. 
We have $\beta_{a_0} = \beta_{b_{k+1}} = 0$, hence $\beta_{b_1}= \beta_{a_k} = 1$, and so $\beta_{a_1} = \beta_{b_k} = -1$. 

Either (i)~$\beta_{b_2} = -1$ or (ii)~$\beta_{b_2} = 1$ which implies that $\beta_{a_2} = -1$. It is now easy to see that in the path $\rho$,
for some $i \in \{1,\ldots,k-1\}$ the edge $(a_i,b_{i+1})$ has to be labeled $(-,-)$. That is, $\rho$ is {\em not} an
augmenting path in $G_{M^*}$. Thus there is no $M^*$-augmenting path in $G_{M^*}$, hence $M^*$ is a dominant matching in $G$
(by Theorem~\ref{thm:dominant}). \qed
\end{proof}

Since $M$ is an unstable matching, there is an edge $(a,b)$ that blocks~$M$. Since $\wt_M(a,b) = 2$, the endpoints of a blocking edge $(a,b)$ have to satisfy $\alpha_a = \alpha_b = 1$; so $a \in A_1$ and $b \in B_1$. The edge $(a,b)$ blocks $M^*$ as well since the matching $M_1$ was unchanged by this transformation of $M_0$ to $D$, so $M^*(a) = M_1(a)$ and $M^*(b) = M_1(b)$, thus $a$ and $b$  prefer each other to their respective partners in~$M^*$. So $M^*$ is an unstable dominant matching and Lemma~\ref{non-stab-domn} follows.

\begin{lemma}
\label{non-stab-domn}
If $G,{\cal P}$ has an unstable popular matching then $G,{\cal P}$ admits an unstable dominant matching.
\end{lemma}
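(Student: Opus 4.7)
The plan is to take an arbitrary unstable popular matching $M$ with witness $\vec{\alpha}\in\{0,\pm 1\}^n$ and produce from it a dominant matching $M^*$ that retains the blocking edge responsible for the instability of $M$. The partition of $A\cup B$ into $A_0,A_1,A_{-1}$ and $B_0,B_1,B_{-1}$ by the sign of $\alpha$, together with the splitting of $M$ into the ``zero part'' $M_0\subseteq A_0\times B_0$ and the ``nonzero part'' $M_1\subseteq(A_1\times B_{-1})\cup(A_{-1}\times B_1)$, is given in the preamble to the lemma; this is the starting point.

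The core of the proof is the construction $M^*=M_1\cup D$, where $D$ is the dominant matching of the subinstance $G_0$ obtained by running the Gale--Shapley procedure on $G'_0$ starting from the matching $M'_0$ that corresponds to $M_0$. Keeping $M_1$ untouched is crucial: the transformation $M_0\leadsto D$ acts only on the $\alpha=0$ part and so never moves any endpoint of $M_1$. The bulk of the work is to verify that this $M^*$ is a \emph{dominant} matching of the full instance $G$ (not just a popular one and not just dominant on $G_0$); that is exactly the content of Claim~\ref{clm:dominant-matching}, whose proof exhibits a witness $\vec{\beta}$ for $M^*$ with $\beta_u\in\{\pm 1\}$ on every matched vertex and shows via the labelling argument that any hypothetical $M^*$-augmenting path in $G_{M^*}$ forces a $(-,-)$-edge, contradicting Theorem~\ref{thm:dominant}. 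I would invoke Claim~\ref{clm:dominant-matching} here rather than redo its proof.

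It then remains to check that $M^*$ inherits unstability from $M$. Let $(a,b)$ be any blocking edge of $M$; then $\wt_M(a,b)=2$, and tightness of the popular-matching dual forces $\alpha_a+\alpha_b\ge 2$, which, combined with $\alpha\in\{0,\pm 1\}^n$, gives $\alpha_a=\alpha_b=1$. Hence $a\in A_1$ and $b\in B_1$, so both endpoints of $(a,b)$ lie outside $A_0\cup B_0$ and are therefore matched by $M_1$ in both $M$ and $M^*$ to the \emph{same} partners. Since $a$ and $b$ prefer each other to their $M$-partners, they also prefer each other to their $M^*$-partners, so $(a,b)$ blocks $M^*$ and the lemma follows.

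The main obstacle is really contained in Claim~\ref{clm:dominant-matching}, namely showing dominance of $M^*$ across the six types of cross edges between the old $A_0\cup B_0$ partition and the $\alpha\neq 0$ partition; beyond that, extracting the instability of $M^*$ from the parity/sign constraints on the witness is essentially a one-line observation.
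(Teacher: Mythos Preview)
Your proposal is correct and follows essentially the same approach as the paper: construct $M^* = M_1 \cup D$ via the transformation $M_0 \leadsto D$, invoke Claim~\ref{clm:dominant-matching} for dominance of $M^*$, and then observe that any blocking edge $(a,b)$ of $M$ must have $\alpha_a=\alpha_b=1$, so its endpoints lie in $A_1\cup B_1$ where $M^*$ agrees with $M$, hence $(a,b)$ still blocks $M^*$. One terminological quibble: the inequality $\alpha_a+\alpha_b\ge \wt_M(a,b)=2$ comes directly from the witness (covering) constraint of Theorem~\ref{thm:witness}, not from tightness; tightness is not needed here since you only want the lower bound.
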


Hence in order to answer the question of whether every popular matching in $G$ is stable or not, we need to decide if there exists a dominant matching $M$ in $G$ with a blocking edge. 
We present a simple combinatorial algorithm for this problem.

\medskip

Our algorithm is based on the equivalence between dominant matchings in $G$ and stable matchings in $G'$.
Our task is to determine if there exists a stable matching in $G'$ that includes a pair of edges $(a^+,v^-)$ and $(u^-,b^+)$ such that $a$ and $b$ prefer each other to $v$ and $u$, respectively, in~$G$. It is easy to decide in $O(m^3)$ time whether such a stable matching exists or not in~$G'$. 
\begin{itemize}
\item For every pair of edges $e_1 = (a,v)$ and $e_2 = (u,b)$ in $G$ such that
$a$ and $b$ prefer each other to $v$ and $u$, respectively: determine if there is a stable matching in $G'$ that contains the pair of edges
$(a^+,v^-)$ and~$(u^-,b^+)$. 
\end{itemize}

In the graph $G'$, we modify Gale-Shapley algorithm so that $b$ rejects proposals 
from all neighbors ranked worse than $u^-$ and $v$ rejects all proposals from neighbors ranked worse than~$a^+$. 
If the resulting algorithm returns a stable matching that contains the edges $(a^+,v^-)$ and~$(u^-,b^+)$, then we have the desired matching; else $G'$ 
has no stable matching that contains this particular pair of edges.

In order to determine if there exists an unstable dominant matching, we may need to go through all
pairs of edges $(e_1,e_2) \in E\times E$. Since we can determine in linear time if there exists a stable matching in $G'$ with any particular pair of edges~\cite{GI89},
the running time of this algorithm is $O(m^3)$.

\paragraph{A faster algorithm.} It is easy to improve the running time to $O(m^2)$. For each $(a,b) \in E$ we check the following.
\begin{itemize}
\item[($\circ$)] Does there exist a stable matching in $G'$ such that
\begin{inparaenum}[(1)]
\item $a^+$ is matched to a neighbor that is ranked worse than $b^-$ in $a$'s list, and
\item $b^+$ is matched to a neighbor that is ranked worse than $a^-$ in $b$'s list?
\end{inparaenum} 
\end{itemize}

We modify the Gale-Shapley algorithm in $G'$ so that (1)~$b$ rejects all offers from superscript~$+$ neighbors, i.e., $b$ accepts proposals only from superscript~$-$ neighbors, and (2)~every neighbor of $a$ that is ranked better than $b^-$ rejects proposals from~$a^+$. 

Suppose ($\circ$) holds.
Then this modified Gale-Shapley algorithm returns among all such stable matchings, the most men-optimal and women-pessimal one~\cite{GI89}. 
Thus among all stable matchings that match $a^+$ to a neighbor ranked worse than $b^-$ and that include some edge $(\ast,b^+)$, 
the matching returned by the above algorithm matches $b$ to its least preferred neighbor and $a$ to its most preferred neighbor.

Hence if the modified Gale-Shapley algorithm returns a matching that is \begin{inparaenum}[(i)] \item unstable or 
\item includes an edge $(a^-,\ast)$ or
\item matches $b^+$ to a neighbor better than $a^-$, then there is no dominant matching $M$ in $G$ such that the pair $(a,b)$ blocks~$M$.
\end{inparaenum}
Else we have the desired stable matching in $G'$, call this matching~$M'$. 

The projection of the matching $M'$ on to the edge set of $G$ will be a dominant matching in $G$ with $(a,b)$ as a blocking edge.
Since we may need to go through all edges in $E$ and the time taken for any edge $(a,b)$ is $O(m)$, the entire running time of this
algorithm is~$O(m^2)$. We have thus shown the following theorem.

\begin{theorem}
\label{thm:unstable}
Given $G = (A \cup B,E)$ on $m$ edges and strict preference lists of its vertices, we can decide in $O(m^2)$ time whether every popular matching in $G$ is stable or not; if not, we can return an unstable popular matching.
\end{theorem}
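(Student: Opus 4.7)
The plan is to reduce the decision problem to the search for an unstable dominant matching, and then use the correspondence between dominant matchings in $G$ and stable matchings in $G'$ to turn that search into a family of stable-matching queries that can each be answered in linear time. By Lemma~\ref{non-stab-domn}, $G$ admits an unstable popular matching iff it admits an unstable dominant matching, so I may restrict attention to dominant matchings $M$ that are blocked by some edge $(a,b) \in E$. Via the equivalence between dominant matchings in $G$ and stable matchings in $G'$, this becomes the question: does $G'$ admit a stable matching $M'$ such that, for some edge $(a,b)$ of $G$, both $a^+$ is matched in $M'$ to a vertex strictly worse than $b^-$ on $a$'s list in $G'$, and $b^+$ is matched in $M'$ to a vertex strictly worse than $a^-$ on $b$'s list?

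First I would enumerate each edge $(a,b)\in E$ as the candidate blocking pair. For a fixed $(a,b)$, the plan is to run a modified Gale--Shapley procedure in $G'$ in which (1) the vertex $b$ rejects every proposal coming from a superscript-$+$ neighbor better than $a^-$ (equivalently, one forces $b^+$ to propose only to partners worse than $a^-$), and (2) every neighbor of $a$ ranked better than $b^-$ preemptively rejects any proposal from $a^+$. These restrictions exactly encode the requirement that both endpoints of $(a,b)$ end up matched to strictly worse partners and hence that $(a,b)$ blocks the projected matching in $G$. By classical results on Gale--Shapley with forbidden pairs (see~\cite{GI89}), among all stable matchings of $G'$ compatible with these rejections, this modified procedure returns the most man-optimal/woman-pessimal such matching.

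Next, I would verify the output: if the returned matching is stable in $G'$, contains no edge incident to $a^-$, and matches $b^+$ to a partner strictly worse than $a^-$, then its projection to $G$ is a dominant matching of $G$ blocked by $(a,b)$, and I can output it as the desired unstable popular matching. Otherwise, by the optimality property of the modified Gale--Shapley output, no stable matching of $G'$ satisfies both constraints for this particular $(a,b)$, and I move on to the next edge. If every edge $(a,b)\in E$ fails, then by Lemma~\ref{non-stab-domn} there is no unstable popular matching, so every popular matching is stable.

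For the running time, each of the $O(m)$ candidate edges triggers one run of the modified Gale--Shapley algorithm on $G'$, which has $O(m)$ edges and can be executed in $O(m)$ time, yielding the claimed $O(m^2)$ bound. The main conceptual obstacle is isolating the right ``most extreme'' stable matching of $G'$ to test per candidate edge: a naive approach of pairing up two forbidden edges $(a,v),(u,b)$ would cost $O(m^3)$, so the key trick is to require only that $a^+$ and $b^+$ be matched \emph{worse} than $b^-$ and $a^-$ respectively, and to exploit the man-optimal/woman-pessimal property of Gale--Shapley on the reduced instance so that a single linear-time run per edge suffices.
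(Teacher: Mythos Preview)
Your approach is exactly that of the paper: reduce via Lemma~\ref{non-stab-domn} to finding an unstable \emph{dominant} matching, translate to $G'$, and for each candidate blocking edge $(a,b)\in E$ run one modified Gale--Shapley on $G'$, exploiting man-optimality/woman-pessimality so that a single $O(m)$ run per edge suffices and the naive $O(m^3)$ pair-of-edges search is avoided.

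The one slip is your modification~(1). As you have phrased it, nothing prevents $b$ from ending up matched along an edge of the form $(u^+,b^-)$; since the woman-pessimal output could then place $b$ on its minus side even when some stable matching of $G'$ has $b$ matched via $(u^-,b^+)$ with $u^-$ ranked worse than $a^-$, your test may return a false negative. The paper's modification~(1) is simply that $b$ rejects \emph{all} offers from superscript~$+$ neighbors, thereby forcing any stable output to match $b$ via $b^+$; with that change, your modification~(2) and your three-part verification (stability, no edge incident to $a^-$, $b^+$ matched strictly worse than $a^-$) coincide with the paper's.
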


\section{Finding a non-dominant popular matching}
\label{sec:stable}

Given an instance $G = (A \cup B, E)\new{, \mathcal{P}}$, the problem we consider here is to decide if every popular matching is also
dominant, i.e., to decide if $\{$popular matchings$\} = \{$dominant matchings$\}$ or not in~$G$.

\begin{pr}
    \inp A bipartite graph $G = (A \cup B,E)$ with strict preference lists.\\
	\ques If there is a non-dominant popular matching in $G$.
\end{pr}

In this section, we show the following. 
\begin{theorem}
\label{thm:non-dominant}
Given $G = (A \cup B,E)$ with strict preference lists, it is NP-complete to decide if $G$ admits a popular matching that is not dominant.
\end{theorem}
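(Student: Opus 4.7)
The plan is to establish NP-completeness in two parts: first an easy membership argument, then hardness via a reduction to an auxiliary question about stable matchings. For NP-membership, a certificate is a popular matching $M$ that is not dominant. Popularity can be verified in polynomial time using Theorem~\ref{thm:witness} (compute an integral witness $\vec\alpha$ via LP duality), and non-dominance is checked by searching for an $M$-augmenting path in $G_M$ (Theorem~\ref{thm:dominant}).

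For hardness, I would first prove the analogue of Lemma~\ref{non-stab-domn}: the instance $G$ admits a non-dominant popular matching if and only if $G$ admits a non-dominant \emph{stable} matching (this is the Lemma~\ref{non-domn-stable} foreshadowed in footnote~1 and in the list $(2)\Leftrightarrow(2')$). The easy direction is immediate since stable matchings are popular. For the reverse direction, starting from a non-dominant popular $P$ with witness $\vec\alpha\in\{0,\pm1\}^n$, I would transform $P$ into a stable matching $S$ in the same spirit as Section~\ref{sec:dom-vs-stab}, but dualized: keep $P$ restricted to the set $A_0\cup B_0$ of vertices whose witness value is $0$, and on the remaining vertices $(A\setminus A_0)\cup(B\setminus B_0)$ replace $M_1$ by a stable matching $S_1$ of the corresponding sub-instance. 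One shows that $S=P_0\cup S_1$ is stable by exhibiting the witness $\vec\beta=\vec 0$ and verifying the dual inequalities $\beta_a+\beta_b\ge \wt_S(a,b)$ hold, using the fact that stability of $S_1$ kills $(+,+)$ edges inside its block while the witness values $\pm1$ of $\vec\alpha$ on those vertices already precluded $(+,+)$ edges crossing the two blocks. Finally, an $M$-augmenting path in $G_M$ translates into an $S$-augmenting path in $G_S$, so $S$ is non-dominant.

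Given this equivalence, it suffices by Theorem~\ref{thm:dominant} to show that it is NP-hard to decide whether $G$ has a stable matching $S$ such that $G_S$ contains an $S$-augmenting path (``a stable matching with a certain augmenting path'', as announced in the paper). I would reduce from a classical NP-hard problem such as Vertex Cover or 3-SAT via gadgets whose stable matchings are in bijection with (partial) truth assignments. The gadget for each variable/clause has two local stable configurations corresponding to \textsf{true}/\textsf{false}, and the preferences are arranged so that (i) all stable matchings match the same vertices (by the Rural Hospitals Theorem), leaving a designated pair $(a^*,b^*)$ of unmatched ``terminals''; and (ii) there exists an alternating $(a^*,b^*)$-path avoiding $(-,-)$ edges in $G_S$ if and only if the chosen assignment satisfies the formula. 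The labels of edges connecting different gadgets will be $(+,-)$ or $(-,+)$ precisely when the two sides' choices are ``compatible'', so path-traversal across the graph enforces the global constraint.

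The main obstacle is the gadget construction in the last step: one must simultaneously control which stable matchings exist (via the lattice/choice structure of stable matchings) \emph{and} the $(+,-),(-,+),(-,-)$ edge labels between gadgets, so that the existence of an $S$-augmenting path through $G_S$ mirrors the satisfiability constraint exactly and not as a one-way implication. Because the augmenting path property is global while stability is local, one typically needs careful chaining of consistency gadgets and a single ``terminal pair'' of unmatched vertices through which every candidate augmenting path is forced to pass; the paper's promise of a ``surprisingly simple'' reduction suggests a short chain of preference-list tricks rather than an elaborate combinatorial encoding.
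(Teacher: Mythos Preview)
Your overall plan matches the paper's: reduce to non-dominant \emph{stable} matchings via a witness-based decomposition, then reduce 3SAT to the existence of a stable matching with an augmenting path in $G_S$. However, the proof of the reduction from popular to stable has a real gap. Your claim that ``an $M$-augmenting path in $G_M$ translates into an $S$-augmenting path in $G_S$'' is not automatic: the matching on $(A\setminus A_0)\cup(B\setminus B_0)$ has been changed from $M_1$ to $S_1$, so an augmenting path passing through that region need not remain alternating. The missing observation is that any $M$-augmenting path $\rho$ in $G_M$ lies \emph{entirely} inside $A_0\cup B_0$. Its endpoints are unmatched and hence have $\alpha$-value $0$; and in $G_M$ every vertex of $A_{-1}\cup B_{-1}$ is adjacent only to vertices of $A_1\cup B_1$ (all other incident edges are labeled $(-,-)$), so once $\rho$ left $A_0\cup B_0$ it could never return. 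Thus $\rho$ is an $M_0$-augmenting path in the sub-instance on $A_0\cup B_0$, where the matching is unchanged, and hence an $S$-augmenting path.

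Your stability argument for $S=P_0\cup S_1$ is also too quick. That the $\alpha$-values preclude $(+,+)$ edges crossing the two blocks is a statement about $M$, not about $S$: once $M_1$ is replaced by $S_1$, partners of vertices in $A_1\cup B_1$ may change, and you need a monotonicity property of $S_1$ to conclude those edges still do not block. In the paper $S_1$ is obtained by Gale--Shapley on the sub-instance starting from $M_1\cap(A_{-1}\times B_1)$, which guarantees $S_1(b)\succeq_b M_1(b)$ for $b\in B_1$ and $S_1(a)\succeq_a M_1(a)$ for $a\in A_1$; this is exactly what is used to handle the cross edges. Finally, your 3SAT sketch is directionally right but not a proof: the paper's instance is a series chain of clause gadgets between terminals $s,t$, each clause a parallel composition of literal gadgets, with consistency edges linking each positive literal gadget to the unique gadget of its negation; the preference lists are engineered so that every $s$--$t$ augmenting path in $G_S$ must traverse exactly one literal gadget per clause in its ``true'' state, and stability together with the consistency edges forces a coherent assignment.
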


We start with the following lemma, that is the counterpart of Lemma~\ref{non-stab-domn}.

\begin{lemma}
\label{non-domn-stable}
If $G,{\cal P}$ has a non-dominant popular matching $M$ then $G,{\cal P}$ admits a non-dominant stable matching $N$. If $M$ is given, then $N$ can be found efficiently.
\end{lemma}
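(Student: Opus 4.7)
The proof of Lemma~\ref{non-domn-stable} should parallel Lemma~\ref{non-stab-domn} through a dual transformation: whereas Lemma~\ref{non-stab-domn} modified the ``$\alpha=0$'' part of $M$ to reach a dominant matching while preserving the unstable ``$\alpha\neq 0$'' part, here I plan to modify the ``$\alpha\neq 0$'' part to reach a stable matching while preserving the ``$\alpha=0$'' part, which is what carries the witness of non-dominance. Let $\vec\alpha$ be a witness of $M$. As in Section~\ref{sec:dom-vs-stab}, partition $V(G)$ into $A_0\cup A_1\cup A_{-1}$ and $B_0\cup B_1\cup B_{-1}$ according to $\alpha$-values, and write $M = M_0 \cup M_1$ where $M_0$ lies on $A_0\cup B_0$ and $M_1$ on $V(G_1) := A_1\cup A_{-1}\cup B_1\cup B_{-1}$. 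Since every unmatched vertex has $\alpha_u = 0$, $M_1$ is perfect on $V(G_1)$.

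The transformation is the reverse of the one in Lemma~\ref{non-stab-domn}: instead of turning the stable $M_0$ on $G_0$ into a dominant matching $D$, I compute a stable matching $S_1$ of the induced subgraph $G_1 := G[V(G_1)]$ via Gale--Shapley (note that $M_1$ is itself dominant in $G_1$ since $\vec\alpha$ restricted to $V(G_1)$ is $\{\pm 1\}$-valued). Define $N := M_0\cup S_1$. The two facts to prove are that $N$ is stable in $G$ and that $N$ is not dominant in $G$.

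For stability, take $\vec\beta = \vec 0$ as the candidate witness and check that no edge blocks $N$. Edges within $G_0$ do not block because $\vec\alpha\restriction_{A_0\cup B_0} = \vec 0$ is a witness for $M_0$ in $G_0$, so $M_0$ is stable in $G_0$. Edges within $G_1$ do not block by stability of $S_1$. For cross edges $(a,b)$ with one endpoint in $V(G_0)$ and the other in $V(G_1)$, I perform a case analysis on $(\alpha_a,\alpha_b)$ mirroring Claim~\ref{clm:dominant-matching}: the witness constraint $\alpha_a+\alpha_b \geq \wt_M(a,b)$ bounds the original label, and the transformation $M_1\to S_1$ can only move vertices in $A_1\cup B_1$ (``winners'' under the dominant-like $M_1$) to worse partners and vertices in $A_{-1}\cup B_{-1}$ (``losers'') to better partners, so in each case the edge remains non-blocking under $N$.

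For non-dominance, Theorem~\ref{thm:dominant} yields an $M$-augmenting path $\rho$ in $G_M$; its endpoints are unmatched in $M$, hence in $A_0\cup B_0$, and by Lemma~\ref{lem:all-same} they remain unmatched in the stable matching $N$. The plan is to argue that $\rho$ (or a modification inside the $V(G_0)$ part) survives as an $N$-augmenting path in $G_N$: since $N$ and $M$ coincide on $V(G_0)$, the labels of all $G_0$-edges with respect to $N$ agree with those with respect to $M$, so any portion of $\rho$ inside $V(G_0)$ immediately carries over. Combined with Theorem~\ref{thm:dominant}, this gives $N$ non-dominant. The construction of $N$ from $M$ is polynomial (Gale--Shapley on $G_1$). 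I expect the main obstacle to be the cross-edge case analysis for stability and showing that the augmenting-path witness of non-dominance can always be routed through $V(G_0)$ (or otherwise salvaged), since in general an $M$-augmenting path in $G_M$ might visit vertices with $\alpha\neq 0$ and its preservation under the transformation requires care.
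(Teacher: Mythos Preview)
Your overall plan coincides with the paper's: decompose $M=M_0\cup M_1$ via a witness $\vec\alpha$, replace $M_1$ by a stable matching $S$ on $G_1$, and set $N=M_0\cup S$. Two points, however, need correction.

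\textbf{The monotonicity direction is backwards, and the specific initialization matters.} In the cross-edge case $a\in A_0$, $b\in B_1$ you have $\alpha_a+\alpha_b=1\ge \wt_M(a,b)$, so $\wt_M(a,b)\le 0$. If $a$ prefers $b$ to $M(a)=N(a)$, then $b$ prefers $M_1(b)$ to $a$, and to conclude $\wt_N(a,b)\le 0$ you need $S(b)\succeq_b M_1(b)$, i.e.\ $b\in B_1$ is \emph{at least as well off} in $S$, not worse. Symmetrically, the case $a\in A_1$, $b\in B_0$ needs $S(a)\succeq_a M_1(a)$. The paper obtains both by running Gale--Shapley on $G_1$ \emph{initialized with} $M_1\cap(A_{-1}\times B_1)$ rather than from scratch: every $b\in B_1$ starts at $M_1(b)$ and only improves, and every $a\in A_1$ is accepted by $M_1(a)\in B_{-1}$ or better because all edges in $A_{-1}\times B_{-1}$ are $(-,-)$. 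Plain Gale--Shapley does not guarantee these inequalities, and your stated direction (``$A_1\cup B_1$ move to worse partners'') would leave the case open.

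\textbf{The augmenting path never leaves $A_0\cup B_0$; this is the key step, not a residual obstacle.} In $G_M$, any edge incident to a vertex with $\alpha$-value $-1$ and a vertex with $\alpha$-value in $\{-1,0\}$ has $\wt_M\le -1$, hence is $(-,-)$ and absent. Thus in $G_M$ the only neighbors of $A_{-1}\cup B_{-1}$ lie in $A_1\cup B_1$, and the only way $\rho$ can exit $A_0\cup B_0$ is via a non-matching edge into $A_1\cup B_1$. From there the matching edge goes to $B_{-1}\cup A_{-1}$, then non-matching edges can only return to $A_1\cup B_1$, and so on: $\rho$ is trapped in $V(G_1)$. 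But both endpoints of $\rho$ are unmatched, hence have $\alpha$-value $0$ and lie in $A_0\cup B_0$. Therefore $\rho$ never leaves $A_0\cup B_0$. Since $N$ agrees with $M$ on $G[A_0\cup B_0]$, the same $\rho$ is an $N$-augmenting path in $G_N$, and $N$ is non-dominant by Theorem~\ref{thm:dominant}.
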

\begin{proof}
  Let $M$ be a non-dominant popular matching in $G$ and $\vec{\alpha}\in\{0, \pm 1\}^n$ its witness (see Theorem~\ref{thm:witness}). 
  We will use the decomposition illustrated in Fig.~\ref{fig:first} to show the existence of a non-dominant stable matching in $G$. As per the decomposition in Fig.~\ref{fig:first}, $M = M_0 \cupdot M_1$. Since $M$ is not dominant, there exists an $M$-augmenting path $\rho$ in $G_M$ (by Theorem~\ref{thm:dominant}). The endpoints of $\rho$ (call them $u$ and~$v$) are unmatched in $M$, hence $\alpha_u = \alpha_v = 0$ (see Section~\ref{prelims})  
  and so $u$ and $v$ are in $A_0 \cup B_0$.

  In the graph $G_M$, the vertices in $B_{-1} \cup A_{-1}$ (these vertices have $\alpha$-value $-1$) are adjacent  only to vertices in $A_1 \cup B_1$ as all other edges incident to vertices in $B_{-1} \cup A_{-1}$ are labeled $(-,-)$, and these are not present in~$G_M$. Suppose the $M$-alternating path $\rho$ leaves the vertex set $A_0 \cup B_0$, i.e., suppose it contains a non-matching edge between $A_0 \cup B_0$ and $A_1 \cup B_1$. Since the partners of vertices in  $A_1 \cup B_1$ are in $B_{-1} \cup A_{-1}$, the path $\rho$ can never return to  $A_0 \cup B_0$. However we know that the last vertex $v$ of $\rho$ is in $A_0\cup B_0$. Thus $\rho$ never leaves $A_0 \cup B_0$, i.e., $\rho$ is an  $M_0$-augmenting path in $G^*_{M_0}$, where $G^*$ is the graph $G$ restricted to vertices in $A_0 \cup B_0$.

  We now run a transformation $M_1 \leadsto S$ as given in \cite{CK18} to convert $M_1$ into a stable matching $S$ as follows. The matching $S$ is obtained by  running Gale-Shapley algorithm in the subgraph $G_1$, which is the graph $G$ restricted to $(A\setminus A_0)\cup(B\setminus B_0)$: however, rather than starting with the empty matching, we start with the matching given by $M_1 \cap (A_{-1}\times B_1)$. So unmatched men (these are vertices in $A_1$ to begin with) propose in decreasing order of preference and whenever a woman receives a proposal from a neighbor that she prefers to her current partner (her preferences as given in $G$), she rejects her current partner and accepts this proposal. This results in a stable matching $S$ in $G_1$.

  It was shown in \cite{CK18} that $N = M_0 \cup S$ is a stable matching in $G$. We include a new and simple proof of this below (see Claim~\ref{clm:stable-matching}). Since  $\rho$ is an $M_0$-augmenting path in $G^*_{M_0}$ and $M_0$ is a subset of $N$, it follows that $\rho$ is an $N$-augmenting path in $G_N$. Thus $N$ is a non-dominant stable matching in~$G$. \qed
\end{proof}

\begin{new-claim}
\label{clm:stable-matching}
The matching $N = M_0 \cup S$ is a stable matching in $G$.
\end{new-claim}
\begin{proof}
The matching $S$ is obtained by running Gale-Shapley algorithm in the graph $G_1$ on vertex set $(A \setminus A_0)\cup(B \setminus B_0)$.
We did not compute the matching $S$ from scratch --- we started with edges of the matching $M_1$ restricted to $A_{-1} \times B_1$. So in the resulting matching $S$, it is easy to see the following two useful properties:
\begin{itemize}
\item $S(b) \succeq_b M_1(b)$ for every $b \in B_1$. This is because to begin with, every $b \in B_1$ is matched to $M_1(b)$ and $b$ will 
change her partner only if she receives a proposal from a neighbor better than $M_1(b)$. 
\item $S(a) \succeq_a M_1(a)$ for every $a \in A_1$. This is because all vertices in $B_{-1}$ are unmatched in our starting matching and 
every $b \in B_{-1}$ prefers her partner in $M_1$ to any neighbor in $A_{-1}$ (since every edge in $A_{-1}\times B_{-1}$ is labeled $(-,-)$ with respect to $M_1$).
Thus in the matching $S$, $a$ will get accepted either by $M_1(a)$ or a better neighbor.
\end{itemize}

It is now easy to show that $N$ is a stable matching. We already know that $M_0$ is a stable matching on $A_0 \cup B_0$ and $S$ is a stable
matching on $(A \setminus A_0) \cup (B \setminus B_0)$. It is left to show that no edge $(a,b)$ with one endpoint in $A_0 \cup B_0$ and
another endpoint in $(A \setminus A_0) \cup (B \setminus B_0)$ {\em blocks} $N$, i.e., we need to show that $\wt_N(a,b) \le 0$ for every such edge $(a,b)$.

Suppose $a \in A_0$ and $b \in B_{-1}$. Let $\vec{\alpha}$ be the witness of $M$ used to partition $A\cup B$ into $A_0\cup B_0$ and $(A \setminus A_0) \cup (B\setminus B_0)$. So $\alpha_a = 0$ and $\alpha_b = -1$, hence
$\wt_M(a,b) \le -1$, i.e., $\wt_M(a,b) = -2$. Thus $a$ prefers $M_0(a) = N(a)$ to $b$. Hence $\wt_N(a,b) \le 0$. We can similarly show that
$\wt_N(a,b) \le 0$ for $a \in A_{-1}$ and $b \in B_0$.

Suppose $a \in A_0$ and $b \in B_1$. Then $\alpha_a = 0$ and $\alpha_b = 1$ and so $\wt_M(a,b) \le 1$, i.e., $\wt_M(a,b) \le 0$. 
So if $a$ prefers $M_0(a) = N(a)$ to $b$ then we can immediately conclude that $\wt_N(a,b) \le 0$. Else $b$ prefers $M_1(b)$ to $a$ and
we have noted above that $S(b) \succeq_b M_1(b)$ for every $b \in B_1$. Thus $b$ prefers $N(b) = S(b)$ to $a$ and so
$\wt_N(a,b) \le 0$. We can similarly argue that $\wt_N(a,b) \le 0$ for every $a \in A_1$ and $b \in B_0$. 
This finishes the proof of this claim. \qed
\end{proof}

	Our problem now is to decide if there exists a non-dominant stable matching $N$ in $G$, i.e., a stable matching $N$ with an $N$-augmenting path in $G_N$ (see Theorem~\ref{thm:dominant}). We show a reduction from 3SAT. Given a 3SAT formula $\phi$, we transform $\phi = C_1 \wedge \cdots \wedge C_m$ as follows: let $X_1,\ldots,X_n$ be the $n$ variables in $\phi$. For each $i$:
	\begin{itemize}
		\item replace all occurrences of $\neg X_i$ in $\phi$ with $X_{n+i}$ (a single new variable);
		\item add the clauses $X_i \vee X_{n+i}$ and $\neg X_i \vee \neg X_{n+i}$ to capture $\neg X_i \iff X_{n+i}$.
	\end{itemize}
	
	Thus, the updated formula is $\phi = C_1 \wedge \cdots \wedge C_m \wedge C_{m+1} \wedge\cdots\wedge C_{m+n} \wedge D_{m+n+1} \wedge \cdots \wedge D_{m+2n}$, where $C_1,\ldots,C_m$ are the original $m$ clauses with negated literals substituted by new variables and for $1 \le i \le n$: $C_{m+i}$ is the clause $X_i \vee X_{n+i}$ and $D_{m+n+i}$ is the clause $\neg X_i \vee \neg X_{n+i}$. Corresponding to the above formula $\phi$, we will construct an instance $G\new{, \mathcal{P}}$ whose high-level picture is shown in Fig.~\ref{newfig1:example}. 
 
	\begin{figure}[ht]
		\centering
		\begin{tikzpicture}[scale=0.9, transform shape, very thick]
		\pgfmathsetmacro{\b}{1.5}
		\pgfmathsetmacro{\c}{45}
		\pgfmathsetmacro{\a}{30}
		
		\node[vvertex, label=below:$s$] (s) at (1*\b, 0) {};
		\node[uvertex, label=below:$u_0$] (u0) at (2*\b, 0) {};
		\node[vvertex, label=below:$v_0$] (v0) at (3*\b, 0) {};
		\node[uvertex, label=below:$u_1$] (u1) at (5*\b, 0) {};
		\node[vvertex, label=below:$v_1$] (v1) at (6*\b, 0) {};
		\node[uvertex, label=below:$u_{m+2n-1}$] (ulastbutone) at (7*\b, 0) {};
		\node[vvertex, label={below, yshift=-3mm}:$v_{m+2n-1}$] (vlastbutone) at (8*\b, 0) {};
		\node[uvertex, label={below, yshift=-3mm}:$u_{m+2n}$] (ulast) at (10*\b, 0) {};
		\node[vvertex, label=below:$v_{m+2n}$] (vlast) at (11*\b, 0) {};
		\node[uvertex, label=below:$t$] (t) at (12*\b, 0) {};
		
		\node[dot](p1) at (6.5*\b, 0) {};
		\node[dot](p2) at (6.4*\b, 0) {};
		\node[dot](p3) at (6.6*\b, 0) {};
		
		\draw (s) -- node[edgelabel, near start] {1} node[edgelabel, near end] {2} (u0);
		\draw [MyPurple] (u0) -- node[edgelabel, near start] {1} node[edgelabel, near end] {1} (v0);
		\draw [MyPurple] (u1) -- node[edgelabel, near start] {4} node[edgelabel, near end] {1} (v1);
		\draw [MyPurple] (ulastbutone) -- node[edgelabel, near start] {1} node[edgelabel, near end] {3} (vlastbutone);
		\draw [MyPurple] (ulast) -- node[edgelabel, near start] {1} node[edgelabel, near end] {1} (vlast);
		\draw (vlast) -- node[edgelabel, near start] {2} node[edgelabel, near end] {1} (t);
		
		\draw [] (v0) to[out=\c,in=180-\c] node[edgelabel, very near start] {2} node[edgebox] {$Z_{1,1}$} node[edgelabel, very near end] {1} (u1);
		\draw (v0) -- node[edgelabel, very near start] {3} node[edgebox] {$Z_{1,2}$} node[edgelabel, very near end] {2} (u1);
https://www.overleaf.com/project/5bc995a224f85304d9329f5e		\draw [] (v0) to[out=-\c,in=-180+\c] node[edgelabel, very near start] {4} node[edgebox] {$Z_{1,3}$} node[edgelabel, very near end] {3} (u1);
		
		\draw (vlastbutone) to[out=\a,in=180-\a] node[edgelabel, very near start] {1} node[edgebox] {$Z_{m+2n,1}$} node[edgelabel, very near end] {2} (ulast);
		\draw (vlastbutone) to[out=-\a,in=-180+\a] node[edgelabel, very near start] {2} node[edgebox] {$Z_{m+2n,2}$} node[edgelabel, very near end] {3} (ulast);
		
		\end{tikzpicture}
		\caption{The high-level picture of the instance $G\new{, \mathcal{P}}$.}
		\label{newfig1:example}
	\end{figure}
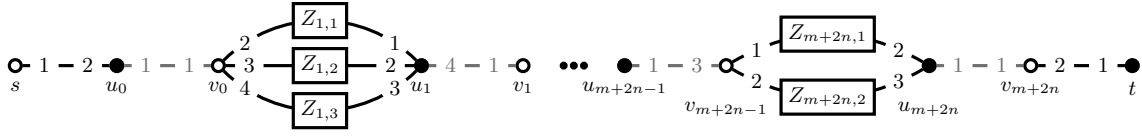
	
$G$ is the series composition of an edge $(s,u_0)$, a gadget for each clause $i$ that starts with node $v_{i-1}$ and ends with node $u_i$, and finally, an edge $(v_{m+2n},t)$. The gadget corresponding to clause $i$ contains the parallel composition of disjoint gadgets $Z_{i,j}$ for each literal $j$ in clause $i$. Note that  gadgets $Z_{i,j}$ associated with positive literals (so $i\leq m+n$) are different from those associated with 
 negated literals (here $i\geq m+n+1$).

For a variable $x$, we use $a_i,b_i,a'_i,b'_i$ to denote the 4 vertices in the gadget of $x$ in $C_i$ and $c_k,d_k,c'_k,d'_k$ to denote the 4 vertices in the gadget of $\neg x$ in $D_k$. The edge $(a_i,d_k)$ will be called a {\em consistency} edge. This connects the gadget of a positive literal to the (unique) corresponding negative literal (see Fig.~\ref{newfig2:example}). 
Definitions of preference lists are given in Section~\ref{sec:preferences}. Here, we explain the main steps of the proof. Define $F:=\{(u_i,v_i):0 \le i \le m+2n\}$ to be the set of \emph{basic edges}.

\begin{figure}[ht]
	\centering
		\begin{tikzpicture}[scale=0.9, transform shape, very thick]
		\pgfmathsetmacro{\b}{2.3}
		\node[vvertex, label=above:$v_{i-1}$] (vi-1) at (0, 0) {};
		\node[uvertex, label=above:$a_i$] (ai) at (1*\b, 0) {};
		\node[vvertex, label=above:$b_i$] (bi) at (2*\b, 0) {};
		\node[uvertex, label=above:$u_i$] (ui) at (3*\b, 0) {};
		\node[vvertex, label=above:$v_{k-1}$] (vk-1) at (4*\b, 0) {};
		\node[uvertex, label=above:$c_k$] (ck) at (5*\b, 0) {};
		\node[vvertex, label=above:$d_k$] (dk) at (6*\b, 0) {};
		\node[uvertex, label=above:$u_k$] (uk) at (7*\b, 0) {};
		\node[vvertex, label=below:$b'_i$] (b'i) at (1*\b, -\b) {};
		\node[uvertex, label=below:$a'_i$] (a'i) at (2*\b, -\b) {};
		\node[vvertex, label=below:$d'_k$] (d'k) at (5*\b, -\b) {};
		\node[uvertex, label=below:$c'_k$] (c'k) at (6*\b, -\b) {};
		
		\draw (vi-1) -- node[edgelabel,  near start] {2} node[edgelabel,  near end] {3} (ai);
		\draw (bi) -- node[edgelabel,  near start] {3} node[edgelabel,  near end] {1} (ui);
		\draw (vk-1) -- node[edgelabel,  near start] {1} node[edgelabel,  near end] {3} (ck);
		\draw (dk) -- node[edgelabel,  near start, scale=0.9] {$(\infty - 1)$} node[edgelabel,  near end] {2} (uk);		
		
		\draw [blue, dashed] (ai) -- node[edgelabel,  near start] {1} node[edgelabel,  near end] {2} (bi);
		\draw [blue, dashed] (a'i) -- node[edgelabel,  near start] {1} node[edgelabel,  near end] {2} (b'i);
		\draw [blue, dashed] (ck) -- node[edgelabel,  near start] {1} node[edgelabel,  near end] {$\infty$} (dk);
		\draw [blue, dashed] (c'k) -- node[edgelabel,  near start] {1} node[edgelabel,  near end] {2} (d'k);
		\draw [red, dotted] (ai) -- node[edgelabel,  near start] {4} node[edgelabel,  near end] {1} (b'i);
		\draw [red, dotted] (a'i) -- node[edgelabel,  near start] {2} node[edgelabel,  near end] {1} (bi);
		\draw [red, dotted] (ck) -- node[edgelabel,  near start] {2} node[edgelabel,  near end] {1} (d'k);
		\draw [red, dotted] (c'k) -- node[edgelabel,  near start] {2} node[edgelabel,  near end] {1} (dk);
		
		\draw [green] (ai) to[out=30,in=150] node[edgelabel, very near start] {2} node[edgelabel, very near end] {2} (dk);
		\draw [green!30!white] ($(dk) +(-0.6,0.6)$) to[out=-30,in=130] (dk);
		\draw [green!30!white] ($(dk) +(-0.9,0.3)$) to[out=-30,in=170] (dk);

		\end{tikzpicture}
\caption{Suppose $x$ occurs in clause $C_i$ (gadget on the left) and $\neg x$ occurs in clause $D_k$ (gadget on the right). The rank $\infty$ on the edge $(d_k,c_k)$ denotes that $c_k$ is $d_k$'s {\em last} choice neighbor and the rank $\infty-1$ on the edge $(d_k,u_k)$ denotes that $u_k$ is $d_k$'s {\em last but one} choice neighbor.} 
\label{newfig2:example}
\end{figure}
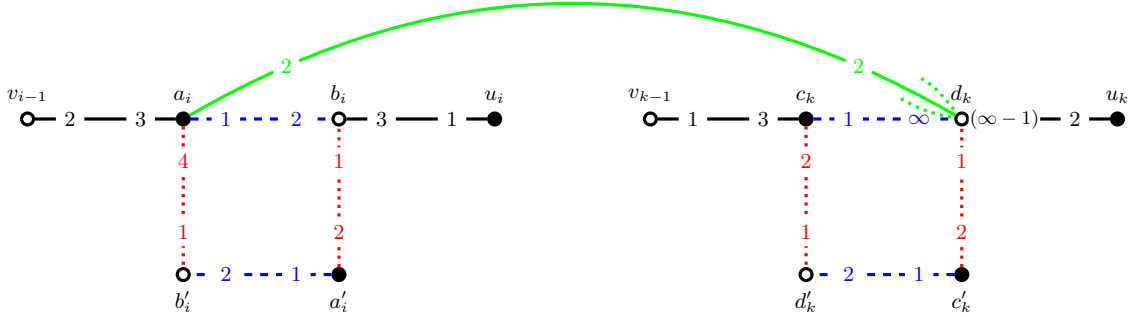

\begin{lemma}
\label{lem:unstable-edges}
		Let $S$ be any stable matching in $G,{\cal P}$. Then 
	\begin{enumerate}
            \item\label{enum:unstable-edges-1} $S$ leaves $s,t$ unmatched, $F\subseteq S$, and $S$ contains no consistency edge. \end{enumerate}
            Also, for every variable $x \in \{X_1,\ldots,X_{2n}\}$, we have:
            \begin{enumerate}
	\setcounter{enumi}{1}
    \item\label{enum:unstable-edges-2} From the gadget of $\neg x$, either (i)~$(c_k,d_k),(c'_k,d'_k) \in S$ or (ii)~$(c_k,d'_k),(c'_k,d_k) \in S$.  

--- If (i) happens, we say that the gadget is in \emph{true state}, else that it is in \emph{false state}. 
	\item\label{enum:unstable-edges-3} From a gadget of $x$ in, say, the $i$-th clause, either the pair of edges (i)~$(a_i,b'_i),(a'_i,b_i)$ or 
			(ii)~$(a_i,b_i),(a'_i,b'_i)$ is in $S$. 

--- If (i) happens, we say that the gadget is in \emph{true state}, else that it is in \emph{false state}. 
	\item\label{enum:unstable-edges-4} If the gadget of $\neg x$ is in true state then all the gadgets of $x$ are in false state. 
		\end{enumerate}
		
			\vspace{-.2cm}

\noindent Conversely, any matching $S$ of $G$ that satisfies 1-4 above is stable.
	\end{lemma}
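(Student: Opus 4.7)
The plan is to establish a ``backbone'' structure first, then analyze each gadget locally, and finally prove the consistency condition between positive and negated occurrences of the same variable. The converse direction will be a direct edge-by-edge stability check.

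First, I would force the basic edges $F\subseteq S$. Since $u_0$ and $v_0$ rank each other first, $(u_0,v_0)$ blocks $S$ unless it lies in $S$; hence $(u_0,v_0)\in S$, which leaves $s$ (whose only neighbor is $u_0$) unmatched. I would then proceed by induction along the chain of clause gadgets: assuming the basic edges up to index $i-1$ belong to $S$, a short case analysis inside the $i$-th gadget shows that any configuration not containing $(u_i,v_i)$ creates a blocking edge---either $(u_i,v_i)$ itself when $v_i$ is unmatched or matched externally, or a blocking pair inside the gadget using the rank-$1$ position that $u_i$ occupies for $b_i$ (resp.\ the rank-$2$ position above $c_k$ for $d_k$). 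Symmetrically $t$ is unmatched, and consequently each literal gadget decouples from its neighbors.

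With the backbone fixed, I would analyze each literal gadget in isolation. For the positive literal gadget of $x$ in clause $C_i$, the unmatched vertices $a_i,b_i,a'_i,b'_i$ induce a $4$-cycle of ``parallel'' edges $(a_i,b_i),(a'_i,b'_i)$ and ``cross'' edges $(a_i,b'_i),(a'_i,b_i)$. Using the symmetric rank pattern displayed in Fig.~\ref{newfig2:example}, the only matchings on these four vertices that leave no blocking pair are the two perfect matchings listed in item~\ref{enum:unstable-edges-3}; any other assignment (in particular one that matches $a_i$ externally to $v_{i-1}$ or $d_k$) leaves $b_i$ or $b'_i$ with an available rank-$1$ neighbor and thus blocks. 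The negated literal gadget is analyzed analogously, yielding item~\ref{enum:unstable-edges-2}; the asymmetry introduced by $d_k$'s exceptional ranks ($c_k$ last, $u_k$ second to last) does not affect the two-configuration conclusion.

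The remaining statements---that no consistency edge belongs to $S$ (completing item~\ref{enum:unstable-edges-1}) and item~\ref{enum:unstable-edges-4}---follow from one decisive observation: in the true state of the $\neg x$ gadget, $d_k$ is matched to $c_k$, which is $d_k$'s \emph{last} choice, so $d_k$ strictly prefers every consistency neighbor $a_i$ to its partner. Since $a_i$ ranks $d_k$ at position $2$ with only $b_i$ above it, the edge $(a_i,d_k)$ blocks $S$ unless $(a_i,b_i)\in S$, which is exactly the false state of the $x$ gadget in $C_i$; this yields item~\ref{enum:unstable-edges-4}. An analogous argument shows that $(a_i,d_k)\in S$ would leave a blocking internal edge in either the $x$-in-$C_i$ or the $\neg x$-in-$D_k$ gadget, so consistency edges are excluded. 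For the converse, I would verify stability edge by edge; the only nontrivial case is again the consistency edges, which are handled precisely by item~\ref{enum:unstable-edges-4}. The main obstacle will be the induction forcing $F\subseteq S$, since the preferences at $v_{i-1}$ and $u_i$ involve all three literal gadgets of a clause simultaneously and require a careful ordering argument to avoid circular reasoning.
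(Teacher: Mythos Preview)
Your approach is correct in spirit but takes a genuinely different route from the paper, and some details of your sketch are inaccurate.

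\textbf{The paper's approach.} The paper avoids your induction entirely by exhibiting a single carefully chosen stable matching $N$ (basic edges, dashed blue on every positive gadget, dotted red on every negative gadget) and observing that every consistency edge, every edge $(v_{i-1},a_i)$, and every edge $(d_k,u_k)$ is labeled $(-,-)$ with respect to $N$. A standard fact then says such edges are \emph{unstable} (belong to no stable matching). Combined with the Rural Hospitals theorem (all $u_i,v_i$ are matched in $N$, hence in every stable matching), this immediately forces $F\subseteq S$ and excludes consistency edges in one stroke. Items~2--4 then follow exactly as you argue.

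\textbf{What your approach buys and costs.} Your direct inductive argument is more elementary in that it does not invoke the $(-,-)$/unstable-edge lemma or Rural Hospitals, but it is considerably messier. In particular, the forward induction you describe stalls at the boundary between positive and negative clauses: for a positive clause $i$, the edge $(u_i,v_i)$ does \emph{not} block when $v_i$ is matched into the next (negative) clause, because $u_i$ ranks $v_i$ last. You would need to run a second induction backward from the $t$-end through the negative clauses and meet the two inductions at $i=m+n$; you hint at this difficulty but do not state the fix.

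\textbf{Concrete errors in your sketch.} The claim that ``$u_i$ occupies the rank-$1$ position for $b_i$'' is wrong: $u_i$ is $b_i$'s \emph{last} choice. The blocking argument when $(u_i,b_{ij})\in S$ actually goes through $b'_{ij}$ and $a'_{ij}$ (forcing $(a'_{ij},b'_{ij})\in S$, leaving $a_{ij}$ to block with $b_{ij}$), not through $u_i$'s rank in $b_i$'s list. Your argument for item~4 and for the converse direction is fine and matches the paper.
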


	Lemma~\ref{lem:unstable-edges} implies that each stable matching can be mapped to a true/false assignment as follows: 
	\begin{itemize}
	    \item For each $x$, if the gadget of $\neg x$ is in true state, then set $x= \false$, else set $x = \true$. 
	\end{itemize}
	Conversely, for any $\true$/$\false$ assignment $A$ to the variables in $\phi$, we define an associated matching $M_A$ as follows. First, include all basic edges in $M_A$. 
	For any $x \in \{X_1,\ldots,X_{2n}\}$: 
	\begin{itemize}
	    \item  If $x = \true$ then set all gadgets corresponding to $x$ in true state, and the gadget corresponding to $\neg x$ in false state. Thus the {\em dotted red} edges (see Fig.~\ref{newfig2:example}) from all these gadgets get added to $M_A$.
	    \item Otherwise, set all gadgets corresponding to $x$ in false state and the gadget corresponding to $\neg x$ in true state. Thus the {\em dashed blue} edges (see Fig.~\ref{newfig2:example}) from all these gadgets get added to $M_A$.
	\end{itemize}
	
	Lemma~\ref{lem:unstable-edges} implies that $M_A$ is stable. The next fact concludes the reduction.
	\begin{lemma}
		\label{cl:stable-no-consistency} Let $S$ be a stable matching in $G,{\cal P}$. If there is an augmenting path $\rho$ in $G_{S}$, then $\rho$ goes from $s$ to $t$, does not use any consistency edge, and passes, in each clause, in exactly one gadget, which is in true state. In particular, the assignment corresponding to $S$ is feasible. Conversely, if $A$ is a feasible assignment, in each clause $i$ there is a gadget $Z_{i,j_i}$ in true state, and there exists an augmenting path in $G_{S}$ that passes through $Z_{i,j_i}$ for all $i$.
	\end{lemma}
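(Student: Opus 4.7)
The plan is to combine the rigid skeleton forced by Lemma~\ref{lem:unstable-edges} with a careful edge-label analysis around each gadget. For the forward direction, by part~\ref{enum:unstable-edges-1} of Lemma~\ref{lem:unstable-edges} the only vertices left unmatched by $S$ are $s$ and $t$, so $\rho$ goes from $s$ to $t$; moreover, since every basic edge $(u_i,v_i)\in F$ lies in $S$ and is the unique matching edge at both $u_i$ and $v_i$, $\rho$ must use each basic edge as a matching edge. Consequently $\rho$ has the shape $s - u_0 - v_0 - P_1 - u_1 - v_1 - \cdots - v_{m+2n} - t$, where each $P_i$ is an alternating subpath from $v_{i-1}$ to $u_i$ that begins and ends with a non-matching edge and whose interior vertices lie in clause $i$'s gadgets.

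Next I would rule out consistency edges and false-state gadgets. Since $\rho$ is simple and the only non-matching edges crossing between gadgets of two distinct clauses are the consistency edges, any such edge used by some $P_i$ would either revisit a vertex (one living in the other clause's gadget, which would also need to be visited by the corresponding $P_j$) or strand the path in a dead-end with no available matching edge; therefore each $P_i$ uses no consistency edge and enters exactly one gadget $Z_{i,j_i}$ of clause $i$. To see that $Z_{i,j_i}$ must be in true state, I would unfold the preferences fixed in Section~\ref{sec:preferences}: in a false-state gadget the matching assigns the entry/exit vertices ($a_i$ and $b_i$, or their negation-gadget analogues) to their top-choice inner neighbors, so their votes on $v_{i-1}$ and $u_i$ are $-$; the preferences at $v_{i-1}$ and $u_i$ are chosen so that the matching-side response is $-$ as well, giving label $(-,-)$ and removing the candidate entry/exit edges from $G_S$. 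Thus $P_i$ must enter a true-state gadget, and combining this with part~\ref{enum:unstable-edges-4} of Lemma~\ref{lem:unstable-edges} shows that in the assignment derived from $S$ the literal $j_i$ is set to $\true$, so clause $i$ is satisfied.

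For the converse, $M_A$ is stable by Lemma~\ref{lem:unstable-edges}; feasibility of $A$ picks in each clause $i$ a satisfied literal $j_i$, whose gadget $Z_{i,j_i}$ is in true state under $M_A$. Within such a gadget there is a canonical alternating subpath $P_i$ from $v_{i-1}$ to $u_i$ traversing the two (red-dotted or blue-dashed) matching edges of $Z_{i,j_i}$; concatenating the edge $(s,u_0)$, all basic edges, the $P_i$'s, and the edge $(v_{m+2n},t)$ yields an alternating path $\rho$ from $s$ to $t$. Since the selected gadgets are vertex-disjoint, $\rho$ is simple, hence $M_A$-augmenting, and a short label check certifies that every edge of $\rho$ lies in $G_{M_A}$.

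The main obstacle is the label bookkeeping in both directions: ruling out $(-,-)$ edges in the forward argument (both for the false-state entry/exit edges and for any ``escape'' through consistency edges) and, dually, certifying that the canonical path $\rho$ avoids $(-,-)$ edges in the converse. Once the preferences from Section~\ref{sec:preferences} are written out, each check reduces to short case-work whose burden lies in enumeration rather than in depth.
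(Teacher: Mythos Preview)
Your approach is essentially the paper's: show $\rho$ runs from $s$ to $t$, rule out consistency edges, deduce each clause subpath lies in a single true-state gadget, and for the converse thread the canonical alternating subpath through one satisfied literal per clause. Two points of divergence are worth flagging.

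First, your assertion that in a false-state gadget \emph{both} the entry edge and the exit edge are labeled $(-,-)$ is not correct. For a positive clause in false state, $(v_{i-1},a_i)$ is indeed $(-,-)$ (both endpoints are with their top choice), but $(b_i,u_i)$ is $(-,+)$ since $u_i$ prefers every $b_{ij}$ to $v_i$. For a negative clause in false state it is the reverse: $(v_{k-1},c_k)$ is $(+,-)$ since $v_{k-1}$ prefers every $c_{kj}$ to $u_{k-1}$, while $(d_k,u_k)$ is $(-,-)$. One $(-,-)$ edge on the unique $v_{i-1}$--$u_i$ route through the gadget still suffices once consistency edges are excluded, so your conclusion survives, but the stated reason does not.

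Second, your ordering is slightly circular: you first assert the shape $s-u_0-v_0-P_1-u_1-\cdots$ (which already presumes $\rho$ visits every $u_i,v_i$) and only afterward argue that each $P_i$ avoids consistency edges. The paper instead rules out consistency edges \emph{first}, via a parity argument: any consistency edge $(a_i,d_k)$ is non-matching and hence must be traversed $d_k\to a_i$ (from clause $k>i$ back to clause $i$); taking the \emph{first} such edge on $\rho$ strands the path in clause~$i$ with $v_{i-1}$ and $u_i$ already visited. Once consistency edges are excluded, the series structure of $G$ forces the claimed shape.
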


	In the rest of the section, we give the missing details from the construction, and prove Lemma~\ref{lem:unstable-edges} and Lemma~\ref{cl:stable-no-consistency}.
	
	\subsection{The preference lists}\label{sec:preferences}
	
	The gadgets corresponding to an occurrence of $x$ in the $i$-th clause and the occurrence of $\neg x$ are given in 
Fig.~\ref{newfig2:example}. Let $D_k$ be the clause that contains the unique occurrence of $\neg x$ in the transformed formula $\phi$.

Note that the labels of vertices $a_i,b_i,a'_i,b'_i$ should depend on $x$; however, for the sake of readability, we omit the dependency on $x$, since it is always clear from the context. Similarly for the labels of vertices $c_k,d_k,c'_k,d'_k$.

We now describe the preference lists of the 4 vertices in the gadget of $x$ in the clause $C_i$. 

\begin{minipage}[c]{0.45\textwidth}
			
			\centering
			\begin{align*}
                          &  a_i : \ b_i \succ d_k \succ v_{i-1} \succ b'_i \qquad\qquad && a'_i : \ b'_i \succ b_i \\
                          &  b_i : \ a'_i \succ a_i \succ u_i  \qquad\qquad && b'_i : \ a_i \succ a'_i \\
			\end{align*}
\end{minipage}

We now describe the preference lists of the 4 vertices in the gadget of $\neg x$.

\begin{minipage}[c]{0.45\textwidth}
			
			\centering
			\begin{align*}
                          &  c_k : \ d_k \succ d'_k \succ v_{k-1}  \qquad\qquad && c'_k : \ d'_k \succ d_k \\
                          &  d_k : \ c'_k \succ a_i \succ a_j \succ \cdots \succ u_k \succ c_k  \qquad\qquad && d'_k : \ c_k \succ c'_k \\
			\end{align*}
\end{minipage}

Here $a_i,a_j,\ldots$ are the occurrences of the $a$-vertex in all the gadgets corresponding to literal $x$ in the formula $\phi$.
That is, the literal $x$ occurs in clauses $i,j,\ldots$ The order among the vertices $a_i,a_j,\ldots$ in $d_k$'s preference list does
not matter.

We now describe the preference lists of vertices $u_i$ and $v_i$. The preference list of $u_0$ is $v_0 \succ s$ and
for $1 \le i \le m+n$, the preference list of $u_i$ is as given on the left (these correspond to {\em positive} clauses $C_i$)
and for $m+n+1 \le i \le m+2n$, the preference list of $u_i$ is as given on the right (these correspond to {\em negative} clauses $D_i$):
\[ u_i: \ \ b_{i1} \succ b_{i2} \succ b_{i3} \succ \underline{v_i} \ \ \ \ \ \ \ \ \ \ \ \ \text{or}\ \ \ \ \ \ \ \ \ \ \ \ u_i: \ \ \underline{v_i} \succ d_{i1} \succ d_{i2},\]
where $b_{ij}$ (similarly, $d_{ij}$) is the $b$-vertex (resp., $d$-vertex) that appears in the gadget corresponding to the $j$-th literal in 
the $i$-th clause. If the $i$-th clause (a positive one) consists of only 2 literals then there is no $b_{i3}$ here. The vertex $v_i$ is underlined.

The preference list of $v_{m+2n}$ is $u_{m+2n} \succ t$ and 
for $1 \le i \le m+n$, the preference list of $v_{i-1}$ is as given on the left (these correspond to {\em positive} clauses $C_i$)
and for $m+n+1 \le i \le m+2n$, the preference list of $v_{i-1}$ is as given on the right (these correspond to {\em negative} clauses $D_i$):
 \[ v_{i-1}: \ \ \underline{u_{i-1}} \succ a_{i1} \succ a_{i2} \succ a_{i3} \ \ \ \ \ \ \ \ \ \ \ \ \text{or}\ \ \ \ \ \ \ \ \ \ \ \ v_{i-1}: \ \ c_{i1} \succ c_{i2}  \succ \underline{u_{i-1}},\]
where $a_{ij}$ (similarly, $c_{ij}$) is the $a$-vertex (resp., $c$-vertex) that appears in the gadget corresponding to the $j$-th literal 
in the $i$-th clause. If the $i$-th clause (a positive one) consists of only 2 literals then there is no $a_{i3}$ here. The vertex $u_{i-1}$
is underlined.
	
	\subsection{Proof of Lemma~\ref{lem:unstable-edges}}
	
	We first derive some useful properties. Let $S$ be the matching given by the union of the basic set and the pair of dashed blue edges from the gadget of every literal. One easily checks that $S$ is stable. Thus $s$ and $t$ are unstable vertices, i.e., they remain unmatched in any stable matching. Suppose an edge $e$ is labeled $(-,-)$ with respect to some stable matching. It is an easy fact~\cite{GI89} that $e$ is an {\em unstable edge}, i.e., no stable matching contains $e$.
The following lemma is based on this fact.

\begin{lemma}
  \label{lem:unstable-edges-aux}
  Let $S$ be any stable matching in $G,{\cal P}$. Then (i)~$S$ does not contain any consistency edge and (ii) $S$ is a superset of the basic set.
\end{lemma}
\begin{proof}
  Consider the matching $N$ given by the union of the basic set, the pair of dashed blue edges from the gadget of every positive literal, and the pair of dotted red edges from the gadget of every negative literal. It is easy to see that $N$ is a stable matching. Every consistency edge is labeled $(-,-)$ with respect to $N$. Thus every consistency edge is unstable, proving (i).
  
  Observe that every edge $(v_{i-1},a_i)$ is labeled $(-,-)$ with respect to $N$. Similarly, every edge $(d_k,u_k)$ is also labeled $(-,-)$ with respect to $N$. Thus no stable
  matching can contain these edges. Recall that all the $u$-vertices and $v$-vertices are stable, so this immediately implies that $v_{i-1}$ is matched to $u_{i-1}$ for all $i \le m+n$ and $u_k$ is matched $v_k$ for all $k \ge m+n+1$. Also recall that the 4 vertices in each literal gadget are stable.
  This implies $u_{m+n}$ and $v_{m+n}$ are also matched to each other in every stable matching, proving (ii). \qed
\end{proof}

We can now conclude the proof of Lemma~\ref{lem:unstable-edges}. Let $S$ be any stable matching in $G$. Statement~\ref{enum:unstable-edges-1} follows from Lemma~\ref{lem:unstable-edges-aux}. Statements~\ref{enum:unstable-edges-2} and~\ref{enum:unstable-edges-3} are shown by statement~\ref{enum:unstable-edges-1} and stability. As for statement~\ref{enum:unstable-edges-4}, if
the dashed blue pair of edges from $\neg x$'s gadget is included in $S$ then the stability of $S$ implies that $S$ has to contain the dashed blue pair of edges from every gadget of $x$---otherwise some consistency edge would be a {\em blocking edge} to $S$. Finally, it is easy to see that any matching that satisfies statements \ref{enum:unstable-edges-1}-\ref{enum:unstable-edges-4} is stable.

	\subsection{Proof of Lemma~\ref{cl:stable-no-consistency}}
	
	We first show the second part of Lemma~\ref{cl:stable-no-consistency}. Let $S$ be a stable matching of $G$. If there is a gadget in true state in each clause, then an $M$-augmenting path starting at $s$ and ending at $t$ in $G_M$ is easily constructed as follows. First, take edge $(s,u_0)$, then all edges $(u_i,v_i)$ and: 
	\begin{itemize}
	\item for each clause $C_i$ with gadget $Z$ set to true, edges $(v_{i-1},a_i), (a_i,b_i'), (b_i',a_i'), (a_i',b_i), (b_i,u_i)$; 
	\item for each clause $D_k$ with gadget $Z$ set to true, edges $(v_{k-1},c_k), (c_k,d_k), (d_k,u_k)$. 
	\end{itemize}
	Last, take edge $(v_{m+2n},t)$. In order to prove the other direction, we start with some auxiliary facts.

\begin{new-claim}
\label{claim0}
All popular matchings in $G$ have the same size and match the same set of vertices. In particular, they leave unmatched only $s$ and $t$.
\end{new-claim}
\begin{proof}
  Consider the matching $M$ given by the union of the basic set with the pair of dashed blue edges from the gadget of every literal.
Note that $M$ is a stable matching in $G$.   We claim that $M$ is also a dominant matching in $G$. Since a dominant (resp. stable) matching is a popular matching of maximum (resp. minimum) size, the first claim follows. The second is then implied by Lemma~\ref{lem:all-same}. 
  Consider the edge $(v_0,a_1)$: this is labeled $(-,-)$ with respect to $M$ since both $v_0$ and $a_1$ prefer
  their partners in $M$ to each other. Thus there is no $s$-$t$ path in the graph $G_{M}$. As $s$ and $t$ are the only unmatched vertices, 
  there is no $M$-augmenting path in $G_{M}$. So $M$ is a dominant matching in $G$ by Theorem~\ref{thm:dominant}. \qed
\end{proof}

\begin{new-claim}
\label{cl:inproof-stable-no-consistency} Let $S$ be a stable matching in $G$ and $\rho$ an augmenting path in $G_{S}$. Then $\rho$ goes from $s$ to $t$ and does not use any consistency edge.
\end{new-claim}
\begin{proof}
By Claim~\ref{claim0}, stable matchings in $G$ leave only 2 vertices unmatched: these are $s$ and $t$. So $\rho$ must go from $s$ to~$t$. Recall that consistency edges cannot be included in $S$, see Lemma~\ref{lem:unstable-edges}. By parity reasons, traversing $\rho$ from $s$ to $t$, a consistency edge can only occur in $\rho$ if it leads $\rho$ back to an earlier clause. That is, a consistency edge $(a_i,d_k)$ has to be traversed in $\rho$ in the direction $d_k \rightarrow a_i$. Let $e = (d_k,a_i)$ be the first consistency edge traversed in $\rho$.
Observe that $\rho$ cannot reach $t$, because nodes $v_{i-1}$ and $u_i$ must have already been traversed by $\rho$. Hence, consistency edges cannot appear in~$\rho$. \qed
\end{proof}

We can now conclude the proof of Lemma~\ref{cl:stable-no-consistency}.
Assume that there is an augmenting path $\rho$ in $G_S$, starting at $s$ and terminating at $t$. Because of Lemma~\ref{cl:stable-no-consistency}, $\rho$ goes from $s$ to $t$, traverses all clauses, and for each clause $i$, there is a path in $G_S$ between $v_{i-1}$ and $u_i$. The literal whose gadget provides this $v_{i-1} \rightarrow u_i$ path is set to $\true$ and thus the $i$-th clause is satisfied. As this holds for each $i$, 
this means that $\phi$ has a satisfying assignment. 

\subsection{Max-size popular matchings}

A non-dominant popular matching trivially exists if the size of a stable matching differs from the size of a dominant matching in an instance. Our next result is tailored for such instances. We will now show that it is NP-hard to decide if $G\new{, \mathcal{P}}$ admits a {\em max-size} popular matching that is not dominant. 

\begin{pr}
    \inp A bipartite graph $G = (A \cup B,E)$ with strict preference lists.\\
	\ques If there is a non-dominant matching among max-size popular matchings in~$G$.
\end{pr}

Given a 3SAT formula $\phi$, we will transform it as described earlier and build the graph $G_{\phi}$. Recall that all popular matchings in $G_{\phi}$ have the same size.
We proved in Theorem~\ref{thm:non-dominant} that it is NP-hard to decide if $G_{\phi}$ admits a popular matching that is not dominant. Consider the graph $H = G_{\phi} \cup \rho$ where $\rho$ is the path $\langle p_0,q_0,p_1,q_1\rangle$ with $p_1$ and $q_0$ being each other's top choices. There are no edges between a vertex in $\rho$ and a vertex in $G_{\phi}$. A max-size popular matching in $H$ consists of the pair of edges $(p_0,q_0), (p_1,q_1)$, and a popular matching in $G_{\phi}$. 

Hence a max-size popular matching in $H$ that is {\em not} dominant consists of $(p_0,q_0), (p_1,q_1)$, and a popular matching in $G_{\phi}$ that is non-dominant. Thus the desired result immediately follows.

\begin{theorem}
  \label{thm:max-size}
Given $G = (A \cup B,E)$ \new{ and $\mathcal{P}$} where in \new{$\mathcal{P}$}, every vertex has a strict ranking over its neighbors, it is NP-complete to decide if $G$ admits a max-size popular matching that is not dominant.
\end{theorem}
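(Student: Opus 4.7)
The plan is to reduce from the problem shown NP-hard in Theorem~\ref{thm:non-dominant}: namely, given the instance $G_\phi$ constructed there, decide whether $G_\phi$ admits a non-dominant popular matching. By Claim~\ref{claim0}, in $G_\phi$ every popular matching leaves unmatched only $\{s,t\}$, so all popular matchings already have the same size and are therefore automatically max-size; in principle this alone reduces the new problem to that of Theorem~\ref{thm:non-dominant}. To exhibit an instance where max-size is a genuine restriction (and popular matchings of several sizes exist), the plan is to augment $G_\phi$ with a small disjoint gadget that enforces the desired size asymmetry.

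Specifically, I would set $H := G_\phi \cupdot \rho$, where $\rho$ is a four-vertex path on new vertices $p_0, q_0, p_1, q_1$ with preferences chosen so that $q_0$'s top choice is $p_1$ and $p_1$'s top choice is $q_0$; the preferences of $p_0$ and $q_1$ are forced since each has a single neighbor. A direct check on $\rho$ alone shows that exactly two matchings in $\rho$ are popular: $N_1 = \{(p_0,q_0),(p_1,q_1)\}$, which is the unique dominant matching and the unique max-size popular matching in $\rho$, and $N_2 = \{(q_0,p_1)\}$, which is popular but not dominant. Thus $\rho$ witnesses two distinct popular sizes.

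Since $G_\phi$ and $\rho$ are vertex-disjoint and share no edge or preference, for any two matchings $M, M'$ of $H$ one has $\Delta_H(M,M') = \Delta_{G_\phi}(M\cap E(G_\phi), M'\cap E(G_\phi)) + \Delta_\rho(M\cap E(\rho), M'\cap E(\rho))$. Standard componentwise reasoning then gives: $M$ is popular (respectively dominant) in $H$ if and only if its restrictions to $G_\phi$ and to $\rho$ are popular (respectively dominant) in the respective components. Consequently every max-size popular matching of $H$ has the form $M_\phi \cup N_1$ where $M_\phi$ is a popular matching of $G_\phi$, and such an $H$-matching is dominant if and only if $M_\phi$ is dominant in $G_\phi$. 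Therefore $H$ admits a non-dominant max-size popular matching if and only if $G_\phi$ admits a non-dominant popular matching, which is NP-hard by Theorem~\ref{thm:non-dominant}.

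NP-membership is routine: guess $M$, certify popularity via the dual witness of Theorem~\ref{thm:witness}, compare $|M|$ with the size of a dominant matching (computable in polynomial time via the reduction to stable matchings in $G'$), and certify non-dominance by exhibiting an $M$-augmenting path in $G_M$, per Theorem~\ref{thm:dominant}. I do not anticipate a hard step: the entire content of the reduction is the componentwise decomposition of popularity and dominance under disjoint union, and the elementary case analysis on the four-vertex gadget $\rho$.
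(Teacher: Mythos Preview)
Your proposal is correct and follows essentially the same approach as the paper: form $H = G_\phi \cupdot \rho$ with the four-vertex path $\langle p_0,q_0,p_1,q_1\rangle$ where $p_1$ and $q_0$ are mutual top choices, and observe that max-size popular matchings of $H$ restrict to $\{(p_0,q_0),(p_1,q_1)\}$ on $\rho$ and to a popular matching on $G_\phi$, so non-dominance on $H$ is equivalent to non-dominance on $G_\phi$. Your explicit componentwise justification and NP-membership sketch are a bit more detailed than the paper's terse argument, but the content is the same.
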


\section{Hardness of finding a stable matching that is dominant and consequences for popular matchings}\label{sec:new}
Given $G = (A \cup B, E)$ \new{and a set of strictly ordered lists $\mathcal{P}$}, we first consider here the problem of deciding if $G$ admits a matching that is {\em both} stable and dominant.

\begin{pr}\label{pr:stable-cap-dominant}
    \inp A bipartite graph $G = (A \cup B,E)$ with strict preference lists.\\
	\ques If there is a stable matching in $G$ that is also dominant.
\end{pr}

In instances where all popular matchings have the same size, such a matching $M$ is desirable as there are no blocking edges with respect to $M$; moreover, $M$ defeats any larger matching in a head-to-head election. 

In Section~\ref{sec:stable-dominant}, we show that the above problem is NP-complete.
We will then use this hardness to show the NP-completeness of deciding if $G$ admits a min-size popular matching that is unstable and also to give a short proof of NP-hardness of the popular roommates problem. 

\subsection{Finding a matching that is both stable and dominant}
\label{sec:stable-dominant}

We know from Theorem~\ref{thm:dominant} that Problem~\ref{pr:stable-cap-dominant} is equivalent to the problem of deciding if there exists a stable matching $M$ such that $M$ has no augmenting path in $G_M$. We will show a simple reduction from 3SAT to this stable matching problem. Note that the graph $G$ here (and hence the reduction) is different from the instance given in Section~\ref{sec:stable}. Given a 3SAT formula $\phi$, we will transform $\phi$ as done in Section~\ref{sec:stable} so that there is exactly one occurrence of $\neg x$ in $\phi$ for every variable $x$. 

Corresponding to the above formula $\phi$, we will construct an instance $G\new{, \mathcal{P}}$ whose high-level picture is shown in Fig.~\ref{newfig3:example}. 
There are two ``unwanted'' vertices $s, t$ along with $u^{\ell}_j,v^{\ell}_j$ for every clause $\ell$ in $\phi$, for $0 \le j \le i$, where $i$ is the number of literals in clause $\ell$, along with one gadget for each literal in every clause. 

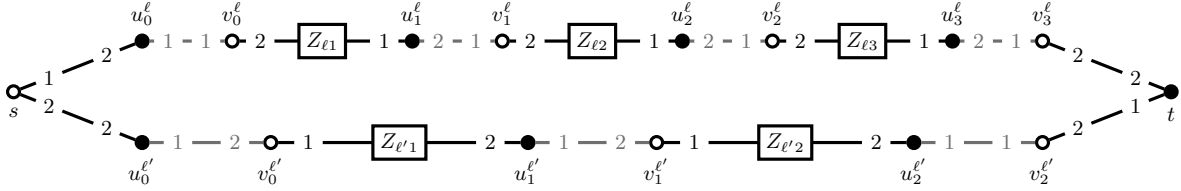
\begin{figure}[ht]
	\centering
		\begin{tikzpicture}[scale=0.9, transform shape, very thick]
		\pgfmathsetmacro{\b}{1.9}
		\pgfmathsetmacro{\d}{1.5}
		
		\node[vvertex, label=below:$s$] (s) at (1*\b, \d*0.5) {};
		\node[uvertex, label=below:$u_0^{\ell'}$] (u0) at (2*\b, 0) {};
		\node[vvertex, label=below:$v_0^{\ell'}$] (v0) at (3*\b, 0) {};
		\node[uvertex, label=below:$u_1^{\ell'}$] (u1) at (5*\b, 0) {};
		\node[vvertex, label=below:$v_1^{\ell'}$] (v1) at (6*\b, 0) {};
		\node[uvertex, label=below:$u_2^{\ell'}$] (u2) at (8*\b, 0) {};
		\node[vvertex, label=below:$v_2^{\ell'}$] (v2) at (9*\b, 0) {};
		
		\node[uvertex, label=above:$u_0^{\ell}$] (u0') at (2*\b, \d) {};
		\node[vvertex, label=above:$v_0^{\ell}$] (v0') at (2.7*\b, \d) {};
		\node[uvertex, label=above:$u_1^{\ell}$] (u1') at (4.1*\b, \d) {};
		\node[vvertex, label=above:$v_1^{\ell}$] (v1') at (4.8*\b, \d) {};
		\node[uvertex, label=above:$u_2^{\ell}$] (u2') at (6.2*\b, \d) {};
		\node[vvertex, label=above:$v_2^{\ell}$] (v2') at (6.9*\b, \d) {};
		\node[uvertex, label=above:$u_3^{\ell}$] (u3') at (8.3*\b, \d) {};
		\node[vvertex, label=above:$v_3^{\ell}$] (v3') at (9*\b, \d) {};
		
		\node[uvertex, label=below:$t$] (t) at (10*\b, \d*0.5) {};
				
		\draw (s) -- node[edgelabel, near start] {2} node[edgelabel, near end] {2} (u0);
		\draw [MyPurple] (u0) -- node[edgelabel, near start] {1} node[edgelabel, near end] {2} (v0);
		\draw (v0) -- node[edgelabel, very near start] {1} node[edgebox] {$Z_{\ell' 1}$} node[edgelabel, very near end] {2} (u1);
		\draw [MyPurple] (u1) -- node[edgelabel, near start] {1} node[edgelabel, near end] {2} (v1);
		\draw (v1) -- node[edgelabel, very near start] {1} node[edgebox] {$Z_{\ell' 2}$} node[edgelabel, very near end] {2} (u2);
		\draw [MyPurple] (u2) -- node[edgelabel, near start] {1} node[edgelabel, near end] {1} (v2);
		\draw (v2) -- node[edgelabel, near start] {2} node[edgelabel, near end] {1} (t);
		
		\draw (s) -- node[edgelabel, near start] {1} node[edgelabel, near end] {2} (u0');
		\draw [MyPurple] (u0') -- node[edgelabel, near start] {1} node[edgelabel, near end] {1} (v0');
		\draw (v0') -- node[edgelabel, very near start] {2} node[edgebox] {$Z_{\ell 1}$} node[edgelabel, very near end] {1} (u1');
		\draw [MyPurple] (u1') -- node[edgelabel, near start] {2} node[edgelabel, near end] {1} (v1');
		\draw (v1') -- node[edgelabel, very near start] {2} node[edgebox] {$Z_{\ell 2}$} node[edgelabel, very near end] {1} (u2');
		\draw [MyPurple] (u2') -- node[edgelabel, near start] {2} node[edgelabel, near end] {1} (v2');
		\draw (v2') -- node[edgelabel, very near start] {2} node[edgebox] {$Z_{\ell 3}$} node[edgelabel, very near end] {1} (u3');
		\draw [MyPurple] (u3') -- node[edgelabel, near start] {2} node[edgelabel, near end] {1} (v3');
		\draw (v3') -- node[edgelabel, near start] {2} node[edgelabel, near end] {2} (t);
		
		\end{tikzpicture}
\caption{The high-level picture of the gadgets corresponding to clauses $\ell$ with three literals, and $\ell'$ with two negated literals in the instance $G\new{, \mathcal{P}}$. The vertices $s$ and $t$ are common to all clauses.}
\label{newfig3:example}
\end{figure}

In Fig.~\ref{newfig3:example}, $Z_{{\ell}j}$ is the gadget corresponding to the $j$-th literal of clause $\ell$, where each clause has 2 or 3 literals. As before,  we have a separate gadget for every occurrence of each literal. That is, there is a separate gadget for each occurrence of $x \in \{X_1,\ldots,X_{2n}\}$ in $\phi$ and another gadget for the unique occurrence of $\neg x$ in $\phi$. 

As done in Section~\ref{sec:stable}, vertices from a gadget corresponding to $x$ are denoted by $a_i,a_i',b_i,b_i'$, while vertices from a gadget corresponding to $\neg x$ are denoted by $c_k,c_k',d_k,d'_k$; however adjacency lists and preferences are different here, see again Fig.~\ref{newfig3:example}. As before, we have a {\em consistency edge} between $x$'s gadget and $\neg x$'s gadget: this is now between the $b$-vertex of $x$'s gadget and $c$-vertex of $\neg x$'s gadget (see Fig.~\ref{newfig4:example}). Again, we postpone the definition of preference lists, and instead state the following lemma, analogous to Lemma~\ref{lem:unstable-edges}. The proof of Lemma~\ref{lem:unstable-edges-bis} is analogous to the proof of Lemma~\ref{lem:unstable-edges} and hence omitted.

\begin{figure}[ht]
	\centering
		\begin{tikzpicture}[scale=0.9, transform shape, very thick]
		\pgfmathsetmacro{\b}{2.3}
		\node[vvertex, label=above:$v^i_0$] (vi-1) at (0, 0) {};
		\node[uvertex, label=above:$a_1$] (ai) at (1*\b, 0) {};
		\node[vvertex, label=above:$b_1$] (bi) at (2*\b, 0) {};
		\node[uvertex, label=above:$u^i_1$] (ui) at (3*\b, 0) {};
		\node[vvertex, label=above:$v^k_1$] (vk-1) at (4*\b, 0) {};
		\node[uvertex, label=above:$c_2$] (ck) at (5*\b, 0) {};
		\node[vvertex, label=above:$d_2$] (dk) at (6*\b, 0) {};
		\node[uvertex, label=above:$u^k_2$] (uk) at (7*\b, 0) {};
		\node[vvertex, label=below:$b'_1$] (b'i) at (1*\b, -\b) {};
		\node[uvertex, label=below:$a'_1$] (a'i) at (2*\b, -\b) {};
		\node[vvertex, label=below:$d'_2$] (d'k) at (5*\b, -\b) {};
		\node[uvertex, label=below:$c'_2$] (c'k) at (6*\b, -\b) {};
		
		\draw (vi-1) -- node[edgelabel,  near start] {2} node[edgelabel,  near end] {2} (ai);
		\draw (bi) -- node[edgelabel,  near start] {4} node[edgelabel,  near end] {1} (ui);
		\draw (vk-1) -- node[edgelabel,  near start] {1} node[edgelabel,  near end] {$\infty$} (ck);
		\draw (dk) -- node[edgelabel,  near start] {2} node[edgelabel,  near end] {2} (uk);		
		
		\draw [blue, dashed] (ai) -- node[edgelabel,  near start] {1} node[edgelabel,  near end] {3} (bi);
		\draw [blue, dashed] (a'i) -- node[edgelabel,  near start] {1} node[edgelabel,  near end] {2} (b'i);
		\draw [blue, dashed] (ck) -- node[edgelabel,  near start] {1} node[edgelabel,  near end] {3} (dk);
		\draw [blue, dashed] (c'k) -- node[edgelabel,  near start] {1} node[edgelabel,  near end] {2} (d'k);
		\draw [red, dotted] (ai) -- node[edgelabel,  near start] {3} node[edgelabel,  near end] {1} (b'i);
		\draw [red, dotted] (a'i) -- node[edgelabel,  near start] {2} node[edgelabel,  near end] {1} (bi);
		\draw [red, dotted] (ck) -- node[edgelabel,  near start] {$\infty-1$} node[edgelabel,  near end] {1} (d'k);
		\draw [red, dotted] (c'k) -- node[edgelabel,  near start] {2} node[edgelabel,  near end] {1} (dk);
		
		\draw [green] (bi) to[out=30,in=150] node[edgelabel, very near start] {2} node[edgelabel, very near end] {2} (ck);
		\draw [green!30!white] ($(ck) +(-0.6,0.6)$) to[out=-30,in=130] (ck);
		\draw [green!30!white] ($(ck) +(-0.9,0.3)$) to[out=-30,in=170] (ck);

		\end{tikzpicture}
\caption{Suppose $x$ occurs as the first literal in the $i$-th clause and $\neg x$ occurs as the second literal in the $k$-th clause. For convenience, we use $a_1,b_1,a'_1,b'_1$ to denote the 4 vertices in this gadget of $x$ and $c_2,d_2,c'_2,d'_2$ to denote the 4 vertices in the gadget of $\neg x$. As before, $\infty$ and $\infty-1$ denote last choice neighbor and last but one choice neighbor, respectively.} 
\label{newfig4:example}
\end{figure}
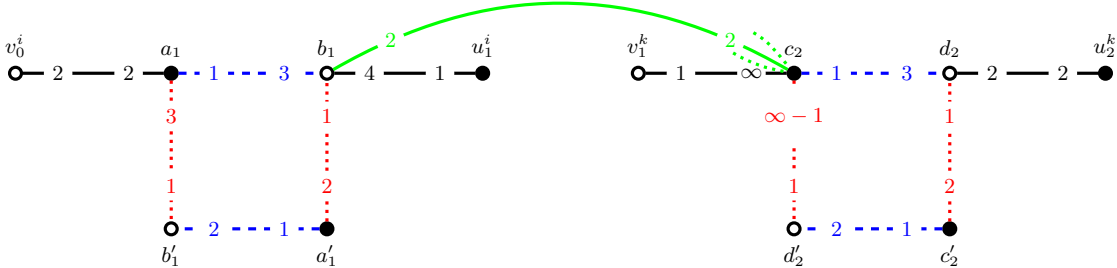

\begin{lemma}
\label{lem:unstable-edges-bis}
		Let $S$ be any stable matching in $G,{\cal P}$. Then 
\begin{enumerate}

\item\label{enumbis:unstable-edges-1} $S$ leaves $s$ and $t$ unmatched and it does not contain any consistency edge, 
while the edges $(u^{\ell}_i,v^{\ell}_i)$ for all clauses $\ell$ and indices $i$ are in $S$. \end{enumerate}
Also, for every variable $x \in \{X_1,\dots, X_{2n}\}$, we have:
\begin{enumerate}
\setcounter{enumi}{1}
\item\label{enumbis:unstable-edges-2} From the gadget of $\neg x$, either the pair (i)~$(c_k,d_k),(c'_k,d'_k)$ or (ii)~$(c_k,d'_k),(c'_k,d_k)$ is in $S$. 

--- If (i) happens, we say that the gadget is in \emph{false state}, else that it is in \emph{true state}. 
\item\label{enumbis:unstable-edges-3} From a gadget of $x$, say, its gadget in the $i$-th clause, either the pair of edges (i)~$(a_i,b_i),(a'_i,b'_i)$ or 
(ii)~$(a_i,b'_i),(a'_i,b_i)$ is in $S$.

--- If (i) happens, we say that the gadget is in \emph{true state}, else that it is in \emph{false state}. 
\item\label{enumbis:unstable-edges-4} If the gadget of $\neg x$ is in true state then all the gadgets of $x$ are in false state. \end{enumerate}
		

\noindent Conversely, any matching $S$ of $G$ that satisfies 1-4 above is stable.
	\end{lemma}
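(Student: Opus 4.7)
The plan is to mirror, step by step, the proof of Lemma~\ref{lem:unstable-edges}, adapting it to the new gadget and preference structure from Fig.~\ref{newfig4:example}. First I would exhibit a specific stable matching of $G$, namely $M = F \cup \bigcup_{\text{gadgets}} \{\text{dashed blue pair}\}$, i.e., the union of all basic edges $(u^{\ell}_i,v^{\ell}_i)$ with the pairs $(a_i,b_i),(a'_i,b'_i)$ and $(c_k,d_k),(c'_k,d'_k)$ from every gadget. Verifying stability of $M$ is routine edge-by-edge: inside each gadget the two dashed blue edges are top choices, the remaining intra-gadget edges have a worse rank for at least one endpoint; the consistency edge $(b_i,c_k)$ is not blocking because $c_k$ prefers $d_k$ (rank $1$) to $b_i$ (rank $2$); and all ``backbone'' edges are no threat since $u^{\ell}_i,v^{\ell}_i$ are matched to their rank-$1$ neighbours in $M$. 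In particular $s$ and $t$ are unmatched in $M$, so by the Rural Hospitals Theorem (Section~\ref{prelims}), they are unmatched in every stable matching of $G$.

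Second, I would build a companion stable matching $N$ that labels the ``path'' edges we want to forbid as $(-,-)$. Concretely, replace the dashed blue pair by the dotted red pair $(a_i,b'_i),(a'_i,b_i)$ in every positive-literal gadget, and also by $(c_k,d'_k),(c'_k,d_k)$ in every negative-literal gadget. Again, one checks stability directly. With respect to $N$, the consistency edge $(b_i,c_k)$ is $(-,-)$ (both $b_i$ and $c_k$ strictly prefer their $N$-partners), each edge $(v^{\ell}_{i-1},a_i)$ and $(b_i,u^{\ell}_i)$ receives a $(-,-)$ label in positive clauses, and analogously $(v^{\ell}_{i-1},c_k)$ and $(d_k,u^{\ell}_i)$ receive a $(-,-)$ label in negative clauses. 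Invoking the standard fact that any edge labeled $(-,-)$ by some stable matching is an unstable edge~\cite{GI89}, we conclude statement~\ref{enumbis:unstable-edges-1}: no stable matching uses a consistency edge, and combined with the fact that $s,t$ and the four vertices of every literal gadget are stable, the only way to match $u^{\ell}_i$ and $v^{\ell}_i$ in every stable matching is via the basic edges, so $F \subseteq S$.

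Third, since $F\subseteq S$ and consistency edges are excluded, the restriction of $S$ to each gadget is a perfect matching of its four internal vertices using only the four intra-gadget edges; a direct enumeration gives exactly the two options listed in statements~\ref{enumbis:unstable-edges-2} and~\ref{enumbis:unstable-edges-3}. For statement~\ref{enumbis:unstable-edges-4} the key observation is the following stability argument on the consistency edge $(b_i,c_k)$. If $\neg x$'s gadget is in true state, then $S(c_k)=d'_k$, which is $c_k$'s $(\infty{-}1)$-th choice; hence $c_k$ strictly prefers $b_i$ (rank $2$) to $S(c_k)$. For $(b_i,c_k)$ not to block $S$, the vertex $b_i$ must prefer $S(b_i)$ to $c_k$ (rank $2$), which rules out $S(b_i)=a_i$ (rank $3$) and forces $S(b_i)=a'_i$, i.e., the gadget of $x$ containing $b_i$ is in false state. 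Since this applies to every occurrence of $x$, all gadgets of $x$ must be in false state.

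Finally, for the converse, let $S$ be any matching satisfying \ref{enumbis:unstable-edges-1}--\ref{enumbis:unstable-edges-4}; I would check that no edge blocks $S$. Edges internal to a gadget are non-blocking by inspection of the four possibilities. Basic-to-gadget edges such as $(v^{\ell}_{i-1},a_i)$, $(b_i,u^{\ell}_i)$ and their negative-clause analogues are non-blocking because the $u^{\ell}_i,v^{\ell}_i$ vertices are matched to their rank-$1$ neighbour via $F$. The only remaining concern is a consistency edge $(b_i,c_k)$, which is blocking only if $c_k$ prefers $b_i$ to $S(c_k)$ \emph{and} $b_i$ prefers $c_k$ to $S(b_i)$; the first condition needs $\neg x$ to be in true state, and statement~\ref{enumbis:unstable-edges-4} then guarantees $S(b_i)=a'_i$, ranked better than $c_k$ for $b_i$, so the edge cannot block. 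The main obstacle here is purely bookkeeping: one must carefully read off ranks from Fig.~\ref{newfig4:example} and handle the asymmetric preferences induced by the $\infty$ and $\infty{-}1$ ranks in the $\neg x$ gadgets; once this is done, each step is a routine case analysis following the template of Lemma~\ref{lem:unstable-edges}.
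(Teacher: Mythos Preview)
Your high-level plan is exactly what the paper intends (it says the proof is ``analogous to Lemma~\ref{lem:unstable-edges}''), but the execution contains several incorrect label computations that break the argument. First, your companion matching $N$ (all dotted red) does \emph{not} label the consistency edge $(b_i,c_k)$ as $(-,-)$: in $N$ we have $N(c_k)=d'_k$, which is $c_k$'s second-to-last choice, so $c_k$ strictly prefers $b_i$ (rank~$2$) to $d'_k$ and votes $+$; the edge is $(-,+)$, not $(-,-)$. Second, the edges $(b_i,u^{\ell}_i)$ in positive clauses and $(v^{\ell}_{j-1},c_j)$ in negative clauses can \emph{never} be $(-,-)$ with respect to any matching containing $F$, because $b_i$ is $u^{\ell}_i$'s top choice and $c_j$ is $v^{\ell}_{j-1}$'s top choice; so your claim that they are $(-,-)$ with respect to $N$ is false. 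Third, and relatedly, your repeated assertion that ``$u^{\ell}_i,v^{\ell}_i$ are matched to their rank-$1$ neighbours in $M$'' is wrong for the same reason, which undermines your stability check of $M$ and your converse argument.

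The fix is to use different witness matchings for different edge families and to drop the claim for edges that are never $(-,-)$. The all-blue matching $M$ makes every $(v^{\ell}_{j-1},a_j)$ in positive clauses $(-,-)$ (since $a_j$ is then matched to $b_j$, rank~$1$); the all-red matching makes every $(d_j,u^{\ell}_j)$ in negative clauses $(-,-)$; and the mixed matching with positive gadgets red and negative gadgets blue makes every consistency edge $(b_i,c_k)$ $(-,-)$. Unstability of these three edge types already forces $F\subseteq S$: in a positive clause, knowing each $(v^{\ell}_{j-1},a_j)$ is unstable pins each $v^{\ell}_{j-1}$ to $u^{\ell}_{j-1}$ (and $v^{\ell}_r$ to $u^{\ell}_r$ via $t$); in a negative clause, knowing each $(d_j,u^{\ell}_j)$ is unstable pins each $u^{\ell}_j$ to $v^{\ell}_j$ (and $u^{\ell}_0$ to $v^{\ell}_0$ via $s$). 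For the converse, the edges $(b_j,u^{\ell}_j)$ and $(v^{\ell}_{j-1},c_j)$ are indeed non-blocking for any $S$ satisfying 1--4, but not for the reason you give: rather, $b_j$ always prefers its $S$-partner (rank $\le 3$) to $u^{\ell}_j$ (rank~$4$), and $c_j$ always prefers its $S$-partner to $v^{\ell}_{j-1}$ (its last choice).
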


\noindent Corresponding to any stable matching $M$, define a $\true$/$\false$ assignment $A_M$ as follows:
\begin{itemize}
\item If the dashed blue pair $(c_k,d_k)$ and $(c'_k,d'_k)$ from $\neg x$'s gadget belongs to $M$ then 
$A_M$ sets $x$ to $\true$; else $A_M$ sets $x$ to $\false$.
\end{itemize}

Conversely, for any $\true$/$\false$ assignment $A$ to the variables in $\phi$, we define an associated matching $M_A$ as follows. For any $x \in \{X_1,\ldots,X_{2n}\}$:
\begin{itemize}
    \item If $x = \true$, then set every gadget corresponding to $x$ in true state, and the gadget corresponding to $\neg x$ in false state.
    Thus the {\em dashed blue} edges (see Fig.~\ref{newfig4:example}) from all these gadgets get added to $M_A$. 
    \item Otherwise, set every gadget corresponding to $x$ in false state, and the gadget corresponding to $\neg x$ in true state. 
    Thus the {\em dotted red} edges (see Fig.~\ref{newfig4:example}) from all these gadgets get added to $M_A$.
\end{itemize}
Finally, include the edges $(u^{\ell}_i,v^{\ell}_i)$ for all $\ell$ and $i$.
Using Lemma~\ref{lem:unstable-edges-bis}, it is easy to see that $M_A$ is a stable matching.

Our goal is to show that $\phi$ is satisfiable if and only if $G$ has a stable matching $S$ without an $S$-augmenting path in $G_S$. This is achieved by the following lemma, whose proof is given at the end of this section.

	\begin{lemma}
		\label{cl:stable-no-consistency-bis} If there is a stable matching $M$ such that $G_M$ has no augmenting path, then $A_M$ satisfies $\phi$. Conversely, if there exists a satisfiable assignment $A$, then there is no augmenting path with respect to $M_A$ in $G_{M_A}$.
	\end{lemma}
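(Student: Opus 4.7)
I plan to prove the two implications separately, relying on the structural characterization of stable matchings given by Lemma~\ref{lem:unstable-edges-bis} and on a direct inspection of the preference lists.

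For the forward direction I argue by contrapositive. Assume $A_M$ fails to satisfy some clause $\ell$. Then every literal of $\ell$ is false under $A_M$ and, by the correspondence between $A_M$ and gadget states, every gadget $Z_{\ell j}$ of $\ell$ is in false state. I exhibit an explicit $M$-augmenting path from $s$ to $t$ that uses only the basic edges and the gadgets of $\ell$. Within each gadget the alternating sub-path from $v_{j-1}^\ell$ to $u_j^\ell$ is $v_{j-1}^\ell \to a_j \to b_j' \to a_j' \to b_j \to u_j^\ell$ for a positive literal and $v_{j-1}^\ell \to c_j \to d_j \to u_j^\ell$ for a negative one. A direct check against the preference lists shows that every non-matching edge along this path is labeled $(+,-)$ or $(-,+)$ but never $(-,-)$, so the whole path lies in $G_M$.

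For the reverse direction, suppose $A$ satisfies $\phi$ and, for contradiction, that $G_{M_A}$ admits an $M_A$-augmenting path $\rho$. By Lemma~\ref{lem:unstable-edges-bis} only $s$ and $t$ are unmatched, so $\rho$ runs from $s$ to $t$. The key observation is that a gadget in true state contains a $(-,-)$ edge blocking any alternating sub-path that enters the gadget at $v_{j-1}^\ell$ and exits at $u_j^\ell$: for a positive literal in true state the blocking edge is $(v_{j-1}^\ell, a_j)$, since $a_j$ prefers its match $b_j$ to $v_{j-1}^\ell$ and $v_{j-1}^\ell$ prefers its match $u_{j-1}^\ell$; for a negative literal in true state it is $(d_j, u_j^\ell)$, by an analogous calculation. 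Because $A$ satisfies every clause, each clause has at least one gadget in true state, so the straight path $s \to u_0^\ell \to v_0^\ell \to \cdots \to v_i^\ell \to t$ through any single clause $\ell$ is blocked.

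The main obstacle is to rule out that $\rho$ side-steps a blocked gadget by using a consistency edge $(b_j, c_k)$ to jump between clauses. I plan to adapt the parity and structural argument of Claim~\ref{cl:inproof-stable-no-consistency}. A short case analysis shows that a consistency edge lies in $G_{M_A}$ only when exactly one of its two endpoint-gadgets is in true state: the both-true configuration is excluded by item~\ref{enumbis:unstable-edges-4} of Lemma~\ref{lem:unstable-edges-bis}, and the both-false configuration yields label $(-,-)$. In either of the two remaining configurations the endpoint-gadget that is in true state is itself blocked at one of its extremities by the $(-,-)$ edge identified above; tracing the forced alternating continuation of $\rho$ after crossing the consistency edge, one verifies that the only options are to revisit a vertex already in $\rho$ (impossible for a simple augmenting path) or to exit through another consistency edge---but every $\neg x$-gadget has a single $c$-vertex, so this would force revisiting $c_k$. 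Hence $\rho$ can use no consistency edge, $\rho$ must traverse a single clause, and the straight-path blocking argument closes the contradiction.
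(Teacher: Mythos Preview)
Your forward direction is essentially identical to the paper's: when $A_M$ falsifies clause $\ell$, all gadgets of $\ell$ are in false state, and the explicit $s$--$t$ path you write down through the red (positive) or blue (negative) edges of those gadgets is exactly the augmenting path the paper exhibits.

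The reverse direction has a real gap. Your case analysis on a consistency edge $(b_i,c_k)$ only treats what happens \emph{after} $\rho$ crosses into the true-state endpoint gadget; you then correctly observe that the forced continuation dead-ends. But you never rule out the other direction of crossing, i.e., $\rho$ arriving at the consistency-edge vertex of the \emph{true} gadget via its matching edge and then crossing into the \emph{false} gadget. In that direction $\rho$ would \emph{not} be stuck: for instance, in the configuration $x=\mathsf{true}$, if $\rho$ could reach $b_i$ via $(a_i,b_i)$ and cross to $c_k$, it would continue $c_k\to d_k\to u_k^r\to v_k^r\to\cdots$ unimpeded. What actually prevents this is a reachability argument you omit: the $(-,-)$ edge you identified at the extremity of the true gadget also blocks the only non-cyclic alternating path \emph{to} that gadget's consistency vertex via matching (e.g., reaching $b_i$ via $(a_i,b_i)$ would force the cycle $b_i\to a'_i\to b'_i\to a_i\to b_i$). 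Once you add this step, your argument becomes a clean uniform treatment and in fact slightly more symmetric than the paper's, which instead traces $\rho$ from $s$, splits on whether the first clause entered is positive or negative, and for the positive case observes that $b_i$ is reachable only when its $x$-gadget is in false state, forcing the $\neg x$-gadget to be in true state so that the continuation $c_k\to d'_k\to c'_k\to d_k$ dead-ends.

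Finally, your plan to ``adapt the parity argument of Claim~\ref{cl:inproof-stable-no-consistency}'' is a false lead: that claim belongs to the construction of Section~\ref{sec:stable}, where clauses are in \emph{series} and a consistency edge would send $\rho$ back to an already-traversed clause. Here the clauses are in \emph{parallel} between $s$ and $t$, so no such parity obstruction exists; the argument you actually sketch is (rightly) a local structural one, not a parity one.
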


We can therefore conclude the following.

\begin{theorem}
  \label{thm:stable-and-dominant}
Given $G = (A \cup B,E),{\cal P}$, it is NP-complete to decide if $G$ admits a matching that is both stable and dominant.
\end{theorem}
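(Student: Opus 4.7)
The theorem follows by combining Theorem~\ref{thm:dominant} with Lemma~\ref{cl:stable-no-consistency-bis}, so the plan has three pieces: an NP-membership check, the reduction skeleton, and the proof of the lemma. Membership in NP is immediate: given a candidate $M$, we verify stability by scanning each edge for a blocking pair, and we verify that $G_M$ contains no $M$-augmenting path by a standard alternating BFS from the unmatched vertices. By Theorem~\ref{thm:dominant} the latter certifies that $M$ is also dominant, and both checks run in polynomial time.

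For NP-hardness, I reduce from 3SAT using the instance $G$ built in this section. By Theorem~\ref{thm:dominant}, a stable $M$ is dominant iff $G_M$ has no $M$-augmenting path, so the decision problem is equivalent to asking whether $G$ admits a stable matching $M$ with no augmenting path in $G_M$. Lemma~\ref{cl:stable-no-consistency-bis} provides exactly this equivalence in terms of $\phi$: the forward direction maps such an $M$ to a satisfying assignment $A_M$, and the converse builds $M_A$ from a satisfying $A$. Gluing the two directions yields the reduction $\phi \in \textnormal{3SAT} \iff G$ has a stable-and-dominant matching, proving the theorem.

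The substantive content therefore sits in Lemma~\ref{cl:stable-no-consistency-bis}, which I would prove by a gadget-level analysis. By Lemma~\ref{lem:unstable-edges-bis}, in any stable matching the only unmatched vertices are $s$ and $t$, so every $M$-augmenting path in $G_M$ goes from $s$ to $t$; moreover no consistency edge lies in $M$, and a parity argument in the spirit of Claim~\ref{cl:inproof-stable-no-consistency}, using the direction of the consistency edge $(b_i,c_k)$ within the two parallel rails, rules them out of any such augmenting path as well. Hence any $M$-augmenting path enters exactly one clause $\ell$ and threads each gadget $Z_{\ell,j}$ by an alternating sub-path from $v^\ell_{j-1}$ to $u^\ell_j$. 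A direct inspection of the preference lists then establishes the key dichotomy: when $Z_{\ell,j}$ is in \emph{true} state the entry edge $(v^\ell_{j-1},a_i)$ (for a positive literal) or an analogous internal edge (for a negative literal) is labeled $(-,-)$ with respect to $M$ and is thus absent from $G_M$, blocking every $v^\ell_{j-1}\to u^\ell_j$ sub-path; whereas in \emph{false} state the obvious alternating chain through the four gadget vertices sits inside $G_M$.

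Combining these facts, an $s$-$t$ augmenting path through clause $\ell$ exists iff all gadgets of $\ell$ are in false state, which by the correspondence underlying Lemma~\ref{lem:unstable-edges-bis} is exactly the condition that $A_M$ leaves $\ell$ unsatisfied. The converse direction is easier: for a satisfying $A$, each clause has at least one literal set true and hence at least one gadget of $M_A$ in true state, which blocks any candidate $s$-$t$ path through that clause, so $G_{M_A}$ contains no $M_A$-augmenting path and $M_A$ is both stable (Lemma~\ref{lem:unstable-edges-bis}) and dominant (Theorem~\ref{thm:dominant}). The main obstacle is the careful preference-list bookkeeping that certifies the $(-,-)$ labeling in true state and its absence in false state for both positive- and negative-literal gadgets, including the edges to $v^\ell_{j-1}$ and $u^\ell_j$; but the structure is entirely analogous to the analysis already carried out for Theorem~\ref{thm:non-dominant} and reduces to a finite case check once the preference lists (mirroring those of Section~\ref{sec:stable} with the literal/state roles appropriately flipped for negative gadgets) are written down.
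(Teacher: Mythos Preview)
Your overall plan matches the paper's: NP membership via direct verification, NP-hardness via the 3SAT reduction of this section, and the equivalence supplied by Lemma~\ref{cl:stable-no-consistency-bis} together with Theorem~\ref{thm:dominant}. The dichotomy you state (a gadget in true state blocks the local $v^\ell_{j-1}\to u^\ell_j$ sub-path via a $(-,-)$ edge, while in false state the sub-path survives) is also the correct one.

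There is, however, a genuine gap in your handling of consistency edges. You invoke ``a parity argument in the spirit of Claim~\ref{cl:inproof-stable-no-consistency}'' to rule them out of any augmenting path. That claim worked in Section~\ref{sec:stable} only because the clauses there are arranged in \emph{series}: any $s$--$t$ path must traverse every clause in order, so a consistency edge forces the path back to an already-visited segment. In the present construction the clauses are arranged in \emph{parallel} (each clause has its own $s$--$t$ strand), so nothing of the sort applies: parity alone permits a path to enter clause~$\ell$, reach some $b_i$, take the consistency edge $(b_i,c_k)$ into the gadget of $\neg x$ sitting in a \emph{different} clause~$r$, and attempt to reach $t$ from there. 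Your ``converse direction is easier'' paragraph therefore does not close the argument.

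The paper's actual proof of the converse direction of Lemma~\ref{cl:stable-no-consistency-bis} handles this differently: it shows the path \emph{gets stuck} after using a consistency edge. If $b_i$ is reachable along the clause-$\ell$ prefix, then $(a_i,b'_i),(a'_i,b_i)\in M_A$, hence $x$ is $\false$ in $A$, hence the dotted red pair $(c_k,d'_k),(c'_k,d_k)$ is in $M_A$. After $(b_i,c_k)$ the path is forced through $c_k$--$d'_k$--$c'_k$--$d_k$ (since $c'_k,d'_k$ have degree~2), and at $d_k$ the only remaining exit $(d_k,u^r_k)$ is labeled $(-,-)$ and thus absent from $G_{M_A}$. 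This dead-end argument, not a parity argument, is what is needed here; once you supply it, the rest of your outline is complete.
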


\subsubsection{Preference lists}\label{sec:hardn2-lists}

We now describe the preference lists of the 4 vertices $a_1,b_1,a'_1,b'_1$ in the gadget of $x$ in the $i$-th clause. 
We assume $x$ to be the first literal in this clause.
Note that the consistency edge now exists between the $b$-vertex in $x$'s gadget and the $c$-vertex in $\neg x$'s gadget 
(see Fig.~\ref{newfig4:example}).

\begin{minipage}[c]{0.45\textwidth}
			
			\centering
			\begin{align*}
                          &  a_1 : \ b_1 \succ v^i_0 \succ b'_1 \qquad\qquad && a'_1 : \ b'_1 \succ b_1 \\
                          &  b_1 : \ a'_1 \succ c_k \succ a_1 \succ u^i_1  \qquad\qquad && b'_1 : \ a_1 \succ a'_1 \\
			\end{align*}
\end{minipage}

We now describe the preference lists of the 4 vertices $c_2,d_2,c'_2,d'_2$ in the gadget of $\neg x$ (see Fig.~\ref{newfig4:example}).
We assume $\neg x$ to be the second literal in this clause (let this be the $k$-th clause).

\begin{minipage}[c]{0.45\textwidth}
			
			\centering
			\begin{align*}
                          &  c_2 : \ d_2 \succ b_i \succ b_j \succ \cdots \succ d'_2 \succ v^k_1  \qquad\qquad && c'_2 : \ d'_2 \succ d_2\\
                          &  d_2 : \ c'_2  \succ u^k_2 \succ c_2  \qquad\qquad && d'_2 : \ c_2 \succ c'_2 \\
			\end{align*}
\end{minipage}

Here $b_i,b_j,\ldots$ are the occurrences of the $b$-vertex in all the gadgets corresponding to literal $x$ in the formula $\phi$.
The order among these vertices in $c_2$'s preference list is not important.

We now describe the preference lists of vertices $u^{\ell}_i$ and $v^{\ell}_i$ for $0 \le i \le r$ in clause~$\ell$,
where $r \in \{2,3\}$ is the number of literals in this clause. 
The preference list of $u^{\ell}_0$ is $v^{\ell}_0 \succ s$
and the preference list of $v^{\ell}_r$ is $u^{\ell}_r \succ t$.

Let $1 \le j \le r$ and let $a_j,b_j,a'_j,b'_j$ be the 4 vertices in the gadget of the $j$-th literal in this clause if it is a positive clause and let $c_j,d_j,c'_j,d'_j$ be the 4 vertices in the gadget of the $j$-th literal in this clause if it is a negative clause. The preference list of $u^{\ell}_j$ is as given on the left (resp. right) for positive (resp. negative) clauses.
\[ u^{\ell}_j: \ \ b_j \succ v^{\ell}_j \ \ \ \ \ \ \ \ \ \ \ \ \text{or}\ \ \ \ \ \ \ \ \ \ \ \ u^{\ell}_j: \ \ v^{\ell}_j \succ d_j.\]
The preference list of $v^{\ell}_{j-1}$ is as given on the left (resp. right) for positive (resp. negative) clauses.
\[ v^{\ell}_{j-1}: \ \ u^{\ell}_{j-1} \succ a_j \ \ \ \ \ \ \ \ \ \ \ \ \text{or}\ \ \ \ \ \ \ \ \ \ \ \ v^{\ell}_{j-1}: \ \ c_j \succ u^{\ell}_{j-1}.\]

Observe that $s$ and $t$ are the last choices for each of their neighbors. The preferences of $s$ and $t$ are not relevant and it is easy to see the following.

\begin{new-claim}\label{cl:hardn2} Both $s$ and $t$ are unstable vertices. 
\end{new-claim}

\subsubsection{Proof of Lemma~\ref{cl:stable-no-consistency-bis}}\label{sec:hardn2-secondlemma}

Take any stable matching $M$. Suppose $A_M$ does not satisfy, say, the $r$-th clause. We show that $G_M$ has an augmenting path with respect to $M$, concluding the proof of one direction. Note that all gadgets in the $r$-th clause are in false state in $M$. Construct the augmenting path in $G_M$ follows:
\begin{itemize}
\item Let the $r$-th clause be a positive clause $x \vee y \vee z$. Let $a_1,b_1,a'_1,b'_1$ be the 4 vertices in $x$'s gadget,
  $a_2,b_2,a'_2,b'_2$ be the 4 vertices in $y$'s gadget, and  $a_3,b_3,a'_3,b'_3$ be the 4 vertices in $z$'s gadget. Thus we have the
  following augmenting path with respect to $M$:
  \[s - (u^r_0,v^r_0) - {\color{red}(a_1,b'_1)} - {\color{red}(a'_1,b_1)} - (u^r_1,v^r_1) - {\color{red}(a_2,b'_2)} - {\color{red}(a'_2,b_2)} - (u^r_2,v^r_2) - {\color{red}(a_3,b'_3)} - {\color{red}(a'_3,b_3)} - (u^r_3,v^r_3) - t.\]

\item Let the $r$-th clause be a negative clause $\neg x \vee \neg y$. So both $x$ and $y$ are set to $\true$ and
  $M$ contains the dashed blue pair of edges from $\neg x$'s gadget and also from $\neg y$'s gadget.

  Let $c_1,d_1,c'_1,d'_1$ be the 4 vertices in $\neg x$'s gadget and let $c_2,d_2,c'_2,d'_2$ be the 4 vertices in $\neg y$'s gadget.
  Thus we have the following augmenting path with respect to $M$:
  \[s - (u^r_0,v^r_0) - {\color{blue}(c_1,d_1)} - (u^r_1,v^r_1) - {\color{blue}(c_2,d_2)} - (u^r_2,v^r_2) - t.\] 
\end{itemize}

Conversely, assume that $\phi$ has a satisfiable assignment $A$. 
We show that there is no $M_A$-augmenting path in $G_{M_A}$.
Let $\ell$ be any positive clause. Suppose $A$ sets the $j$-th literal in this clause to $\true$. Let $a_j,b_j,a'_j,b'_j$ be the
4 vertices in the gadget corresponding to the $j$-th literal. So $(a_j,b_j) \in M_A$ and hence there is no edge $(v^{\ell}_{j-1},a_j)$
in $G_{M_A}$. Thus there is no alternating path between $s$ and $t$ such that all the intermediate vertices on this path correspond to the $\ell$-th clause.

However there are consistency edges jumping across clauses---so there may be an $M_A$-augmenting path that begins with vertices
corresponding to the $\ell$-th clause and then uses a consistency edge. Let $\rho$ be an $M_A$-alternating path in $G_{M_A}$ with $s$ as an
endpoint and let $(s,u^{\ell}_0)$ be the first edge in~$\rho$. So the prefix of $\rho$ consists of vertices that belong to the $\ell$-th clause.

Let $b_i$ be the last vertex of the $\ell$-th clause in this prefix of $\rho$. Let $a_i,b_i,a'_i,b'_i$ be the 4 vertices in the gadget of $b_i$
(let this correspond to variable $x$) in the $\ell$-th clause. Observe that $(a_i,b'_i)$ and $(a'_i,b_i)$ are in $M_A$---otherwise $b_i$ is
not reachable from $u^{\ell}_0$ in $G_{M_A}$ via a path of vertices in the $\ell$-th clause.

Assume the consistency edge $(b_i,c_k)$ belongs to $\rho$, where $c_k$ is a vertex in the gadget of $\neg x$.
Let $c_k,d_k,c'_k,d'_k$ be the vertices in the gadget of $\neg x$ and suppose this literal occurs in the $r$-th clause.
Since the dotted red pair in $x$'s gadget in the $\ell$-th clause is in $M_A$, the dotted red pair $(c_k,d'_k)$ and $(c'_k,d_k)$ has to be in $M_A$.
Since $c'_k$ and $d'_k$ are degree~2 vertices, $\rho$ has to contain the subpath $c_k - d'_k - c'_k - d_k$. However $\rho$ is now stuck at $d_k$ in the
graph $G_{M_A}$. The path cannot go back to $c_k$. There is no edge between $d_k$ and $u^r_k$ in $G_{M_A}$ since this is a $(-,-)$ edge as
both  $d_k$ and $u^r_k$ prefer their
respective partners in $M_A$ to each other. Thus the alternating path $\rho$ has to terminate at $d_k$.

The case when $\ell$ is a negative clause is even simpler since in this case $\rho$ cannot leave the vertices of the $\ell$-th clause using a consistency
edge. We know that the assignment $A$ sets some literal in the $\ell$-th clause to $\true$: let this be the $k$-th literal,
so $(c'_k,d_k) \in M_A$ and hence there is no edge $(d_k,u^{\ell}_k)$ in $G_{M_A}$. Thus there is no $M_A$-augmenting path in $G_{M_A}$ in this case also, concluding the proof. \qed

\subsection{Min-size popular matchings}
\label{sec:min-size}
In this section we investigate the counterpart of Theorem~\ref{thm:max-size}, i.e., the complexity of determining if $G = (A\cup B,E)$ admits a
{\em min-size} popular matching that is not stable. 

\begin{pr}
    \inp A bipartite graph $G = (A \cup B,E)$ \new{with strict preference lists}.\\
	\ques If there is an unstable matching among min-size popular matchings in~$G$.
\end{pr}

Given a 3SAT formula $\phi$, we transform it as described in Section~\ref{sec:stable} and build the graph $G$ as described in
Section~\ref{sec:stable-dominant}.  We now augment the bipartite graph $G$ into bipartite graph $H$ as follows:
\begin{itemize}
\item Add a new vertex $w$, which is adjacent to each $d'$-vertex in $\neg x$'s gadget for every variable $x \in\{X_1,\ldots,X_{2n}\}$.

\item Add a square $\langle t,t',r',r\rangle$ at the $t$-end of the graph (see Fig.~\ref{newfig5:example}), where $t',r',r$ are new vertices.
\end{itemize}  

\begin{figure}[ht]
	\centering
		\begin{tikzpicture}[scale=0.9, transform shape, very thick]
		\pgfmathsetmacro{\b}{1.6}
		\pgfmathsetmacro{\d}{1.5}
		
		\node[vvertex, label=below:$s$] (s) at (1*\b, \d*0.5) {};
		\node[uvertex] (u0) at (2*\b, 0) {};
		\node[vvertex] (v0) at (3*\b, 0) {};
		\node[uvertex] (u1) at (5*\b, 0) {};
		\node[vvertex] (v1) at (6*\b, 0) {};
		\node[uvertex] (u2) at (8*\b, 0) {};
		\node[vvertex] (v2) at (9*\b, 0) {};

		\node[uvertex] (u0') at (2*\b, \d) {};
		\node[vvertex] (v0') at (2.7*\b, \d) {};
		\node[uvertex] (u1') at (4.1*\b, \d) {};
		\node[vvertex] (v1') at (4.8*\b, \d) {};
		\node[uvertex] (u2') at (6.2*\b, \d) {};
		\node[vvertex] (v2') at (6.9*\b, \d) {};
		\node[uvertex] (u3') at (8.3*\b, \d) {};
		\node[vvertex] (v3') at (9*\b, \d) {};
		
		\node[uvertex, label={below, xshift=-1mm}:$t$] (t) at (10*\b, \d*0.5) {};
				
		\draw (s) -- (u0);
		\draw [MyPurple] (u0) -- (v0);
		\draw (v0) -- node[edgebox] {} (u1);
		\draw [MyPurple] (u1) -- (v1);
		\draw (v1) -- node[edgebox] {} (u2);
		\draw [MyPurple] (u2) -- (v2);
		\draw (v2) --  (t);
		
		
		\draw (s) -- (u0');
		\draw [MyPurple] (u0') -- (v0');
		\draw (v0') -- node[edgebox] {} (u1');
		\draw [MyPurple] (u1') -- (v1');
		\draw (v1') -- node[edgebox] {} (u2');
		\draw [MyPurple] (u2') -- (v2');
		\draw (v2') -- node[edgebox] {} (u3');
		\draw [MyPurple] (u3') -- (v3');
		\draw (v3') -- (t);
		
		\node[vvertex, label=below:$t'$] (t') at (11*\b, 0) {};
		\node[vvertex, label=above:$r$] (r) at (11*\b, \d) {};
		\node[uvertex, label=below:$r'$] (r') at (12*\b, \d*0.5) {};

		\draw (t) -- node[edgelabelr,  near start] {$\infty-1$} node[edgelabel,  near end] {2} (r);
		\draw (r) -- node[edgelabel,  near start] {1} node[edgelabel,  near end] {1} (r');
		\draw (r') -- node[edgelabel,  near start] {2} node[edgelabel,  near end] {1} (t');
		\draw (t') -- node[edgelabel,  near start] {2} node[edgelabelr,  near end] {$\infty$} (t);

		\node[uvertex, label=below:$w$] (w) at (10*\b, 0) {};
		
		\draw (w) to[out=-150,in=-30] node[edgelabel, very near start] {2} node[edgelabel, very near end] {$\infty$} ($(v0)!0.5!(u1) +(-0.25,-0.24)$);
		\draw (w) to[out=-165,in=-15] node[edgelabel, very near start] {1} node[edgelabel, very near end] {$\infty$} ($(v1)!0.5!(u2) +(-0.26,-0.24)$);
		
		\end{tikzpicture}
\caption{We add a square $\langle t,t',r',r\rangle$ at the end of the graph $G$, and a vertex $w$ as shown in the figure above.
Recall that each $s$-$t$ path in $G$ corresponds to a clause in $\phi$.}
\label{newfig5:example}
\end{figure}
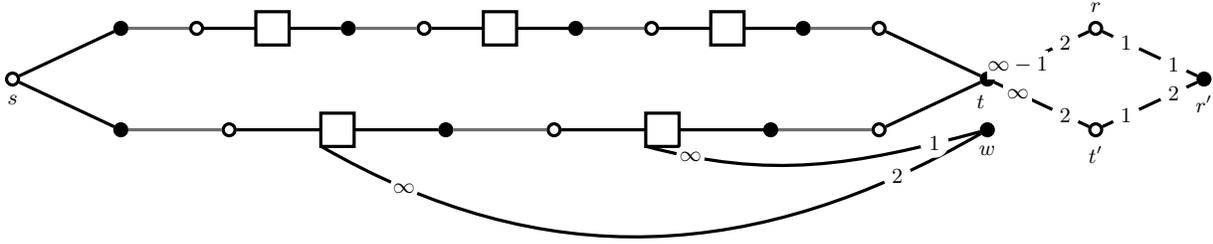

The preference lists of the vertices in $\{r,r',t',t\}$ are as follows:
\[r: r' \succ t \ \ \ \ \ \ \ \ \ \ \ \ \ \ r': r \succ t'\ \ \ \ \ \ \ \ \ \ \ \ \ \ t': r' \succ t\ \ \ \ \ \ \ \ \ \ \ \ \ \ t: \cdots \succ r \succ t'.\]
Recall that the vertex $t$ is adjacent to the last $v^{\ell}$-vertex in every clause gadget $\ell$. The vertex $t$ prefers all its $v$-neighbors (the order among these is not important) to its neighbors in the square which are $r$ and $t'$ and $t$ prefers $r$ to $t'$.

Regarding the vertex $w$, this vertex is the last choice for all its neighbors and $w$'s preference list is some permutation of
its neighbors (the order among these neighbors is not important).
This finishes the description of the graph $H$. We denote the collection of all preference lists again by~${\cal P}$. The proof of Lemma~\ref{lemma:min-size} is analogous to the proof of Lemma~\ref{lem:unstable-edges}.
\begin{lemma}
\label{lemma:min-size}
Let $S$ be any stable matching in $H,{\cal P}$. Then $S$ contains the edges $(u^{\ell}_i,v^{\ell}_i)$ for all clauses $\ell$ and all $i$ and the pair of edges
$(r,r'),(t,t')$. Moreover, $S$ does not contain any consistency edge.
\end{lemma}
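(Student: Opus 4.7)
The plan is to follow the witness-based strategy of Lemma~\ref{lem:unstable-edges-bis}: exhibit a handful of stable matchings of $H$ in which specific edges are labelled $(-,-)$, hence unstable by the classical fact of~\cite{GI89}, and then combine the list of unstable edges with the Rural Hospitals theorem (Section~\ref{prelims}) to pin down the structure of any stable matching $S$. Three stable matchings of $H$ should suffice, each obtained by choosing a state (true or false) for every literal gadget, adding the basic edges $(u^{\ell}_i,v^{\ell}_i)$, and extending with the square edges $(r,r')$ and $(t,t')$ while leaving $w$ unmatched. Let $N_1$ correspond to the assignment $A\equiv\true$ (dashed blue pair in every literal gadget), $N_2$ to $A\equiv\false$ (dotted red pair in every literal gadget), and $N_3$ to the inconsistent choice in which every $x$-gadget is in false state and every $\neg x$-gadget is in false state. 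Stability of $N_1$ and $N_2$ on the $G$-portion is immediate from Lemma~\ref{lem:unstable-edges-bis}; for $N_3$ I would check directly that every non-matching edge in the $G$-portion is $(+,-)$, $(-,+)$, or $(-,-)$. The new edges involving $w$ and the square are routine to handle: $w$ is the last choice of each $d'_k$ and every $d'_k$ is matched, the internal square edges $(r,t)$ and $(r',t')$ are non-blocking since $r$ and $r'$ each prefer their partners, and every $(v^{\ell}_{r_{\ell}},t)$ is non-blocking since $v^{\ell}_{r_{\ell}}$ is matched to its top choice $u^{\ell}_{r_{\ell}}$.

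From these matchings I would extract three families of unstable edges. Under $N_1$, every edge $(v^{\ell}_{j-1},a_j)$ in a positive-clause gadget is $(-,-)$, since $v^{\ell}_{j-1}$ is matched to its top choice $u^{\ell}_{j-1}$ and $a_j$ to its top choice $b_j$. Under $N_2$, every edge $(d_j,u^{\ell}_j)$ in a negative-clause gadget is $(-,-)$, since $d_j$ is matched to its top choice $c'_j$ and $u^{\ell}_j$ to its top choice $v^{\ell}_j$. Under $N_3$, every consistency edge $(b_i,c_k)$ is $(-,-)$, because $b_i$ is matched to its top choice $a'_i$ and $c_k$ to its top choice $d_k$. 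Since $s$ and $w$ are unmatched in each $N_i$ while all other vertices are matched, the Rural Hospitals theorem will give that in every stable matching of $H$, $s$ and $w$ are unmatched and $r,r',t,t'$ together with every $u^{\ell}_i,v^{\ell}_i$ are matched. The no-consistency-edge conclusion follows at once.

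The structural deductions then proceed as in Lemma~\ref{lem:unstable-edges-aux}. For $u^{\ell}_0$, the only alternative to $v^{\ell}_0$ is $s$, which is unstable, so $u^{\ell}_0$ must match $v^{\ell}_0$. For a positive clause and each $1\le j\le r_{\ell}$, $v^{\ell}_{j-1}$ has only two neighbors and the edge to $a_j$ is unstable, so $v^{\ell}_{j-1}$ must match $u^{\ell}_{j-1}$; symmetrically, in each negative clause every $u^{\ell}_j$ with $j\ge 1$ must match $v^{\ell}_j$. The last basic edge $(u^{\ell}_{r_{\ell}},v^{\ell}_{r_{\ell}})$ and the two square edges are forced jointly: if $t$ were matched to some $v^{\ell}_{r_{\ell}}$, then $r$ would be forced to match $r'$ (its only remaining neighbor), which would leave $t'$ unmatched and contradict the Rural Hospitals theorem; and if $t$ were matched to $r$, then $r'$ would be forced to match $t'$ and the edge $(r,r')$ would be a $(+,+)$ blocking pair. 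Therefore $t$ matches $t'$, $r$ matches $r'$, and each $v^{\ell}_{r_{\ell}}$ matches $u^{\ell}_{r_{\ell}}$. The main obstacle I expect is precisely this last joint case analysis at the end of the chain, where the Rural Hospitals theorem applied to $t'$ is the crucial lever connecting local stability to the global structure; the rest is routine bookkeeping with vote labels.
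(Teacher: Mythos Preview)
Your proposal is correct and matches the paper's intended approach; the paper omits the proof of Lemma~\ref{lemma:min-size} entirely, saying only that it is ``analogous to the proof of Lemma~\ref{lem:unstable-edges}'', and you have filled in exactly that analogy, together with the additional case analysis on the square $\langle t,t',r',r\rangle$ that is new to $H$. One small simplification: you do not need to verify stability of $N_3$ by hand---since in $N_3$ every $\neg x$-gadget is in false state, condition~\ref{enumbis:unstable-edges-4} of Lemma~\ref{lem:unstable-edges-bis} holds vacuously, so the converse direction of that lemma already gives stability of $N_3$ on the $G$-portion.
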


\begin{corollary}
\label{cor3}
For any stable matching $S$ in $H, \mathcal{P}$ and any variable $x \in \{X_1,\ldots,X_{2n}\}$, we have:
\begin{enumerate}
  \item From the gadget of $\neg x$, either the pair (i)~$(c_k,d_k),(c'_k,d'_k)$ or (ii)~$(c_k,d'_k),(c'_k,d_k)$ is in $S$.
  \item From a gadget of $x$, say its gadget in the $i$-th clause, either the pair of edges (i)~$(a_i,b_i),(a'_i,b'_i)$ or 
    (ii)~$(a_i,b'_i),(a'_i,b_i)$ is in $S$.
\end{enumerate}
\end{corollary}

  It follows from Lemma~\ref{lemma:min-size} and Corollary~\ref{cor3} that a stable matching in $H$
  matches all vertices except $s$ and $w$. A max-size popular matching in $H$ is a perfect matching:
  it includes the edges $(s,u^{\ell}_0), (v^{\ell}_0,c_k),(d'_k,w)$, $(d_k,c'_k)$ for some negative clause $\ell$. 
  We now prove the following theorem.
\begin{theorem}
  \label{thm:min-size}
  $H,{\cal P}$ admits an unstable min-size popular matching if and only if $G$ admits a stable matching that is dominant.
\end{theorem}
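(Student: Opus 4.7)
The plan is to prove the biconditional by exhibiting direct correspondences between stable--dominant matchings in $G$ and unstable min-size popular matchings in $H$.

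For the forward direction, given a stable--dominant matching $M^*$ in $G$, I would define $N := M^* \cup \{(t,r), (t',r')\}$ in $H$ and verify that $N$ is min-size, unstable, and popular. The size $|N| = |M^*| + 2$ matches that of any stable matching in $H$ by Lemma~\ref{lemma:min-size}, and since stable matchings are min-size popular, $N$ has the minimum popular size. The edge $(r,r')$ blocks $N$ because $r$'s list $r' \succ t$ and $r'$'s list $r \succ t'$ make them mutually prefer each other to their $N$-partners. For popularity, I would apply Theorem~\ref{thm:char-popular} after computing edge labels: $(t,t')$ is $(-,-)$, $(w,d'_k)$ is $(+,-)$ (since $w$ is $d'_k$'s last choice), $(t,v^\ell_{\text{end}})$ is $(+,-)$ (since $v^\ell_{\text{end}}$'s top choice is $u^\ell_{\text{end}}$), and $M^*$ stable yields no $(+,+)$ $G$-edges. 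Hence $(r,r')$ is the only $(+,+)$ edge in $H_N$, and I would rule out the three forbidden configurations: no alternating cycle closes through $(r,r')$ since $t'$'s only non-matching exit is the removed edge $(t,t')$; an alternating path from $s$ through $(r,r')$ would restrict to an $M^*$-augmenting $s$-$t$ path in $G_{M^*}$, forbidden by dominance of $M^*$ (Theorem~\ref{thm:dominant}); and an alternating path from $w$ through $(r,r')$ is ruled out by case analysis on the state of the $\neg x$ gadget at the entry vertex $d'_k$.

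For the backward direction, given an unstable min-size popular matching $N$ in $H$, Lemma~\ref{lem:all-same} forces $N$ to leave only $s$ and $w$ unmatched, so $t,t',r,r'$ are matched within the 4-cycle; the two options are Case~A, $\{(t,t'),(r,r')\}\subseteq N$, and Case~B, $\{(t,r),(t',r')\}\subseteq N$. Since $(w,d'_k)$ and $(t,v^\ell_{\text{end}})$ cannot block $N$, any blocking edge lies either in $E(G)$ or is $(r,r')$ (blocking only in Case~B). In Case~A the square provides no blocking, so any blocking $(a,b)$ lies in $E(G)$; I would rule this case out by showing that such a $(+,+)$ edge is reachable in $H_N$ by an alternating path from $w$ through a suitable $d'_k$, producing a forbidden configuration of type~(iii) and contradicting popularity. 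Case~B is then forced: setting $M := N \cap E(G)$, I would show that $M$ is stable in $G$ (else a $G$-blocking edge together with $(r,r')$ would furnish two $(+,+)$ edges connected by an alternating path through the clause gadgets and through $(t,r)$, violating Theorem~\ref{thm:char-popular}(ii)), and that $M$ is dominant in $G$ (else an $M$-augmenting $s$-$t$ path $\rho$ in $G_M$ would extend via $(t,r),(r,r'),(r',t')$ to an alternating path in $H_N$ from the unmatched $s$ through $(r,r')$, violating Theorem~\ref{thm:char-popular}(iii)). This yields a stable and dominant matching in $G$.

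The main obstacle is the careful handling of alternating paths involving the new vertex $w$, required in both directions. In the forward direction one must show that dominance of $M^*$ prevents an alternating path $w \to d'_k \to \dots \to t \to r \to r'$ in $H_N$; this requires tracing such a candidate path through the two remaining vertices of the $\neg x$ gadget and showing that it either dead-ends inside the subsequent $x$-gadget (whose local matching forces a return to an already-visited vertex) or exits back through the preceding clauses rather than forward to $t$, the latter again controlled by the absence of $s$-$t$ augmenting paths in $G_{M^*}$. Symmetrically, in the backward direction, ruling out Case~A requires the converse: that any blocking $G$-edge produces such an alternating path from $w$, which is shown by the same gadget analysis run in reverse.
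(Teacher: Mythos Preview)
Your forward direction mirrors the paper's proof and is correct.

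The backward direction has a genuine gap. You split on how the square $\{t,t',r,r'\}$ is matched, but you never pin down what $N$ does on the clause gadgets of $G$. The paper's decisive step---absent from your outline---is to first show that the edges $(v^\ell_{i-1},a_i)$, $(d_k,u^\ell_k)$, and the consistency edges $(b_i,c_k)$ are \emph{unpopular} in $H$: each is labeled $(-,-)$ with respect to some stable matching of $H$, hence slack against the witness $\vec{0}$, hence absent from every popular matching (see Section~\ref{prelims}). Because $N$ matches every vertex except $s$ and $w$, this forces $(u^\ell_i,v^\ell_i)\in N$ for all $\ell,i$ and forces each literal gadget to be matched internally by one of its two stable pairs. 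Adding the observation that a blocking consistency edge would be reachable from the unmatched vertex $w$ (contradicting popularity via Theorem~\ref{thm:char-popular}(iii)), one gets that $N\cap E(G)$ satisfies conditions~\ref{enumbis:unstable-edges-1}--\ref{enumbis:unstable-edges-4} of Lemma~\ref{lem:unstable-edges-bis}, hence is stable in $G$. Only then does it follow that the unique blocking edge of $N$ lies in the square, so your Case~A is vacuous and in Case~B stability of $M=N\cap E(G)$ is already in hand.

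Without this structural control, both of your key claims are unjustified. In Case~A you assert that any $G$-blocking edge is reachable from $w$ by an alternating path in $H_N$; in Case~B you assert that a $G$-blocking edge and $(r,r')$ lie on a common alternating path through the clause gadgets. But a priori a popular matching $N$ could use an edge like $(v^\ell_{i-1},a_i)$ or match gadget vertices across a consistency edge, and then the gadget-by-gadget alternating-path tracing you envision simply does not apply---there is no reason the alleged blocking edge is connected in $H_N$ to $w$, to $s$, or to $(r,r')$. Your ``main obstacle'' paragraph senses the delicacy of $w$-paths in the forward direction; the same structural information is precisely what is missing to make your backward direction go through.
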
  

\begin{proof}
  Suppose $G$ admits a stable matching $N$ that is also dominant. We claim that $M = N \cup \{(r,t),(r',t')\}$
  is a popular matching in $H$. Note that there is a blocking edge $(r,r')$ with respect to~$M$. Since $M$ matches exactly the stable vertices,
  this would make $M$ an unstable min-size popular matching.

  Recall Theorem~\ref{thm:char-popular} (from Section~\ref{prelims}) that characterizes popular matchings. 
  We will now show that the matching $M$ satisfies the three sufficient conditions for popularity as given in Theorem~\ref{thm:char-popular}. 
  Note that $(r,r')$ is the only blocking edge with respect to $M$. 
  Thus property~(ii) from Theorem~\ref{thm:char-popular} obviously
  holds. Property~(i) holds since the edge $(t,t')$ is labeled $(-,-)$ with respect to $M$. 
  Thus there is no alternating cycle in $H_M$ with the edge
  $(r,r')$. We will now show property~(iii) also holds in $H_M$, thus $M$ is a popular matching in $H$.
  
  There are two unmatched vertices in $M$: $s$ and $w$. We need to check that the edge $(r,r')$ is not reachable via an
  $M$-alternating path from either $s$ or $w$ in  $H_M$.  
  Since the matching $N$ is dominant in $G$, there is no
  $M$-alternating path between $s$ and $t$ in $H_M$. Thus the blocking edge $(r,r')$ is not reachable from $s$ by an $M$-alternating path in $H_M$.

  Consider the vertex $w$ and any of its neighbors, say $d'_k$ (see Fig.~\ref{newfig4:example}). We know that $N$ includes either the dotted red pair of edges $(c_k,d'_k),(c'_k,d_k)$ or the dashed blue pair of edges $(c_k,d_k),(c'_k,d'_k)$. In both cases, the blocking edge $(r,r')$
  is not reachable in $H_M$ by an $M$-alternating path with $(w,d'_k)$ as a starting edge. This proves one side of the reduction.

  \medskip

  We will now show the converse. Let $M$ be a min-size popular matching in $H$ that is unstable. Since $M$ is a min-size popular matching,
  the set of vertices matched in $M$ is the set of stable vertices. 
  Consider the edges $(v^{\ell}_{i-1},a_i), (d_k,u^{\ell}_k)$, and
  $(b_i,c_k)$ for any $\ell$, $i$, and $k$. For each of those edges, there is a stable matching in $G$ (and thus in $H$) 
  where all these edges are labeled $(-,-)$ (see the proof of Lemma~\ref{lem:unstable-edges}). Thus these edges are {\em slack}\footnote{An edge $(x,y)$ is slack with respect to a popular matching $N$ and its witness $\vec{\alpha}$ if $\alpha_x + \alpha_y > \wt_N(x,y)$.} with respect to 
  a popular matching and its witness, i.e. $\vec{0}$ here---so they are {\em unpopular} edges (see Section~\ref{prelims}). Hence $M$ has to contain 
  the edges $(u^{\ell}_i,v^{\ell}_i)$ for all clauses $\ell$ and all $i$.

  The 4 vertices in a literal gadget are stable, hence matched among themselves in $M$ (since they must all be matched, and we excluded other edges incident to them), i.e., either the dotted red pair or the dashed blue pair from each literal
  gadget belongs to $M$. Also, the consistency
  edge cannot be a {\em blocking edge} to $M$ as this would make a blocking edge reachable via an $M$-alternating path in $H_M$ from
  the unmatched vertex $w$, a contradiction to $M$'s popularity in $H$ (by Theorem~\ref{thm:char-popular}).

  Since $M$ is unstable, there is a blocking edge with respect to $M$. The only possibility is from within the square $\langle r,r',t',t\rangle$.
  Thus $M$ has to contain the pair of edges $(r,t)$ and $(r',t')$: this makes $(r,r')$ a blocking edge with respect to $M$. Consider the matching
  $N = M \setminus \{(r,t),(r',t')\}$. We have already argued that $N$ is a stable matching in $G$.

  Suppose $N$ is not dominant in $G$. Then there is an $N$-alternating path between $s$ and $t$ in $G_{N}$. Thus in the graph
  $H_M$, there is an $M$-alternating path from the unmatched vertex $s$
  to the blocking edge $(r,r')$. This contradicts the popularity of $M$ in $H$ by Theorem~\ref{thm:char-popular}. 
  Hence $N$ is a dominant matching in $G$. \qed
\end{proof}

We proved in Section~\ref{sec:stable-dominant} that the problem of deciding if $G$ admits a stable matching that is dominant is NP-hard.
Thus we can conclude the following theorem.
\begin{theorem}
  \label{thm:min-size-unstable}
Given $G = (A \cup B,E)$ \new{and $\mathcal{P}$, }
it is NP-complete to decide if $G$ admits a min-size popular matching that is not stable.
\end{theorem}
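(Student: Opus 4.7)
The plan is to observe that Theorem~\ref{thm:min-size-unstable} is essentially a direct consequence of the two results already established in this section. By Theorem~\ref{thm:stable-and-dominant}, deciding whether $G$ admits a matching that is simultaneously stable and dominant is NP-complete. Theorem~\ref{thm:min-size} provides a reduction in the reverse direction of what we need: it shows that $H$ admits an unstable min-size popular matching if and only if $G$ admits a stable matching that is dominant. Since the construction of $H$ from $G$ (adding the single vertex $w$ with an edge to each $d'$-vertex, together with the four-vertex square $\langle t,t',r',r\rangle$ and the specified preferences) is carried out in polynomial time, this is a valid polynomial reduction, and NP-hardness of our problem follows immediately.

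For membership in NP, a natural certificate is a popular matching $M$ together with a witness $\vec{\alpha}\in\{0,\pm 1\}^n$ as in Theorem~\ref{thm:witness}. Given such a pair, popularity of $M$ can be verified in polynomial time by checking that $\sum_u \alpha_u = 0$ and that every edge constraint $\alpha_a+\alpha_b\ge \wt_M(a,b)$ and self-loop constraint $\alpha_u\ge \wt_M(u,u)$ holds. Instability is verified by simply searching for an edge $(a,b)$ with $\wt_M(a,b)=2$, which takes $O(m)$ time. Finally, $M$ has minimum size among popular matchings if and only if $|M|$ equals the size of a stable matching in $G$, since every stable matching is a min-size popular matching \cite{HK13b}; we compute a stable matching in linear time using Gale--Shapley and compare its cardinality to $|M|$.

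Combining NP-hardness with NP-membership yields NP-completeness. I expect no real obstacle here, since the heavy lifting has been done in Theorems~\ref{thm:stable-and-dominant} and~\ref{thm:min-size}; the only point that requires a moment of attention is the NP-membership argument, where one must remember that min-size popular matchings coincide with matchings of the same cardinality as a stable matching, so that ``min-size popular'' can be certified without enumerating the popular matching polytope.
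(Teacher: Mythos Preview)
Your proof is correct and follows essentially the same route as the paper: NP-hardness is obtained by composing Theorem~\ref{thm:stable-and-dominant} with the polynomial-time construction $G\mapsto H$ and the equivalence of Theorem~\ref{thm:min-size}. The paper leaves NP-membership implicit, whereas you spell it out explicitly (popularity via a witness $\vec{\alpha}$, instability via a blocking edge, and min-size via comparison with the cardinality of a stable matching); this is a welcome addition. One small wording issue: the phrase ``a reduction in the reverse direction of what we need'' is misleading---the construction $G\mapsto H$ together with the biconditional of Theorem~\ref{thm:min-size} is exactly the reduction direction required to transfer NP-hardness from Problem~\ref{pr:stable-cap-dominant} to the present problem, so you may want to drop that remark.
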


Our \new{graph} $H$ is such that every popular matching here has size either $n/2$ or $n/2 - 1$, where $n$ is the number of vertices in~$H$. All popular matchings of size $n/2$ are dominant since they are perfect matchings. Thus a popular matching $M$ in $H$ is neither stable nor dominant if and only if $M$ is an unstable popular matching of size $n/2 -1$, i.e., $M$ is an unstable min-size popular matching in~$H$. So we have shown a new and simple proof of NP-hardness of the problem of deciding if a marriage instance admits a popular matching that is neither stable nor dominant.

\subsection{A simple proof of NP-hardness of the popular roommates problem}
\label{sec:pop-roommates}

We know that the popular roommates problem is NP-hard~\cite{FKPZ18,GMSZ18}. Here we adapt the hardness reduction given in Section~\ref{sec:stable-dominant} to show a short and simple proof of hardness of this problem. We first mention some useful structural results from Section~\ref{prelims} that extend to the roommates case.

The first is the {\em Rural Hospitals Theorem}, see \cite[Theorem 4.5.2]{GI89}. That is, if $H = (V,E)$ admits stable matchings, then all stable matchings in $H$ have to match the same subset of vertices. Another useful property is that if an edge $e$ is labeled $(-,-)$ with respect to a stable matching in $H$ then no popular
matching in $H$ can include $e$~\cite{Kav18}. This is based on the {\em tightness} of popular edges in roommates instances and the fact that an edge labeled $(-,-)$ with respect to a stable matching is slack.
Last, we recall that the characterization of popular matchings given in Theorem~\ref{thm:char-popular} also holds when $G$ is non-bipartite.

\begin{pr}
    \inp A (non-bipartite) graph $H = (V,E)$ with strict preference lists.\\
	\ques If there is a popular matching in $H$.
\end{pr}

Given a 3SAT formula $\phi$, we transform it as described in Section~\ref{sec:stable} and build the graph $G$ as described in
Section~\ref{sec:stable-dominant}. We now augment the bipartite graph $G$ into a non-bipartite graph $H$, depicted in Fig.~\ref{fig:roommates} as follows:
\begin{itemize}
\item Add edges between $s$ and the $d'$-vertex in $\neg x$'s gadget for every variable $x$.
\item At the other end of the graph $G$, add an edge $(t,r)$ along with a triangle $\langle r,r',r''\rangle$, where $r,r',r''$ are new
  vertices.
\end{itemize}

\begin{figure}[ht]
	\centering
		\begin{tikzpicture}[scale=0.87, transform shape, very thick]
		\pgfmathsetmacro{\b}{1.6}
		\pgfmathsetmacro{\d}{1.5}
		
		\node[uvertex, label=left:$s$] (s) at (1*\b, \d*0.5) {};
		\node[uvertex] (u0) at (2*\b, 0) {};
		\node[uvertex] (v0) at (3*\b, 0) {};
		\node[uvertex] (u1) at (5*\b, 0) {};
		\node[uvertex] (v1) at (6*\b, 0) {};
		\node[uvertex] (u2) at (8*\b, 0) {};
		\node[uvertex] (v2) at (9*\b, 0) {};
		
		
		\node[uvertex] (u0') at (2*\b, \d) {};
		\node[uvertex] (v0') at (2.7*\b, \d) {};
		\node[uvertex] (u1') at (4.1*\b, \d) {};
		\node[uvertex] (v1') at (4.8*\b, \d) {};
		\node[uvertex] (u2') at (6.2*\b, \d) {};
		\node[uvertex] (v2') at (6.9*\b, \d) {};
		\node[uvertex] (u3') at (8.3*\b, \d) {};
		\node[uvertex] (v3') at (9*\b, \d) {};
		
		\node[uvertex, label=below:$t$] (t) at (10*\b, \d*0.5) {};
				
		\draw (s) -- (u0);
		\draw [MyPurple] (u0) -- (v0);
		\draw (v0) -- node[edgebox] {} (u1);
		\draw [MyPurple] (u1) -- (v1);
		\draw (v1) -- node[edgebox] {} (u2);
		\draw [MyPurple] (u2) -- (v2);
		\draw (v2) --  (t);
		
		
		\draw (s) -- (u0');
		\draw [MyPurple] (u0') -- (v0');
		\draw (v0') -- node[edgebox] {} (u1');
		\draw [MyPurple] (u1') -- (v1');
		\draw (v1') -- node[edgebox] {} (u2');
		\draw [MyPurple] (u2') -- (v2');
		\draw (v2') -- node[edgebox] {} (u3');
		\draw [MyPurple] (u3') -- (v3');
		\draw (v3') -- (t);
		
		\node[uvertex, label=right:$r''$] (r'') at (12*\b, 0) {};
		\node[uvertex, label=below:$r$] (r) at (11*\b, \d*0.5) {};
		\node[uvertex, label=right:$r'$] (r') at (12*\b, \d) {};
		
		\draw (t) -- node[edgelabelr,  near start] {$\infty$} node[edgelabel,  near end] {3} (r);
		\draw (r) -- node[edgelabel,  near start] {1} node[edgelabel,  near end] {2} (r');
		\draw (r') -- node[edgelabel,  near start] {1} node[edgelabel,  near end] {2} (r'');
		\draw (r'') -- node[edgelabel,  near start] {1} node[edgelabelr,  near end] {2} (r);
		
		\draw (s) to[out=-80,in=-150] node[edgelabel, very near end] {$\infty$} ($(v0)!0.5!(u1) +(-0.25,-0.24)$);
		\draw (s) to[out=-90,in=-165] node[edgelabel, very near end] {$\infty$} ($(v1)!0.5!(u2) +(-0.26,-0.24)$);
	\end{tikzpicture}
\caption{We add a new triangle $\langle r,r',r''\rangle$ to $t$, and connect $s$ with all $d'$ vertices as shown in the figure above.}
\label{fig:roommates}
\end{figure}
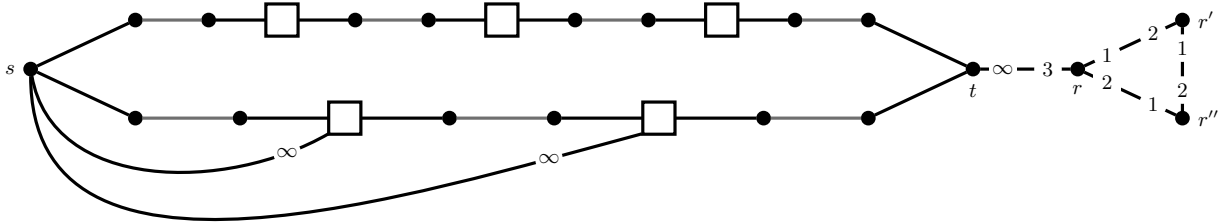
The preference lists of the vertices in $\{r,r',r'',t\}$ are as follows:
\[r: r' \succ r'' \succ t \ \ \ \ \ \ \ \ \ \ \ \ \ \ r': r'' \succ r\ \ \ \ \ \ \ \ \ \ \ \ \ \ r'': r \succ r'\ \ \ \ \ \ \ \ \ \ \ \ \ \ t: \cdots \succ r.\]
The vertex $t$ is adjacent to the last $v^{\ell}$-vertex in every clause gadget $\ell$. The vertex $t$ prefers all its $v$-neighbors 
(the order among these is not important) to $r$. We denote the collection of all preference lists again by ${\cal P}$.

The vertex $s$ is the last choice neighbor for all its neighbors. The preference list of $s$ is not relevant.
Recall the vertex $w$ used in Section~\ref{sec:min-size}: now we have merged the vertices $s$ and $w$. This creates odd cycles, which is allowed here since the graph $H$ is non-bipartite.

\subsubsection{Popular matchings in $H$}
Let $M$ be a popular matching in $H$. Observe that $M$ has to match $r,r'$, and $r''$ since each of these vertices is a top choice neighbor for some vertex. If one of these vertices is left unmatched in $M$ then there would be a blocking edge incident to an unmatched vertex, a 
contradiction to its popularity~(see Theorem~\ref{thm:char-popular}, condition~(iii)). 
Since $t$ is the only {\em outside} neighbor of $r,r',r''$, the matching $M$ has to contain the pair of edges $(t,r),(r',r'')$. Note that the edge $(r,r'')$ blocks $M$.

Let $H^* = H \setminus \{t,r,r',r''\}$. Consider the matching $N = M\setminus\{(t,r),(r',r'')\}$ in $H^*$. Since $M$ is popular in $H$, the
matching $N$ has to be popular in $H^*$.  We claim $N$ has to match all vertices in $H^*$ except $s$. This is because $H^*$ admits a stable
matching: consider $S = \cup_{\ell,i}\{(u^{\ell}_i,v^{\ell}_i)\} \cup\{$dashed blue edges in every literal gadget$\}$---this is a stable
matching in $H^*$. We know that $N$ has to match all stable vertices~\cite{HK13b}. Since the number of vertices in $H^*$ is odd, the vertex $s$ is left unmatched in $N$.

Consider any consistency edge. We claim this is an {\em unpopular} edge in $H^*$. This is because there is a stable matching in $H^*$ where this edge is labeled $(-,-)$
(see the proof of Lemma~\ref{lem:unstable-edges}), hence this edge cannot be used in any popular matching in $H^*$~\cite{Kav18}. Since all vertices in $H^*$ except $s$ have to be matched in $N$, the following 3 observations hold:
 \begin{enumerate}
  \item $N$ contains the edges $(u^{\ell}_i,v^{\ell}_i)$ for all clauses $\ell$ and all $i$.
  \item From the gadget of $\neg x$, either the pair (i)~$(c_k,d_k),(c'_k,d'_k)$ or (ii)~$(c_k,d'_k),(c'_k,d_k)$ is in $N$.
  \item From a gadget of $x$, say, its gadget in the $i$-th clause, either the pair of edges (i)~$(a_i,b_i),(a'_i,b'_i)$ or 
    (ii)~$(a_i,b'_i),(a'_i,b_i)$ is in $N$.
  \end{enumerate}

We are now ready to prove Theorem~\ref{thm:roommates}, whose proof is similar to the proof of Theorem~\ref{thm:min-size}.
\begin{theorem}
  \label{thm:roommates}
  $H,{\cal P}$ admits a popular matching if and only if $G$ admits a stable matching that is dominant.
\end{theorem}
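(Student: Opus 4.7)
The plan is to mimic the proof of Theorem~\ref{thm:min-size}, with the triangle $\langle r,r',r''\rangle$ playing the role that the square $\langle r,r',t',t\rangle$ played there and $s$ playing the role of both $s$ and $w$ (which are merged in $H$). For the easy direction, given a stable dominant matching $S$ of $G$ I would set $M := S \cup \{(t,r),(r',r'')\}$ and verify popularity of $M$ in $H$ via Theorem~\ref{thm:char-popular}. Inspecting labels, inside the triangle $(r,r')$ is $(+,-)$ and $(r',r'')$ is matching, so $(r,r'')$ is the only blocking edge; edges at the unique unmatched vertex $s$ carry label $(+,-)$ because $s$ is the last choice of each of its neighbours; every $(v^\ell_{r^\ell},t)$ edge is $(-,+)$; and edges inside $G$ are not blocking by stability of $S$. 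Condition (ii) of Theorem~\ref{thm:char-popular} is then immediate (a single $(+,+)$ edge) and condition (i) holds because the triangle has odd length, so no alternating cycle can include $(r,r'')$.

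Condition (iii) is the content: no $M$-alternating path in $H_M$ joins $s$ to $(r,r'')$. Since $r$ is linked outside the triangle only through the matching edge $(t,r)$, such a path must first reach $t$ from $s$. Its initial edge at $s$ is either $(s,u^\ell_0)$ or one of the new edges $(s,d'_k)$. In the first case the prefix lives entirely inside $G_S$, and reaching $t$ would yield an $S$-augmenting $s$-$t$ path in $G_S$---excluded since $S$ is dominant (Theorem~\ref{thm:dominant}). In the second case I would trace the path through the $\neg x$ gadget and, depending on whether it is blue or red in $S$, either show it cycles back to $d'_k$ and thus cannot continue as a simple path, or show it escapes through $(c_k,v^\ell_{k-1})$ and then walks through successive negative-clause gadgets; the latter must stop because red negative gadgets force the edge $(u^{\ell'}_j,d_j)$ to be $(-,-)$, and cutting through to $t$ would again require an $S$-augmenting path.

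For the converse, let $M$ be a popular matching in $H$. The observations established preceding the theorem already force $M$ to contain $(t,r),(r',r'')$ and all basic edges $(u^\ell_i,v^\ell_i)$, to pick one of the two canonical pairs in every literal gadget, and to omit every consistency edge; set $N := M \setminus \{(t,r),(r',r'')\}$, a matching of $G$ missing only $s$. I would then show $N$ is stable and dominant in $G$. Stability: if a consistency edge $(b_i,c_k)$ blocked $N$, then $x$'s $i$-th gadget would be in blue state and $\neg x$'s gadget in red state in $N$, making $s \to d'_k \to c_k \to b_i$ an $M$-alternating path in $H_M$ whose last edge is $(+,+)$, contradicting Theorem~\ref{thm:char-popular}(iii) at the unmatched vertex $s$. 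Dominance: any $N$-augmenting $s$-$t$ path in $G_N$ extends through $(t,r)$ and $(r,r'')$ to an $M$-alternating path from $s$ containing the $(+,+)$ edge $(r,r'')$, again contradicting popularity of $M$.

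The chief obstacle is the sub-case of condition (iii) where the path leaves $s$ via $(s,d'_k)$ and ventures into the clause chain: tracking labels carefully to confirm that the traversal always stops short of $t$ is the delicate bookkeeping step. I expect it to close by a combination of the forced $(-,-)$ edges inside red negative gadgets and the dominance of $S$ ruling out any long-range augmentation, making explicit the analogous but unelaborated claim for the auxiliary vertex $w$ in the proof of Theorem~\ref{thm:min-size}.
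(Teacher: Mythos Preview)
Your proposal is correct and follows essentially the same approach as the paper's proof: both directions hinge on Theorem~\ref{thm:char-popular}, with $(r,r'')$ being the unique blocking edge, and the crux in each direction is that an $M$-alternating path from the unmatched vertex $s$ to $(r,r'')$ would (in one direction) contradict dominance of $S$ in $G$, and (in the other) certify non-popularity of $M$. Your stability argument in the converse---exhibiting the explicit path $s\text{--}d'_k\text{--}c_k\text{--}b_i$---is in fact more detailed than the paper, which just says a blocking consistency edge would be ``reachable via an $M$-alternating path in $H_M$ from the unmatched vertex $s$''.

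One minor remark: your dichotomy for the $(s,d'_k)$ sub-case (blue $\Rightarrow$ cycles back; red $\Rightarrow$ escapes via $(c_k,v^{\ell}_{k-1})$) is not quite right---in both states the path can escape the gadget through $(c_k,v^{\ell}_{k-1})$ or a consistency edge, and the bookkeeping is slightly more involved than you sketch. However, the paper itself handles this sub-case with nothing more than ``it can be checked that the blocking edge $(r,r'')$ is not reachable'', so your level of detail already matches (indeed exceeds) the paper's own proof.
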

\begin{proof}
  Suppose $G$ admits a stable matching $S$ that is also dominant. We claim that $M = S \cup \{(t,r),(r',r'')\}$ is a popular matching in $H$.  We will again use Theorem~\ref{thm:char-popular} to prove the popularity of $M$ in $H$.
  There is exactly one blocking edge with respect to $M$: this is the edge $(r,r'')$.
  
  Observe that properties~(i) and (ii) from Theorem~\ref{thm:char-popular} are easily seen to hold. 
  We will now show that property~(iii) from Theorem~\ref{thm:char-popular} also holds.
  We need to check that the edge $(r,r'')$ is not reachable via an $M$-alternating path from $s$ in  $H_M$. 
  
  Since the matching $S$ is dominant in $G$, there is no $S$-alternating path between $s$ and $t$ in $G_{S}$. Thus the blocking edge $(r,r'')$ is not reachable from $s$ by an $M$-alternating path in $G_{M}$.
  We now need to show that 
  the blocking edge $(r,r'')$ is not reachable from $s$ by an $M$-alternating path in $H_M$, i.e., when the first edge of the alternating path is $(s,d'_k)$ for some $d'_k$. We know that $M$ includes either the dotted red pair of edges $(c_k,d'_k),(c'_k,d_k)$ or the dashed blue pair of edges $(c_k,d_k),(c'_k,d'_k)$ from this gadget. In both cases, it can be checked that the blocking edge $(r,r')$ is not reachable in $H_M$ by an $M$-alternating path with $(s,d'_k)$ as the starting edge. This proves one side of the reduction.

  \medskip

  We will now show the converse. Suppose $H$ admits a popular matching $M$. We argued above that the pair of edges $(t,r)$ and $(r',r'')$ is in $M$. Consider the matching $N = M \setminus \{(t,r),(r',r'')\}$. We claim that $N$ is a stable matching in $G$.

  From observations~1-3 given above, it follows that the only possibility of a blocking edge to $N$ is from a consistency edge. However a consistency edge cannot block $N$ as this would make a blocking edge reachable by an $M$-alternating path in $H_M$ from the unmatched vertex $s$ and this contradicts $M$'s popularity (by Theorem~\ref{thm:char-popular}).

  We now claim that $N$ is a dominant matching in  $G$. Suppose not. Then there is an $N$-alternating path between $s$ and $t$ in $G_{N}$. Thus in the graph $H_M$, there is an $M$-alternating path from the unmatched vertex $s$ to the blocking edge $(r,r'')$. This contradicts the popularity of $M$ in $H$ by Theorem~\ref{thm:char-popular}. Hence $N$ is a dominant matching in $G$. \qed
\end{proof}  

We know that the problem of deciding if $G$ admits a stable matching that is dominant is NP-hard. This completes our new
proof of the NP-hardness of the popular roommates problem. Observe that every popular matching in $H$ is a max-size matching (since only the vertex $s$ is left unmatched), hence every popular matching in $H$ is dominant. 
Thus our reduction above also shows a simple proof of NP-hardness of the dominant roommates problem.

\subsubsection{Conclusions and an open problem} We considered popular matching problems in a bipartite graph $G = (A \cup B, E)$ with strict preferences. We showed a simple $O(|E|^2)$ algorithm for deciding if there exists a popular matching in $G$ that is {\em not} stable. An open problem is to improve the running time of this algorithm.
We showed that the problems of deciding if a bipartite graph admits a stable matching that is (i)~dominant, (ii)~{\em not} dominant are NP-hard. These results imply many new hardness results for popular matchings in bipartite graphs, including the hardness of finding (1)~a popular matching that is {\em not} dominant, (2)~a min-size popular matching that is {\em not} stable, and (3)~a max-size popular matching that is {\em not} dominant.

\subsubsection{Acknowledgements.} \'{A}gnes Cseh was supported by  the Federal Ministry of Education and Research of Germany in the framework of KI-LAB-ITSE (project number 01IS19066), OTKA grant K128611, and COST Action CA16228 European Network for Game Theory. Telikepalli Kavitha was supported by project no. RTI4001 of the Department of Atomic Energy, Government of India.

\bibliographystyle{abbrv} 
\bibliography{mybib}
\end{document}